\pgfplotsset{compat=newest}
\pgfplotsset{plot coordinates/math parser=false}
\newlength\figureheight
\newlength\figurewidth
\def\BState{\State\hskip-\ALG@thistlm}
\pgfplotsset{compat=newest}
\pgfplotsset{plot coordinates/math parser=false}
\newcommand*{\QEDB}{\hfill\ensuremath{\square}}%
\newcommand{\range}{\operatorname{rge}}
\newcommand{\ess}{\operatorname{ess.}}
\newcommand{\thetatilde}{\tilde{\theta}}
\newcommand{\R}{\mathbb{R}}
\newcommand{\dom}{\operatorname{dom}}
\newcommand{\Id}{{I}}
\newcommand*{\tr}{^{\mkern-1.5mu\mathsf{T}}}
\newcommand*{\mtr}{^{\mkern-1.5mu\mathsf{-T}}}
\newcommand{\minimize}{\operatorname{minimize}}
\newcommand{\Co}{\operatorname{Co}}
\newcommand{\He}{\operatorname{He}}
\newcommand{\Diag}{\operatorname{diag}}
\newcommand{\Spn}{\mathcal{S}^n_{+}}
\newcommand{\Spnz}{\mathcal{S}^{n_z}_{+}}
\newcommand{\Spny}{\mathcal{S}^{n_y}_{+}}
\newcommand{\0}{{0}}
\newcommand{\eg}{{ e.g.}}
\newcommand{\ie}{{i.e.}}
\newcommand{\ep}{{\varepsilon}}
\newtheorem{theorem}{Theorem}
\newtheorem{corollary}{Corollary}
\newtheorem{definition}{Definition}
\newtheorem{example}{Example}
\newtheorem{lemma}{Lemma}
\newtheorem{fact}{Fact}
\newtheorem{assumption}{Assumption}
\newtheorem{problem}{Problem}
\newtheorem{proposition}{Proposition}
\newtheorem{remark}{Remark}
\title{\LARGE \bf $\mathcal{L}_2$ State Estimation with Guaranteed Convergence Speed in the Presence of Sporadic Measurements (Extended Version)}
\author{Francesco~Ferrante,
        Fr\'ed\'eric~Gouaisbaut, Ricardo~G.~Sanfelice 
        and~Sophie~Tarbouriech
        \thanks{Francesco Ferrante is with Univ. Grenoble Alpes, CNRS,  GIPSA-lab, F-38000 Grenoble, France. Email: francesco.ferrante@gipsa-lab.fr.} 
\thanks{Ricardo G. Sanfelice is with Department of Electrical and Computer Engineering, University of California,
Santa Cruz, CA 95064.
Email: ricardo@ucsc.edu}
\thanks{Fr\'ed\'eric Gouaisbaut, and Sophie Tarbouriech  are with LAAS-CNRS, Universit\'e de Toulouse, UPS, CNRS, Toulouse, France. Email:\{fgouaisb, tarbour\}@laas.fr}
\thanks{Research by R. G. Sanfelice has been partially supported by the National Science Foundation under CAREER Grant no. ECS-1450484, Grant no. ECS-1710621, and Grant no. CNS-1544396, by the Air Force Office of Scientific Research under Grant no. FA9550-16-1-0015, and by the Air Force Research Laboratory under Grant no. FA9453-16-1-0053.
}
}
\begin{document}
\maketitle
\begin{abstract}
This paper deals with the problem of estimating the state of a linear time-invariant system in the presence of sporadically available measurements and external perturbations.  An observer with a continuous intersample injection term is proposed.  Such an intersample injection is provided by a linear dynamical system, whose state is reset to the measured output estimation error at each sampling time.
The resulting system is augmented with a timer triggering the arrival of a new measurement and analyzed in a hybrid system framework. The design of the observer is performed to achieve global exponential stability with a given decay rate to a set wherein the estimation error is equal to zero. Robustness with respect to external perturbations and $\mathcal{L}_2$-external stability from the plant perturbation to a given performance output are considered. Moreover, computationally efficient algorithms based on the solution to linear matrix inequalities are proposed to design the observer. Finally, the effectiveness of the proposed methodology is shown in three examples.
\end{abstract}
\section{Introduction}
\subsection{Background}
In most real-world control engineering applications,  measurements of the output of a continuous-time plant are only available to the algorithms at isolated times.
Due to the use of digital systems in the implementation of the controllers, such a constraint is almost unavoidable and has lead researchers to propose algorithms that can cope with information not being available continuously.
In what pertains to state estimation, such a practical need has brought to life a new research area aimed at developing observer schemes accounting for the discrete nature of the available measurements. When the information is available at periodic time instances, 
there are numerous  design approaches in the literature 
that consist of designing a discrete-time observer for the discretized version of the process; see, e.g., \cite{arcak2004framework, nevsic1999formulas}, just to cite a few. 
Unfortunately, such an approach is limiting for several reasons. One reason stems from the fact that to precisely characterize the intersample behavior, one needs the exact discretized model of the plant, which may actually be impossible
to obtain analytically in the case of nonlinear systems; see \cite{nevsic1999formulas}.
 Furthermore, with such an approach no mismatch between the actual sampling time and the one used to discretize the plant  is allowed in the analysis or in the discrete-time model used to solve the estimation problem. Very importantly, many modern applications, such as network control
systems \cite{hespanha2007survey}, the output of the plant is often accessible only sporadically, making the fundamental assumption of measuring it periodically unrealistic.

To overcome the issues mentioned above, several state estimation strategies that accommodate information being available sporadically, at isolated times, have been proposed in the literature. Such strategies essentially belong to two main families. The first family pertains to observers whose state is entirely reset, according to a suitable law, whenever a new measurement is available, and that run open-loop in between such events -- these are typically called {\em continuous-discrete observers}. The design of such observers is pursued, \eg, in \cite{Ferrante2016state,mazenc2015design}. 
In particular, in \cite{Ferrante2016state} the authors propose a hybrid systems approach to model and design, via \emph{Linear Matrix Inequalities} (LMIs), a continuous-discrete observer ensuring exponential convergence of the estimation error and input-to-state stability with respect to measurement noise. 
In \cite{mazenc2015design}, a new design for continuous-discrete observers based on cooperative systems is proposed for the class of Lipschitz nonlinear systems.

The second family of strategies pertains to continuous-time observers whose output injection error between consecutive measurement events is estimated via a continuous-time update of the latest output measurement. This approach is pursued in \cite{farza2014continuous, karafyllis2009continuous, postoyan2012framework, postoyan2014tracking,raff2008observer}. Specifically, the results in \cite{karafyllis2009continuous,farza2014continuous} show that if a system admits a continuous-time observer and the observer has suitable robustness properties, then, one can build an observer guaranteeing asymptotic state reconstruction in the presence of intermittent measurements, provided that the time in between measurements is small enough.
Later, the general approach in \cite{karafyllis2009continuous} has been also extended by \cite{postoyan2012framework} to the more general context on networked systems, in which communication protocols are considered.
A different approach is pursued in \cite{raff2008observer}. In particular, in this work, the authors, building on the literature of sampled-data systems, propose sufficient conditions in the form of LMIs to design a sampled-and-hold observer to estimate the state of a Lipschitz nonlinear system in the presence of sporadic measurements.  
\subsection{Contribution}
In this paper,  we consider the problem of exponentially estimating the state of  continuous-time Lipschitz nonlinear systems subject to external disturbances and in the presence of sporadic measurements, \ie, we assume the plant output to be sampled with a bounded nonuniform sampling period, possibly very large.  
To address this problem, we propose an observer with a continuous intersample injection and state resets.  Such an intersample injection is provided by a linear time-invariant system, whose state is reset to the measured output estimation error at each sampling time.

Our contributions in the solution to this problem are as follows. Building on a hybrid system model of the proposed observer and of its interconnection with the plant, we propose results for the simultaneous design (co-design) of the observer and the intersample injection dynamics for the considered class of nonlinear systems.  
The approach we pursue relies on Lyapunov theory for hybrid systems in the framework in \cite{goebel2012hybrid}; similar Lyapunov-based  analyses for observers are also available in \cite[Section VIII]{postoyan2014tracking}, \cite{wang2017observer,ahmed2012high}. The use of the hybrid systems framework \cite{goebel2012hybrid} can be seen as an alternative approach to the impulsive approach pursued, \eg, in \cite{farza2014continuous}.
The design we propose ensures exponential convergence of the estimation error with guaranteed convergence speed and robustness with respect to measurement noise and plant perturbations. More precisely, the decay rate of the estimation error can be specified as a design requirement cf. \cite{fichera2012convex}. In addition, for a given performance output, we propose conditions to guarantee a particular $\mathcal{L}_2$-gain between the disturbances entering the plant and the desired performance output.
The conditions in these results are turned into matrix inequalities, which are used to derive efficient design procedures of the proposed observer.

The methodology we propose gives rise to novel observer designs and allows one to recover as special cases the schemes presented in \cite{karafyllis2009continuous, raff2008observer}.

The remainder of the paper is organized as follows. 
Section~\ref{sec:ProblemStatement} presents the system under consideration, the state estimation problem we solve, the outline of the proposed observer, and the hybrid modeling of the proposed observer. Section~\ref{sec:Design} is dedicated to the design of the proposed observer and to some optimization aspects.
Finally, in an example, Section~\ref{sec:Examples} shows the effectiveness of the results presented.  A preliminary version of the results here appeared in the conference paper \cite{ferrante2015hybrid}.
\vspace{0.1cm}
  
\noindent{\bf Notation}:
The set $\mathbb{N}$ is the set of positive integers including zero, the set $\mathbb{N}_{>0}$ is the set of strictly positive integers, $\mathbb{R}_{\geq 0}$ represents the set of nonnegative real scalars, $\R^{n\times m}$ represents the set of the $n\times m$ real matrices, and $\Spn$ is the set of $n\times n$ symmetric positive definite matrices.
The identity matrix is denoted by $\Id$, whereas the null matrix is denoted by $\0$.
For a matrix $A\in\mathbb{R}^{n\times m}$, $A\tr$ denotes the transpose of $A$, $A\mtr=(A\tr)^{-1}$, and $\He (A)= A+A^{\tr}$. For a symmetric matrix $A$, $A>0$ and $A\geq 0$ ($A<0$ and  $A\leq 0$) mean that $A$ ($-A$) is, respectively, positive definite and positive semidefinite. In partitioned symmetric matrices, the symbol $\bullet$ stands for symmetric blocks. Given matrices $A$ and $B$, the matrix $A\oplus B$ is the block-diagonal matrix having $A$ and $B$ as diagonal blocks.
For a vector $x\in\mathbb{R}^n$, $\vert x \vert$ denotes the Euclidean norm. Given two vectors $x,y$, we denote $(x,y)=[x'\,\,y']'$.
Given a vector $x\in\mathbb{R}^{n}$  and a closed set $\mathcal{A}$, the distance of $x$ to $\mathcal{A}$ is defined as 
$\vert x \vert_{\mathcal{A}}=\inf_{y\in {\mathcal{A}}} \vert x-y \vert$. For any function $z:\mathbb{R}\rightarrow\mathbb{R}^n$, we denote $z(t^+)\coloneqq \lim_{s\rightarrow t^+} z(s)$ when it exists. 
\subsection{Preliminaries on Hybrid Systems}
\label{sec:PreliminariesHybrid}
We consider hybrid systems with state $x\in\R^{n_x}$, input $u=(w, \eta)\in\R^{n_u}$, and output $y\in\R^{n_y}$ of the form 
$$
\mathcal{H}\left\lbrace
\begin{array}{ccll}
\dot{x}&=&f(x, w)&\quad x\in C\\
x^+&\in&G(x, \eta)&\quad x\in D\\
y&=&h(x)&
\end{array}\right.
$$
In particular we denote, $f\colon\R^{n_x}\rightarrow\R^{n_x}$ as the \emph{flow map}, $C\subset\R^{n_x}$ as the \emph{flow set}, $G\colon\R^{n_x}\rightrightarrows\R^{n_x}$ as the \emph{jump map}, and $D\subset\R^{n_x}$ as the \emph{jump set}.

A set $E\subset\R_{\geq 0}\times \mathbb{N}$ is a \emph{hybrid time domain} if it is the union of a finite or infinite sequence of intervals $[t_j,t_{j+1}]\times\{j\}$, with the last interval (if existent) of the form $[t_j,T)$ with $T$ finite or $T=\infty$. 
Given a hybrid time domain $E$, we denote $\sup_j E=\sup \{j\in\mathbb{N}\colon \exists t\in\R_{\geq 0}\,\mbox{s.t.}\,(t,j)\in E\}$. 
A hybrid signal $\phi$ is a function defined over a hybrid time domain.
Given a hybrid signal $w$, then $\dom_t w\coloneqq \{t\in\R_{\geq 0}\colon \exists j\in\mathbb{N}\,\,\mbox{s.t.}\,\,(t,j)\in\dom w\}$.
A hybrid signal $u\colon \dom u\rightarrow \R^{n_u}$ is called a hybrid input if $u(\cdot, j)$ is measurable and locally essentially bounded for each $j$. In particular, we denote $\mathcal{U}^{n_u}$ the class of hybrid inputs with values in $\R^{n_u}$. A hybrid signal $\phi\colon \dom\phi\rightarrow\R^{n_x}$ is a hybrid arc if $\phi(\cdot, j)$ is locally absolutely continuous for each $j$. In particular, we denote $\mathcal{X}^{n_x}$ the class of hybrid arcs with values in $\R^{n_x}$. Given a hybrid signal $u$, $j(t)=\min\{j\in\mathbb{N}\colon\,\, (t,j)\in\dom u\}$.
A hybrid arc $\phi\in\mathcal{X}^{n_x}$ and a hybrid input $u\in\mathcal{U}^{n_u}$ define a solution pair $(\phi,u)$ to $\mathcal{H}$ if  $\dom\phi=\dom u$ and $(\phi,u)$ satisfies the  dynamics of $\mathcal{H}$. A solution pair $(\phi,u)$ to $\mathcal{H}$ is maximal if it cannot be extended and is complete if $\dom\phi$ is unbounded; see~\cite{cai2009characterizations} for more details. With a slight abuse of terminology, given $\tilde{u}\in\mathcal{L}_\infty^{loc}(\R_{\geq 0},\R^{n_u})$, in the sequel we say that $\tilde{u}$ leads to a solution $\phi$ to $\mathcal{H}$ if $(\phi, u)$, with $u(t,j)=\tilde{u}(t)$ for each $(t,j)\in\dom\phi$, is a solution pair to $\mathcal{H}$. 
\section{Problem Statement and Outline of Proposed Observer}
\label{sec:ProblemStatement}
\subsection{System Description}
We consider continuous-time nonlinear time-invariant systems with disturbances of the form
\begin{equation}
\begin{aligned}
\label{eq:P2:Chap3:Plant}
&\dot{z}=Az+B\psi(Sz)+Nw,\quad y=Cz+\eta
\end{aligned}
\end{equation}
where $z\in\mathbb{R}^{n_z}, y\in\mathbb{R}^{n_y}$, $w\in\mathbb{R}^{n_w}$, and $\eta\in\R^{n_y}$ are, respectively, the state,  the measured output of the system, a nonmeasurable exogenous input, and the measurement noise affecting the output $y$, while $\psi\colon\R^{n_q}\rightarrow\R^{n_s}$ is a Lipschitz function with Lipschitz constant 
$\ell>0$, \ie, for all $v_1,v_2\in\R^{n_q}$
\begin{equation}
\label{eq:LipsBound}
\vert\psi(v_1)-\psi(v_2)\vert\leq \ell\vert v_1-v_2\vert
\end{equation}
The matrices $A,C, B, S$, and $N$ are constant and of appropriate dimensions. 
The output $y$ is available only at some time instances $t_k$, $k\in\mathbb{N}_{>0}$, not known {\itshape a priori}.
We assume that the sequence $\{t_k\}^{\infty}_{k=1}\!$ is strictly increasing and unbounded, and that (uniformly over such sequences) there exist two positive real scalars $T_1\leq T_2$  such that  
\begin{equation}
\begin{array}{lr}
\label{eq:P2:Chap3:timebound}

0\leq t_{1}\leq T_2,\qquad T_1\leq t_{k+1}-t_{k}\leq T_2\quad\forall k\in\mathbb{N}_{>0}
\end{array}
\end{equation}
The lower bound in condition \eqref{eq:P2:Chap3:timebound} prevents the existence of accumulation points in the sequence $\{t_k\}^{\infty}_{k=1}$, and, hence, avoids the existence of Zeno behaviors, which are typically undesired in practice.
In fact, $T_1$ defines a strictly positive minimum time in between consecutive measurements. Furthermore, $T_2$ defines the \emph{Maximum Allowable Transfer Time} (MATI)  \cite{postoyan2012framework}.
 
Given a performance output $y_p\coloneqq C_p(z-\hat{z})$, where $\hat{z}$ is the estimate of $z$ to be generated, the problem to solve is as follows: 
 \begin{problem}
 		\label{prob:Problem1}
 	Design an observer providing an estimate $\hat{z}$ of $z$, such that the following three properties are fulfilled:
\begin{itemize}
 \item[($P1$)] The set of points where the plant state $z$ and its estimate $\hat{z}$ coincide (and any other state variables\footnote{The observer may have extra state variables that are used for estimation. In our setting, the sporadic nature of the available measurements of $y$ will be captured by a timer with resets.} are bounded) is globally exponentially stable with a prescribed convergence rate for the plant \eqref{eq:P2:Chap3:Plant} interconnected with the observer whenever the input $w$ and $\eta$ are identically zero; 
 		\item[($P2$)] The estimation error is bounded when the disturbances $w$ and $\eta$ are bounded; 
 		\item[($P3$)]$\mathcal{L}_2$-external stability from the input $w$ to the performance output $y_p$ is ensured with a prescribed $\mathcal{L}_2$-gain when $\eta\equiv 0$.
 	\end{itemize}
 \end{problem}
 \subsection{Outline of the Proposed Solution}
Since measurements of the output $y$ are available in an impulsive fashion, assuming that the arrival of a new measurement can be instantaneously detected,   inspired by \cite{karafyllis2009continuous,postoyan2012framework,raff2008observer} to solve Problem~\ref{prob:Problem1}, we propose the following observer with jumps
\begin{equation}
\label{eq:P2:Chap3:ObsSampleHold}
\begin{array}{ll}
\left.
\begin{array}{rcl}
\dot{\hat{z}}(t)& =& A\hat{z}(t)+B\psi(S\hat{z}(t))+L\theta(t)\\
\dot{\theta}(t)& =&H\theta(t)
\end{array}\hspace{-0.1cm}\right\} &\hspace{-0.3cm} \forall t\neq t_k,k\in\mathbb{N}_{>0}\\\\
\left.
\begin{array}{rcl}
\hat{z}(t^{+}) & = &\hat{z}(t)\\
\theta(t^{+})& = &y(t)-C\hat{z}(t)
\end{array} \right\}&
\hspace{-0.3cm} \forall t=t_k,k\in\mathbb{N}_{>0}
\quad\begin{array}{rcl}
\end{array} 
\end{array}
\end{equation}
where $L$ and $H$ are real matrices of appropriate dimensions to be designed and $\hat{z}$ represents the estimate of $z$ provided by the observer.
The operating principle of the observer in \eqref{eq:P2:Chap3:ObsSampleHold} is as follows.  The arrival of a new measurement triggers an instantaneous jump in the observer state. Specifically, at each jump, the measured output estimation error, {\itshape i.e.}, $e_y\coloneqq y-C\hat{z}$, is instantaneously stored in $\theta$.  Then, in between consecutive measurements, $\theta$ is continuously updated according to continuous-time dynamics, and its value is continuously used as an intersample correction to feed a continuous-time observer. At this stage, we introduce the following change of variables
$\varepsilon\coloneqq z-\hat{z}, \tilde{\theta}\coloneqq C(z-\hat{z})-\theta$
which defines, respectively, the estimation error and the difference between the output estimation error and $\theta$. Moreover, by defining as a performance output $y_p=C_p \ep$, where $C_p\in\R^{n_{y_p}\times n_z}$, we consider the following dynamical system with jumps:
\begin{equation}
\begin{array}{ll}
\left\lbrace
\begin{array}{ll}
\dot{z}(t)&=\hspace{-0.0cm}Az(t)+B\psi(S z(t))+N w(t)\\
\begin{pmatrix}
\dot{\ep}(t)\\
\dot{\tilde{\theta}}(t)
\end{pmatrix}\hspace{-0.0cm}&=\hspace{-0.0cm}\mathcal{F}\begin{pmatrix}
\ep(t)\\
\tilde{\theta}(t)
\end{pmatrix}+\mathcal{Q}\zeta(z(t), \ep(t))+\mathcal{T}w(t)
\end{array}\right. &\hspace{-0.45cm}\forall t\neq t_k\\\\
\left\lbrace
\begin{array}{ll}
z(t^+)&=z(t)\\
\begin{pmatrix}
\ep(t^+)\\
\tilde{\theta}(t^+)
\end{pmatrix}&=\mathcal{G}\begin{pmatrix}
\ep(t)\\
\tilde{\theta}(t)
\end{pmatrix}+\mathcal{N}\eta(t)
\end{array}\hspace{0.3cm} \right.&
\hspace{-0.45cm}\forall t=t_k\\\\
\begin{array}{rcl}
y_p(t)&=C_p\ep(t)
\end{array} 
\end{array}
\label{eq:P2:Chap3:ObsGenErr}
\end{equation}
where for each $v_1, v_2\in\R^{n_z}$, $\zeta(v_1,v_2)\coloneqq \psi(Sv_1)-\psi(S(v_1-v_2))$ and
\begin{equation}
\label{eq:P2:Chap3:FG}
\begin{array}{ll}
\mathcal{F}\coloneqq\begin{pmatrix}
A-LC&L\\
CA-CLC-HC&CL+H
\end{pmatrix},\mathcal{T}\coloneqq\begin{pmatrix}
N\\
CN
\end{pmatrix}\\
\mathcal{Q}\coloneqq\begin{pmatrix}
B\\
C B
\end{pmatrix},\quad\mathcal{G}\coloneqq\begin{pmatrix}
\Id&\0\\
\0&\0
\end{pmatrix},\quad\,\,\,\,\qquad\mathcal{N}\coloneqq\begin{pmatrix}
\0\\-\Id
\end{pmatrix}
 \end{array}
\end{equation}
Our approach consists of recasting \eqref{eq:P2:Chap3:ObsGenErr} and the events at instants $t_k$ satisfying \eqref{eq:P2:Chap3:timebound} as a hybrid system with nonunique solutions and then apply hybrid systems theory to guarantee that \eqref{eq:P2:Chap3:ObsGenErr} solves Problem~\ref{prob:Problem1}.
\begin{remark}
As a difference to \cite{farza2014continuous, karafyllis2009continuous,postoyan2012framework}, the results presented in the next two sections are based on the Lyapunov results for hybrid systems presented in \cite{goebel2012hybrid} and, rather than emulation, consist of direct design methods of the proposed hybrid observer. Our design methods not only allow for completely designable intersample injection terms in the observer, but also allow for designs that cover the special cases of the schemes presented in \cite{karafyllis2009continuous,raff2008observer}. Furthermore, as a difference to \cite{postoyan2012framework}, where an emulation-based approach is considered, our results provide constructive conditions for the design of the observer gains so as to enforce the desired convergence properties for a desired value of $T_2$.	
\end{remark}
\section{Construction of the Observer and First Results}
\label{sec:Design}
\subsection{Hybrid Modeling}
The fact that the observer experiences jumps when a new measurement is available and evolves according to a differential equation in between updates suggests that the updating process of the error dynamics can be described via a hybrid system.  Due to this, we represent the whole system composed by the plant \eqref{eq:P2:Chap3:Plant}, the observer \eqref{eq:P2:Chap3:ObsSampleHold}, and the logic triggering jumps as a hybrid system.
The proposed hybrid systems approach also models the hidden time-driven mechanism triggering the jumps of the observer.

To this end, in this work, and as in \cite{FerranteIFAC2014}, we augment the state of the system  with an auxiliary timer variable $\tau$ that keeps track of the duration of flows and triggers a jump whenever a certain condition is verified. This additional state allows to describe the time-driven triggering mechanism as a state-driven triggering mechanism, which leads to a model that can be efficiently represented by relying on the framework for hybrid systems proposed in \cite{goebel2012hybrid}.
More precisely, we make $\tau$ decrease as ordinary time $t$ increases and, whenever $\tau=0$, reset it to any point in $[T_1,T_2]$, so as to enforce \eqref{eq:P2:Chap3:timebound}. After each jump, we allow the system to flow again.  
The whole system composed by the states $z$, $\ep$ and $\tilde{\theta}$, and the timer variable $\tau$ can be represented by the following hybrid system, which we denote $\mathcal{H}_e$, with state
$$x=(z, \ep,\tilde{\theta},\tau)\in\R^{n_x}$$
with  $n_x\coloneqq 2n_z+n_y+1$, input $u=(w,\eta)\in\R^{n_u}$, $n_u\coloneqq n_w+n_y$, and output $y_p$:
\begin{subequations}
\begin{equation}
\label{eq:P2:Chap3:ObsHybrid}
\begin{array}{l}
\left\{\begin{array}{ll}
\begin{array}{rcl}
\dot{x}&=&f(x,w)
\end{array}&\hspace{-0.33cm}x\in \mathcal{C}, w\in\R^{n_w}\\
\begin{array}{rcl}
x^+&\in&G(x,\eta)\\
\end{array}&\hspace{-0.34cm}x\in \mathcal{D}, \eta\in\R^{n_y}
\end{array}
\right.\\
\qquad\begin{array}{rcl}
y_p&=& C_p\ep
\end{array}
\end{array}
\end{equation}
where
\begin{equation}
\label{eq:P2:Chap3:FlowMap}
f(x,w)=\left(\begin{smallmatrix}
Az+ B\psi(Sz)+Nw\\
\mathcal{F}\begin{pmatrix}
\ep\\
\tilde{\theta}
\end{pmatrix}+\mathcal{Q}\zeta(z,\ep)+\mathcal{T}w\\
-1
\end{smallmatrix}\right)\qquad\forall x\in \mathcal{C}, w\in\R^{n_w}
\end{equation}
\begin{equation}
\label{eq:P2:Chap3:JumpMap}
G(x,\eta)=\left(\begin{smallmatrix}
z\\
	\mathcal{G}\begin{pmatrix}
		\ep\\
		\tilde{\theta}
	\end{pmatrix}+\mathcal{N}\eta\\
	[T_1,T_2]
\end{smallmatrix}\right)\qquad\forall x\in \mathcal{D}, \eta\in\R^{n_y}
\end{equation}
where the flow set $\mathcal{C}$ and the jump set $\mathcal{D}$ are defined as follows
\begin{equation}
\label{eq:P2:Chap3:Sets2}
\mathcal{C}=\mathbb{R}^{2n_z+n_y}\times[0,T_2],\quad\mathcal{D}=\mathbb{R}^{2n_z+n_y}\times\{0\}.
\end{equation}
\end{subequations}
The set-valued jump map allows to capture all possible sampling events occurring within $T_1$ or $T_2$ units of time from each other.
Specifically, the hybrid model in \eqref{eq:P2:Chap3:ObsHybrid} is able to characterize not only the behavior of the analyzed system for a given sequence $\{t_k\}_{k=1}^\infty$, but for any sequence satisfying \eqref{eq:P2:Chap3:timebound}. 

Concerning the existence of solutions to system \eqref{eq:P2:Chap3:ObsHybrid} with zero input, by relying on the notion of solution proposed in \cite{goebel2012hybrid}, it is straightforward to check that for every initial condition $\phi(0,0)\in \mathcal{C}\cup \mathcal{D}$ every maximal solution to \eqref{eq:P2:Chap3:ObsHybrid} is complete. Thus, completeness of the maximal solutions to \eqref{eq:P2:Chap3:ObsHybrid} is guaranteed for any choice of the gains $L$ and $H$, guaranteeing that $\mathcal{H}_e$ provides an accurate model of the error dynamics in \eqref{eq:P2:Chap3:ObsGenErr}.
In addition, one can characterize the domain of these solutions. 
Indeed for every initial condition $\phi(0,0)\in \mathcal{C}\cup \mathcal{D}$, the domain of every maximal solution $\phi$ to \eqref{eq:P2:Chap3:ObsHybrid} can be written as follows:
\begin{subequations}
\label{eq:Dom}
\begin{equation}
\dom \phi=\bigcup_{j\in\mathbb{N}}([t_j,t_{j+1}])\times\{j\}
\end{equation}
with $t_0=0$ and
\begin{equation}
\label{eq:Domb}
\begin{aligned}
0\leq t_{1}\leq T_2, \qquad
T_1  \leq t_{j+1}-t_j\,\leq T_2\qquad\forall j\in\mathbb{N}_{>0}
\end{aligned}
\end{equation}
\end{subequations}
where $\dom \phi$ is the domain of the solution $\phi$, which is a hybrid time domain; see \cite{goebel2012hybrid} for further details on hybrid time domains. 

Concerning solution pairs to \eqref{eq:P2:Chap3:ObsHybrid} with nonzero inputs, observe that given any solution pair $(\phi,u)$, the definition of the sets $\mathcal{C}$ and $\mathcal{D}$ ensure that 
$\dom\phi$ has the same structure illustrated in \eqref{eq:Dom}. Moreover, if $(\phi, u)$ is maximal then it is also complete\footnote{Completeness of maximal solution pairs can be shown by following similar arguments as in \cite[Proposition 6.10.]{goebel2012hybrid}. In particular, 
it is enough to observe that: $G(\mathcal{D})\subset \mathcal{C}$, no finite escape time is possible (due to $w$ measurable and locally essentially bounded and $x\mapsto f(x,w)$ Lipschitz uniformly in $w$), and solutions to $\dot{x}=f(x,w)$ from any initial condition in $\mathcal{ C}\setminus \mathcal{D}$ are nontrivial.}.  

To solve Problem~\ref{prob:Problem1} our approach is to design the matrices $L$ and $H$ in the proposed observer in \eqref{eq:P2:Chap3:ObsHybrid} such that without disturbances, \ie, $w\equiv 0, \eta\equiv 0$, the following set\footnote{By the definition of the system $\mathcal{H}_e$ and of the set $\mathcal{A}$, for every $x\in C \cup D \cup G(D)$, $\vert x \vert_\mathcal{A}=\vert (\ep,\tilde{\theta}) \vert$.}
\begin{equation}
\label{eq:P2:Chap3:A}
\mathcal{A}=\R^{n_z}\times\{0\}\times \{0\}\times[0,T_2]
\end{equation}
is exponentially stable and, when the disturbances are nonzero, the system $\mathcal{H}_e$ is input-to-state stable with respect to $\mathcal{A}$. These properties are captured by the notions defined below: 
\begin{definition}($\mathcal{L}_\infty$ norm)
Let $u$ be a hybrid signal and $T\in\R_{\geq 0}$. The $T$-truncated $\mathcal{L}_\infty$ norm of $u$ is given by
$$
\begin{aligned}
\Vert u_{[T]}\Vert_{\infty}&\coloneqq\max\left\{\underset{(s,k)\in\dom u\setminus\Gamma(u),s+k\leq T}{\ess\sup \vert u(s,k)\vert},\underset{(s,k)\in\Gamma(u),s+k\leq T}{\sup \vert u(s,k)\vert}\right\}
\end{aligned}$$
 where $\Gamma(u)$ denotes the set of all $(t,j) \in\dom u$ such that $(t,j + 1)\in\dom u$; see \cite{cai2009characterizations} for further details.
The $\mathcal{L}_\infty$ norm of $u$,  denoted by $\Vert u\Vert_{\infty}$ is given by $\lim_{T\rightarrow T^{\star}} \Vert u_{[T]}\Vert_{\infty}$, where $T^{\star}=\sup\{t+j\colon (t,j)\in \dom u\}$. When, in addition, $\Vert u\Vert_{\infty}$ is finite, we say that $u\in \mathcal{L}_\infty.$ 
\end{definition}
\begin{definition}[Pre-exponential input-to-state stability]
\label{defISS}
Let $\mathcal{A}\subset\R^{n_z+n_y+1}$ be closed. The system $\mathcal{H}_e$ is \emph{pre-exponentially input-to-state-stable} with respect to $\mathcal{A}$ if there exist $\kappa,\lambda>0$ and $\rho\in\mathcal{K}$ such that each solution pair~$(\phi,u)$ to $\mathcal{H}_e$ with $u\in\mathcal{L}_\infty$ satisfies 
\begin{equation}
\label{issBound}
\vert \phi(t,j)\vert_{\mathcal{A}}\leq \max\{\kappa e^{-\lambda (t+j)}\vert \phi(0,0)\vert_{\mathcal{A}}, \rho (\Vert u\Vert_{\infty})\}
\end{equation}
for each $(t,j)\in\dom\phi$. Whenever every maximal solution is complete, we say that $\mathcal{H}_e$ is exponentially input-to-state-stable (eISS) with respect to $\mathcal{A}$.
\end{definition}
\subsection{Sufficient conditions}
In this section we provide a first sufficient condition to solve Problem~\ref{prob:Problem1}. To this end, let us consider the following assumption, which is somehow driven by \cite[Example 27]{goebel2009hybrid} and whose role will be clarified later via Theorem~\ref{theorem:P2:Chap3:Main}.
\begin{assumption}
\label{Ass:L2}
Let $\lambda_t$ and $\gamma$ be given positive real numbers. There exist two continuously differentiable functions $V_1\colon\mathbb{R}^{n_z}\rightarrow \R$, $V_2\colon\R^{n_y+1}\rightarrow \R$, positive real numbers $\alpha_1,\alpha_2,\omega_1,\omega_2$ such that
\begin{itemize}
\item[(A1)]$\alpha_1\vert \ep\vert^2\leq V_1(\ep)\leq \alpha_2\vert \ep\vert^2\qquad \forall x\in \mathcal{C}$;
\item[(A2)]$\omega_1\vert \thetatilde\vert^2\leq V_2(\thetatilde, \tau)\leq \omega_2\vert \thetatilde\vert^2\quad \forall x\in \mathcal{C}$;
\item[(A3)] the function $x\mapsto V(x)\coloneqq V_1(\ep)+V_2(\thetatilde,\tau)$ satisfies for each $x\in \mathcal{C}, w\in\R^{n_w}$ 
\begin{equation}
\begin{split}
\label{eq:A3}
\langle\nabla V(x), 
\left(\begin{smallmatrix}
Az+ B\psi(Sz)+Nw\\
\mathcal{F}\begin{pmatrix}
\ep\\
\tilde{\theta}
\end{pmatrix}+\mathcal{Q}\zeta(z,\ep)+\mathcal{T}w\\
-1
\end{smallmatrix}\right)\rangle\leq -2\lambda_t V(x)\\
-\varepsilon\tr C_p\tr C_p \varepsilon+\gamma^2 w\tr w
\end{split}
\end{equation}\hfill $\triangle$
\end{itemize}
\end{assumption}
The following properties on the elements in the hybrid domain of solutions to $\mathcal{H}_e$ will be used to establish our sufficient conditions.
\begin{lemma}
	\label{lemma:tj}
	Let $\lambda_t>0$, $T_1>0$,  $\lambda\in\left(0,\frac{\lambda_t T_1}{1+T_1}\right]$, and $\omega\geq\lambda$.
	Then, each solution pair $(\phi,u)$ to $\mathcal{H}_e$ satisfies 
	\begin{equation}
	\label{eq:boundTJ}
	-\lambda_t t\leq \omega-\lambda(t+j)
	\end{equation}
	for every $(t,j)\in\dom\phi$.
\end{lemma}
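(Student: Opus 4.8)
The plan is to exploit the structure of the hybrid time domain described in \eqref{eq:Dom}. Fix a solution pair $(\phi,u)$ and a point $(t,j)\in\dom\phi$. By \eqref{eq:Dom}–\eqref{eq:Domb}, there are jump times $0=t_0\leq t_1\leq\cdots\leq t_j\leq t$ with $t_{i+1}-t_i\geq T_1$ for all $i\in\{1,\dots,j-1\}$ (and $t_1\geq 0$), so that $t\geq t_j\geq (j-1)T_1$ whenever $j\geq 1$. The key inequality to establish is therefore $t\geq (j-1)T_1$, equivalently $j\leq 1+t/T_1$, which holds trivially when $j=0$ as well since then $t\geq 0$. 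From here the claim is a direct algebraic manipulation: I want to show $-\lambda_t t\leq \omega-\lambda(t+j)$, i.e. $\lambda j\leq \omega+(\lambda_t-\lambda)t$.

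Substituting the bound $j\leq 1+t/T_1$ and using $\lambda\geq 0$ gives $\lambda j\leq \lambda+\lambda t/T_1$, so it suffices to check $\lambda+\lambda t/T_1\leq \omega+(\lambda_t-\lambda)t$. Since $\omega\geq\lambda$ by hypothesis, the constant terms are fine, and it remains to compare the coefficients of $t$: we need $\lambda/T_1\leq \lambda_t-\lambda$, i.e. $\lambda(1+T_1)\leq \lambda_t T_1$, i.e. $\lambda\leq \frac{\lambda_t T_1}{1+T_1}$ — which is exactly the standing hypothesis on $\lambda$. Note that $\lambda_t-\lambda\geq 0$ follows from this (as $\frac{\lambda_t T_1}{1+T_1}<\lambda_t$), so the coefficient comparison does dominate for all $t\geq 0$, and the two sub-cases $j=0$ and $j\geq 1$ can both be handled by the single bound $j\leq 1+t/T_1$.

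I do not anticipate a serious obstacle; the only point requiring a little care is justifying $t\geq (j-1)T_1$ rigorously from the domain structure — one must remember that the very first dwell time $t_1-t_0=t_1$ is only bounded below by $0$, not by $T_1$, so the telescoping sum $t\geq t_j=\sum_{i=0}^{j-1}(t_{i+1}-t_i)$ gives $t_j\geq (j-1)T_1$ rather than $jT_1$. This is precisely why the "$+1$" appears in $j\leq 1+t/T_1$ and why the hypothesis is $\lambda\leq \lambda_t T_1/(1+T_1)$ rather than $\lambda\leq \lambda_t$. Everything else is elementary and the proof is short.
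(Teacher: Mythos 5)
Your proposal is correct and follows essentially the same route as the paper's proof: bound the jump count by $j\leq 1+t/T_1$ from the domain structure \eqref{eq:Domb}, then rearrange \eqref{eq:boundTJ} and compare the constant term (using $\omega\geq\lambda$) and the coefficient of $t$ (using $\lambda\leq\lambda_t T_1/(1+T_1)$). Your justification of $t\geq(j-1)T_1$, accounting for the first interval only being bounded below by $0$, is in fact slightly more explicit than the paper's, which simply asserts the bound on $j$.
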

\begin{proof}
	From \eqref{eq:boundTJ}, by rearranging the terms, one gets 
	\begin{equation}
	(-\lambda_t+\lambda)t+\lambda j-\omega\leq0.
	\label{eq:BoundTJ2}
	\end{equation}
	Now, pick any solution $\phi$ to hybrid system \eqref{eq:P2:Chap3:ObsHybrid}. From \eqref{eq:Domb}, it follows that for every $(t,j)\in\dom\phi$ 
	\begin{equation}
	\label{eq:BoundJ}
	j\leq \frac{t}{T_1}+1
	\end{equation}
	then, for every strictly positive scalar $\lambda$, from  the latter expression, and for every $(t,j)\in\dom\phi$, one gets
	\begin{equation}
	(-\lambda_t+\lambda)t+\lambda j-\omega\leq \left(-\lambda_t+\lambda+\frac{\lambda}{T_1}\right)t+\lambda-\omega.
	\end{equation}
	Thus, being $T_1$ strictly  positive,  by selecting 
	$$\lambda\in\left(0,\frac{\lambda_t T_1}{1+T_1}\right], \omega\geq \lambda$$
	yields \eqref{eq:BoundTJ2}, which concludes the proof.
\end{proof}
The following theorem shows that if there exist matrices $L\in\R^{n_z\times n_y}$ and 
$H\in\R^{n_y\times n_y}$ such that Assumption~\ref{Ass:L2} holds, then such matrices provide a solution to Problem~\ref{prob:Problem1}.
\begin{theorem}
\label{theorem:P2:Chap3:Main}
Let Assumption~\ref{Ass:L2} hold. Then: 
\begin{itemize}
\item[$(i)$] The hybrid system $\mathcal{H}_e$ is eISS with respect to $\mathcal{A}$;
\item[$(ii)$]  There exists $\alpha>0$ such that any solution pair $(\phi,u)$ to $\mathcal{H}_e$ with $\eta\equiv 0$ satisfies
$$
\begin{aligned}
\sqrt{\int_{\mathcal{I}} \vert y_p(s, j(s))\vert^2 ds}\leq &\alpha \vert \phi(0,0)\vert_{\mathcal{A}}+\\
&\gamma\sqrt{\int_{\mathcal{I}} \vert w(s, j(s))\vert^2ds}
\end{aligned}
$$
where $\mathcal{I}\coloneqq[0, \sup_t\dom\phi]\cap\dom_t \phi$;
\item [$(iii)$] The observer in \eqref{eq:P2:Chap3:ObsSampleHold} with $L$ and $H$ obtained from item (A3) in Assumption~\ref{Ass:L2} provides a solution to Problem~\ref{prob:Problem1}.
\end{itemize} 
\end{theorem}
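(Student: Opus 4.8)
The plan is to use $V(x)=V_1(\ep)+V_2(\tilde\theta,\tau)$ from Assumption~\ref{Ass:L2} as a proper Lyapunov function for $\mathcal{A}$: combining (A1), (A2) with the footnote identity $\vert x\vert_{\mathcal{A}}=\vert(\ep,\tilde\theta)\vert$ on $\mathcal{C}\cup\mathcal{D}\cup G(\mathcal{D})$ yields $c_1\vert x\vert_{\mathcal{A}}^2\le V(x)\le c_2\vert x\vert_{\mathcal{A}}^2$ with $c_1=\min\{\alpha_1,\omega_1\}$ and $c_2=\max\{\alpha_2,\omega_2\}$. First I would record the two elementary facts about $V$ along an arbitrary solution pair $(\phi,u)$, $u=(w,\eta)$. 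Along flows, (A3) gives $\langle\nabla V(\phi),f(\phi,w)\rangle\le-2\lambda_t V(\phi)-\vert y_p\vert^2+\gamma^2\vert w\vert^2$, so discarding $-\vert y_p\vert^2\le0$ and invoking the comparison lemma over a flow interval $[t_j,t]\times\{j\}$ produces $V(\phi(t,j))\le e^{-2\lambda_t(t-t_j)}V(\phi(t_j,j))+\tfrac{\gamma^2}{2\lambda_t}\Vert w\Vert_{\infty}^2$. Across a jump, since $\mathcal{G}$ leaves $\ep$ unchanged and sends $\tilde\theta$ to $0$ while $\mathcal{N}\eta$ adds $-\eta$ to $\tilde\theta$, and since $V_2\ge0$ by (A2), one gets $V(\phi(t_{j+1},j+1))=V_1(\ep)+V_2(-\eta,\tau^+)\le V_1(\ep)+\omega_2\vert\eta\vert^2\le V(\phi(t_{j+1},j))+\omega_2\Vert\eta\Vert_{\infty}^2$ for every admissible $\tau^+\in[T_1,T_2]$.

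For item~$(i)$ I would chain these two inequalities along $\dom\phi$. The homogeneous factors telescope to $e^{-2\lambda_t t}$, while the disturbance contributions, thanks to the minimum dwell time $T_1$ that makes every inter-jump flow contract $V$ by a factor at most $e^{-2\lambda_t T_1}<1$, sum to a convergent geometric series; this gives $V(\phi(t,j))\le e^{-2\lambda_t t}V(\phi(0,0))+c_3\Vert u\Vert_{\infty}^2$ for a constant $c_3$ depending only on $\lambda_t,T_1,\gamma,\omega_2$. Applying the sandwich bound on $V$ and then Lemma~\ref{lemma:tj} with $\lambda\in(0,\tfrac{\lambda_t T_1}{1+T_1}]$ and $\omega=\lambda$ to replace $e^{-2\lambda_t t}$ by $e^{2\lambda}e^{-2\lambda(t+j)}$, and finally taking square roots (using $\sqrt{a+b}\le\sqrt a+\sqrt b$ and $a+b\le2\max\{a,b\}$), yields \eqref{issBound} with $\kappa=2e^{\lambda}\sqrt{c_2/c_1}$ and the linear class-$\mathcal{K}$ function $\rho(r)=2\sqrt{c_3/c_1}\,r$. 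Since maximal solution pairs of $\mathcal{H}_e$ are complete, as already noted, $\mathcal{H}_e$ is eISS with respect to $\mathcal{A}$.

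For item~$(ii)$ set $\eta\equiv0$, so the jump increment of $V$ is nonpositive, and along flows (A3) gives $\dot V\le-\vert y_p\vert^2+\gamma^2\vert w\vert^2$ after discarding $-2\lambda_t V\le0$. Telescoping the flow increments and the nonpositive jump increments over $\dom\phi$ and using $V\ge0$ yields, for every $(t,j)\in\dom\phi$, $\int_0^t\vert y_p(s,j(s))\vert^2\,ds\le V(\phi(0,0))+\gamma^2\int_0^t\vert w(s,j(s))\vert^2\,ds\le c_2\vert\phi(0,0)\vert_{\mathcal{A}}^2+\gamma^2\int_0^t\vert w(s,j(s))\vert^2\,ds$; passing to the limit $t\to\sup_t\dom\phi$, integrating over $\mathcal{I}$, taking square roots and again using $\sqrt{a+b}\le\sqrt a+\sqrt b$ produces the stated bound with $\alpha=\sqrt{c_2}$. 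Item~$(iii)$ is then a matter of reading off consequences on the interconnection of \eqref{eq:P2:Chap3:Plant} with \eqref{eq:P2:Chap3:ObsSampleHold}, which $\mathcal{H}_e$ models: $(i)$ with $u\equiv0$ gives global exponential stability of $\mathcal{A}$, hence $(P1)$ (with $\tau$ the bounded extra state and $\ep=z-\hat z\to0$); $(i)$ with bounded $u$ gives boundedness of $(\ep,\tilde\theta)$, hence $(P2)$; and $(ii)$ is precisely the $\mathcal{L}_2$-gain statement $(P3)$ with gain $\gamma$.

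The step I expect to be the main obstacle is the bookkeeping in item~$(i)$: turning the pure $t$-contraction of $V$ along flows into a genuine $(t+j)$-exponential ISS estimate. This is exactly where the lower sampling bound $T_1$ is indispensable, both to collapse the accumulated disturbance terms into a finite geometric series and, through Lemma~\ref{lemma:tj}, to trade $e^{-2\lambda_t t}$ for $e^{-2\lambda(t+j)}$; one must also carefully verify that $V$ cannot grow across jumps beyond the $\omega_2\Vert\eta\Vert_{\infty}^2$ term, which hinges on the specific structure of $\mathcal{G}$ and $\mathcal{N}$ and on nonnegativity of $V_2$.
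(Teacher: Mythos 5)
Your proposal is correct and follows essentially the same route as the paper's proof: the same Lyapunov function $V=V_1+V_2$ with the sandwich bounds $\rho_1\vert x\vert_{\mathcal{A}}^2\leq V(x)\leq\rho_2\vert x\vert_{\mathcal{A}}^2$, the same flow and jump estimates, the geometric-series bound on the accumulated jump disturbances via the dwell time $T_1$ (the content of Lemma~\ref{lemma:KLj}), Lemma~\ref{lemma:tj} to convert the $t$-decay into a $(t+j)$-decay, and the same integration argument for item $(ii)$ and read-off for item $(iii)$. Your constant $\alpha=\sqrt{c_2}$ in item $(ii)$ is in fact the correct one (the paper's $\alpha=\rho_2$ should read $\alpha=\sqrt{\rho_2}$), so no changes are needed.
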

\begin{proof}
Consider the following Lyapunov function candidate for the hybrid system \eqref{eq:P2:Chap3:ObsHybrid} defined for every $x\in\mathbb{R}^{2n_z+n_y}\times\mathbb{R}_{\geq 0}$:
 \begin{equation}
 \label{eq:P2:Chap3:V}
 V(x)=V_1(\varepsilon)+V_2(\tilde{\theta},\tau).
 \end{equation} 
 We prove $(i)$ first. 
To this end, notice that by setting $\rho_1=\min\{\alpha_1,\omega_1\}$ and $\rho_2=\max\{\alpha_2, \omega_2\}$, in view of the definition of the set $\mathcal{A}$ in \eqref{eq:P2:Chap3:A}, one gets
\begin{equation}
\label{eq:P2:Chap3:Sandwich}
\rho_1\vert x \vert_\mathcal{A}^2\leq V(x)\leq\rho_2\vert x \vert^2_\mathcal{A}\qquad\forall x\in \mathcal{C}\cup\mathcal{D}
\end{equation} 
Moreover, from Assumption~\ref{Ass:L2} item (A3) one has 
\begin{equation}
\label{eq:P2:Chap3:flow1}
\langle\nabla V(x),f(x,w)\rangle\leq -2\lambda_t V(x)+\gamma^2 w\tr w\qquad \forall x\in \mathcal{C}, w\in\R^{n_w}
\end{equation}
and for each 
$g=\left(z,\mathcal{G}\begin{pmatrix}\ep\\\thetatilde
\end{pmatrix}+\mathcal{N}\eta, v\right)\in G(x,\eta),x\in \mathcal{D}, \eta\in\R^{n_y}$ one has
\begin{equation}
\label{eq:P2:Chap3:jump2}
V(g)-V(x)=-V_2(\thetatilde,0)+V_2(-\eta,v)\leq \omega_2 \vert\eta\vert^2
\end{equation}
Let $(\phi,u)$ be a maximal solution pair to \eqref{eq:P2:Chap3:ObsHybrid} with $u=(w,\eta)\in \mathcal{L}_\infty$, and pick $(t, j)\in\dom\phi$. Furthermore, let $0=t_0\leq t_1\leq t_2\leq\dots\leq t_{j+1}=t$ be such that $\dom\phi\cap\left([0,t]\times\{0,1,\dots,j\}\right)=\cup_{i=0}^j \left([t_i, t_{i+1}]\times \{i\}\right)$.
Direct integration of $(t,j)\mapsto V(\phi(t,j))$ thanks to \eqref{eq:P2:Chap3:flow1} and \eqref{eq:P2:Chap3:jump2}, for each $(t,j)\in\dom\phi$, yields\footnote{Given a sequence $\{a_k\}$, we adopt the convention $\sum_{k=a}^b a_k=0$ if $a>b$.}
\begin{equation}
\label{eq:P2:Chap3:L2Bound}
\begin{split}
V(\phi(t,j))&\leq e^{-2\lambda_t t}V(\phi(0,0))\\
&+\gamma^2 e^{-2\lambda_t t}\int_{[0, t]\cap\dom_t\phi} e^{2\lambda_t s}\vert w(s,i)\vert^2 ds\\
&+\omega_2\sum_{i=1}^{j}e^{-2\lambda_t (t-t_i)}\vert\eta(t_i ,i-1)\vert^2
\end{split}
\end{equation} 
which in turns gives
\begin{equation}
\label{eq:P2:Chap3:BoundV}
\begin{split}
V(\phi(t,j))\leq &e^{-2\lambda_t t}V(\phi(0,0))+\frac{\gamma^2}{2\lambda_t} (1-e^{-2\lambda_t t})\Vert w\Vert^2_\infty\\
&+\omega_2\sum_{i=1}^{j}e^{-2\lambda_t (t-t_i)}\Vert \eta\Vert^2_\infty\\
&\quad\quad\quad\quad\quad\quad\quad\quad\quad\quad \forall (t,j)\in\dom\phi
\end{split}
\end{equation}
Now thanks to Lemma~\ref{lemma:KLj} in the Appendix, from \eqref{eq:P2:Chap3:BoundV} one gets for each $(t,j)\in\dom\phi$
$$
\begin{array}{ll}
V(\phi(t,j))&\leq e^{-2\lambda_t t}V(\phi(0,0))+\frac{\gamma^2}{2\lambda_t} \Vert w\Vert^2_\infty\\
&+\omega_2\frac{e^{4\lambda_t T_1}}{e^{2\lambda_t T_1}-1}\Vert \eta\Vert^2_\infty
\end{array}
$$
which, thanks to \eqref{eq:P2:Chap3:Sandwich}, implies that 
\begin{equation}
\begin{aligned}
\vert\phi(t,j)\vert^2_{\mathcal{A}}&\leq \frac{\rho_2}{\rho_1}e^{-2\lambda_t t}\vert\phi(0,0)\vert^2_{\mathcal{A}}+\frac{\gamma^2}{2\lambda_t\rho_1} \Vert w\Vert^2_\infty\\
&+\frac{e^{4\lambda_t T_1}}{(e^{2\lambda_t T_1}-1)\rho_1}\omega_2\Vert \eta\Vert^2_\infty\quad\forall (t,j)\in\dom\phi
\end{aligned}
\end{equation}
Hence, for each $(t,j)\in\dom\phi$ one has\footnote{The first inequality is established by using the fact that for each $a,b,$ and $c$ nonnegative real numbers, $\sqrt{a+b+c}\leq a^{\frac{1}{2}}+b^{\frac{1}{2}}+c^{\frac{1}{2}}$, while the second inequality follows from the fact that for any real numbers $a,b$, and $c$, $a\leq b+c$ implies $a\leq \max\{2b,2c\}$.} 
\begin{equation}
\begin{array}{ll}
\vert\phi(t,j)\vert_{\mathcal{A}}\!\!\leq\!\!&\sqrt{\frac{\rho_2}{\rho_1}}e^{-\lambda_t t}\vert\phi(0,0)\vert_{\mathcal{A}}+\frac{\gamma}{\sqrt{2\lambda_t\rho_1}} \Vert w\Vert_\infty\\
&+\sqrt{\omega_2\frac{e^{4\lambda_t T_1}}{e^{2\lambda_t T_1}-1}}\Vert \eta\Vert_\infty\\
&\leq \max\left\{2\sqrt{\frac{\rho_2}{\rho_1}}e^{-\lambda_t t}\vert\phi(0,0)\vert_{\mathcal{A}},2\max\{\frac{\gamma}{\sqrt{2\lambda_t\rho_1}},\right.\\
&\qquad\qquad\qquad\qquad\left.\sqrt{\omega_2\frac{e^{4\lambda_t T_1}}{e^{2\lambda_t T_1}-1}} \} \Vert u\Vert_\infty\right\}
\end{array}
\label{eq:t-eiss}
\end{equation}
Using Lemma~\ref{lemma:tj}, one gets that relation \eqref{issBound} holds with $\lambda\in\left(0,\frac{\lambda_t T_1}{1+T_1}\right]$,
 $\kappa=2\sqrt{\frac{\rho_2}{\rho_1}}e^\omega$, where  $\omega\geq\lambda$, and 
 $$
 s\mapsto\rho(s)\coloneqq 2\max\left\{\frac{\gamma}{\sqrt{2\lambda_t\rho_1}},\sqrt{\omega_2\frac{e^{4\lambda_t T_1}}{e^{2\lambda_t T_1}-1}} \right\} s
 $$
Hence, since every maximal solution to $\mathcal{H}_e$ is complete, $(i)$ is established.

To establish $(ii)$, we follow a similar approach as in \cite{nevsic2013finite}. 
Let $(\phi, u)$ be a maximal solution pair to $\mathcal{H}_e$ with $u=(w,0)$. Pick any $t>0$, then thanks to Assumption~\ref{Ass:L2} item (A3), since, as shown in \eqref{eq:P2:Chap3:jump2}, $V\circ\phi$ is nonincreasing at jumps, direct integration of $(t, j)\mapsto V(\phi(t, j))$ yields
\begin{equation}
\begin{split}
&V(\phi(t,j))-V(\phi(0,0))\leq\!\!-2\lambda_t \int_{\mathcal{I}(t)}  V(\phi(s,j(s)))ds\\
&-\int_{\mathcal{I}(t)}  \varepsilon(s,j(s))\tr C_p\tr C_p \varepsilon(s,j(s))ds+\gamma^2 \int_{\mathcal{I}(t)} \!\vert w(s,j(s))\vert^2 ds
\end{split}
\end{equation} 
where $\mathcal{I}(t)\coloneqq[0, t]\cap\dom_t \phi$,
which implies
\begin{equation}
\begin{split}
\int_{\mathcal{I}(t)} \varepsilon(s,j(s))\tr C_p\tr C_p \varepsilon(s,j(s))ds& \leq V(\phi(0,0))\\
&+\gamma^2\int_{\mathcal{I}(t)} \vert w(s,j(s))\vert^2 ds
\end{split}
\end{equation} 
Therefore, by taking the limit for $t$ approaching $\sup_t\dom\phi$, thanks to \eqref{eq:P2:Chap3:Sandwich}, one gets $(ii)$ with $\alpha=\rho_2$. 

To show that the proposed observer solves Problem~\ref{prob:Problem1} as claimed in item $(iii)$, we show that (P1), (P2), and (P3) are fulfilled. 
Item $(i)$ already implies (P1) and (P2), since $\lambda_t$ defines a lower bound on the decay rate with respect to the ordinary time $t$; see \eqref{eq:t-eiss}. To show that $(ii)$ implies (P3), notice that since $(ii)$ holds for any solution pair $(\phi, u)$ with $\eta\equiv 0$ and $w$ any hybrid signal,  it holds in particular when the hybrid signal $w$ is obtained from a continuous-time signal of the original plant \eqref{eq:P2:Chap3:Plant}. Passing from hybrid signals $w$ and $y_p$ to right continuous signals $\tilde{u},\tilde{y}_p$, respectively, (see \cite{Mayhew2013}), item $(ii)$ leads to 
\begin{equation}
\begin{array}{ll}
&\sqrt{\int_{\mathcal{I}} \vert y_p(s, j(s))\vert^2 ds}=\sqrt{\int_{\mathcal{I}}\vert\tilde{y}_p(s)\vert^2 ds}=\Vert \tilde{y}_p\Vert_2\\
&\leq \alpha \vert(\varepsilon_0, \thetatilde_0)\vert+\gamma\sqrt{\int_{\mathcal{I}}\vert\tilde{w}(s)\vert^2 ds}=\alpha \vert(\varepsilon_0, \thetatilde_0)\vert+\gamma\Vert \tilde{w}\Vert_2
\end{array}
\end{equation}
hence concluding the proof.
\end{proof}
\subsection{Construction of the functions $V_1$ and $V_2$ in Assumption~\ref{Ass:L2}}
With the aim of deriving constructive design strategies for the synthesis of the observer, we perform a particular choice for the functions $V_1, V_2$ in Assumption~\ref{Ass:L2}. 
Let $P_1\in\Spnz, P_2\in\Spny$, and $\delta$ be a positive real number. Inspired by \cite{FGZN_Auto2014},  we consider the following choice
\begin{equation}
\label{eq:V1V2}
V_1(\ep)=\ep\tr P_1\ep, \qquad  
V_2(\thetatilde, \tau)=e^{\delta\tau}\thetatilde\tr P_2\thetatilde 
\end{equation}
The structure selected above for the functions $V_1$ and $V_2$ essentially allows to exploit the (quasi)-quadratic nature of the resulting Lyapunov function candidate $x\mapsto V_1(\ep)+V_2(\thetatilde, \tau)$ to cast the solution to Problem~\ref{prob:Problem1} into the solution to certain matrix inequalities. 
\begin{theorem}
\label{Theorem1}
Let $\lambda_t$ and $\gamma$ be given positive real numbers. If there exist $P_1\in\Spnz,P_2\in\Spny$, positive real numbers $\delta, \chi$, and two matrices $L\in\mathbb{R}^{n_z\times n_y},H\in\mathbb{R}^{n_y\times n_y}$,  such that
\begin{equation}
\label{eq:M}
\mathcal{M}(0)\leq\0,\quad \mathcal{M}(T_2)\leq \0 
\end{equation}
where the function $[0, T_2]\ni\tau\mapsto\mathcal{M}(\tau)$ is defined in \eqref{eq:M1} (at the top of the next page), then Assumption~\ref{Ass:L2} holds. 
\end{theorem}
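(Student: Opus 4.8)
The plan is to verify items (A1)--(A3) of Assumption~\ref{Ass:L2} directly for the explicit choice \eqref{eq:V1V2} of $V_1$ and $V_2$, deriving (A3) from the two endpoint inequalities \eqref{eq:M} by a convexity argument. Items (A1) and (A2) are immediate: since $P_1\in\Spnz$, (A1) holds with $\alpha_1=\lambda_{\min}(P_1)>0$ and $\alpha_2=\lambda_{\max}(P_1)>0$; and since on $\mathcal{C}$ one has $\tau\in[0,T_2]$, hence $1\le e^{\delta\tau}\le e^{\delta T_2}$, with $P_2\in\Spny$ item (A2) holds with $\omega_1=\lambda_{\min}(P_2)>0$ and $\omega_2=e^{\delta T_2}\lambda_{\max}(P_2)>0$.

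For (A3) I would first compute the Lie derivative of $V(x)=V_1(\ep)+V_2(\thetatilde,\tau)$ along the flow map \eqref{eq:P2:Chap3:FlowMap}. The crucial structural fact is that $V$ does not depend on $z$, so $\nabla_z V(x)=0$ and the plant term $Az+B\psi(Sz)+Nw$ contributes nothing; this is precisely why no hypothesis on $A,B,S,N$ beyond the Lipschitz bound on $\psi$ is needed. Using $\dot\tau=-1$ and the partitioned matrices in \eqref{eq:P2:Chap3:FG}, one gets, for $x\in\mathcal{C}$ and $w\in\R^{n_w}$,
\[
\langle\nabla V(x),f(x,w)\rangle=\begin{pmatrix}2P_1\ep\\2e^{\delta\tau}P_2\thetatilde\end{pmatrix}\tr\left(\mathcal{F}\begin{pmatrix}\ep\\\thetatilde\end{pmatrix}+\mathcal{Q}\zeta(z,\ep)+\mathcal{T}w\right)-\delta e^{\delta\tau}\thetatilde\tr P_2\thetatilde .
\]
Hence, setting $\xi\coloneqq(\ep,\thetatilde,\zeta(z,\ep),w)$, the quantity $\langle\nabla V(x),f(x,w)\rangle+2\lambda_t V(x)+\ep\tr C_p\tr C_p\ep-\gamma^2 w\tr w$ is a quadratic form $\xi\tr\Xi(\tau)\xi$, and the target inequality \eqref{eq:A3} is exactly $\xi\tr\Xi(\tau)\xi\le0$.

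To handle the nonlinear term I would invoke an S-procedure argument: by \eqref{eq:LipsBound}, $\zeta(z,\ep)=\psi(Sz)-\psi(S(z-\ep))$ satisfies $|\zeta(z,\ep)|^2\le\ell^2\ep\tr S\tr S\ep$, so for any $\chi>0$ the term $\chi(\ell^2\ep\tr S\tr S\ep-\zeta\tr\zeta)$ is nonnegative along the dynamics. Adding it to $\xi\tr\Xi(\tau)\xi$ produces $\xi\tr\mathcal{M}(\tau)\xi$, where $\mathcal{M}(\tau)$ is the matrix of \eqref{eq:M1}; thus $\mathcal{M}(\tau)\le\0$ for every $\tau\in[0,T_2]$ is sufficient for \eqref{eq:A3}, since then $\xi\tr\Xi(\tau)\xi\le\xi\tr\mathcal{M}(\tau)\xi\le0$. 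It remains to reduce ``$\mathcal{M}(\tau)\le\0$ on $[0,T_2]$'' to the two conditions \eqref{eq:M}. Every occurrence of $\tau$ in $\mathcal{M}(\tau)$ enters through the scalar $e^{\delta\tau}$ weighting the $P_2$-blocks (coming from $\nabla_\thetatilde V$, $\partial_\tau V$, and the $-2\lambda_t V_2$ term), so $\mathcal{M}(\tau)=\mathcal{M}_0+e^{\delta\tau}\mathcal{M}_1$ is affine in $e^{\delta\tau}$. For $\tau\in[0,T_2]$, writing $e^{\delta\tau}=(1-\mu)\cdot1+\mu\,e^{\delta T_2}$ with $\mu=\tfrac{e^{\delta\tau}-1}{e^{\delta T_2}-1}\in[0,1]$ gives $\mathcal{M}(\tau)=(1-\mu)\mathcal{M}(0)+\mu\,\mathcal{M}(T_2)$, so \eqref{eq:M} yields $\mathcal{M}(\tau)\le\0$ throughout $[0,T_2]$, and (A3) follows.

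I expect the main obstacle to be purely computational: carrying out the expansion of the $\mathcal{F}$, $\mathcal{Q}$, $\mathcal{T}$ products against the gradient, organizing all cross terms into the $(\ep,\thetatilde,\zeta,w)$ block structure and verifying that, after the S-procedure completion, they coincide blockwise with $\mathcal{M}(\tau)$ in \eqref{eq:M1}, all while keeping careful track of the $e^{\delta\tau}$ weights so that the affine-in-$e^{\delta\tau}$ property -- hence the reduction to the endpoint conditions \eqref{eq:M} -- is correctly justified.
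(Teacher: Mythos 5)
Your proposal is correct and follows essentially the same route as the paper's proof: the same constants $\alpha_1=\lambda_{\min}(P_1)$, $\alpha_2=\lambda_{\max}(P_1)$, $\omega_1=\lambda_{\min}(P_2)$, $\omega_2=e^{\delta T_2}\lambda_{\max}(P_2)$ for (A1)--(A2), the same S-procedure term $\chi(\ell^2\ep\tr S\tr S\ep-\zeta\tr\zeta)$ to absorb the nonlinearity into the quadratic form $\mathcal{M}(\tau)$, and the same affine-in-$e^{\delta\tau}$ convexity argument (which the paper isolates as Lemma~\ref{claim:RangeM}) to reduce $\mathcal{M}(\tau)\leq\0$ on $[0,T_2]$ to the two endpoint conditions \eqref{eq:M}. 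The only cosmetic discrepancy is that your vector $\xi=(\ep,\thetatilde,\zeta,w)$ is ordered differently from the block ordering $(\ep,\thetatilde,w,\zeta)$ of $\mathcal{M}(\tau)$ in \eqref{eq:M1}, which you would fix in the bookkeeping step you already flag.
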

\begin{proof}
Let $V_1$ and $V_2$ be defined as in \eqref{eq:V1V2} and select $\alpha_1=\lambda_{\min}(P_1),\omega_1=\lambda_{\min}(P_2)$,
$\alpha_2=\lambda_{\max}(P_1)$, and $\omega_2=\lambda_{\max}(P_2)e^{\delta T_2}$. Then, it turns out that items (A1) and (A2) of Assumption~\ref{Ass:L2} are satisfied. Let $V(x)=V_1(\ep)+V_2(\thetatilde,\tau)$, then, by straightforward calculations and by the definition of the flow map in \eqref{eq:P2:Chap3:FlowMap}, it follows that for each $x\in \mathcal{C}$, $w\in\R^{n_w}$ one has 
\begin{equation}
\begin{split}
&\Omega(x,w)\coloneqq\langle\nabla V(x),f(x,w)\rangle+\ep\tr C_p\tr C_p\ep+2\lambda_t V(x)-\\
&\gamma^2 w\tr w=\ep\tr \He(P_1(A-LC))\ep+2\ep\tr P_1L\tilde{\theta}+\\
&e^{\delta\tau}\tilde{\theta}\tr\He(P_2(CL+H))\tilde{\theta}+\ep\tr C_p\tr C_p\ep+\\
&2e^{\delta\tau}\tilde{\theta}\tr P_2(CA-CLC-HC)\ep-\delta e^{\delta\tau}\tilde{\theta}\tr P_2\tilde{\theta}+\\
&2\ep\tr P_1Nw+2e^{\delta\tau}\thetatilde\tr P_2 C N w+2\lambda_t (\ep\tr P_1\ep+e^{\delta\tau}\thetatilde\tr P_2 \thetatilde)\\
&-\gamma^2 w\tr w+2\ep\tr P_1  B\zeta(z,\ep)+2e^{\delta\tau}\thetatilde\tr P_2 C B\zeta(z,\ep)
\end{split}
\end{equation}
Moreover, observe that thanks to \eqref{eq:LipsBound}, for any positive real number $\chi$ one has that 
$$
\Omega(x,w)\leq \underbrace{\Omega(x,w)-\chi (\zeta(z,\ep)\tr \zeta(z,\ep)-\ell^2 \ep\tr S\tr S\ep)}_{\Pi(x,w)}\,\,\forall x\in\mathcal{C}
$$
Therefore, by defining $\Psi(x,w)=(\varepsilon,\thetatilde, w,\zeta(z,\ep))$, for each $x\in \mathcal{C}, w\in\R^{n_w}$ one has
$\Omega(x,w)\leq \Pi(x,w)=\Psi(x,w)\tr\mathcal{M}(\tau)\Psi(z,w)$,
where the symmetric matrix $\mathcal{M}(\tau)$ is defined in \eqref{eq:M1}.
\begin{figure*}
{\small
\begin{equation}
\label{eq:M1}
\mathcal{M}(\tau)=\begin{pmatrix}
\He(P_1(A-LC))+2\lambda_t P_1+C_p\tr C_p+\chi\ell^2 S\tr S&\!P_1L+e^{\delta\tau}(CA-CLC-HC)\tr\!P_2&P_1N&P_1 B\\
\bullet&e^{\delta\tau}\!\left(\He(P_2(CL+H))+(2\lambda_t-\delta)P_2\right)&e^{\delta\tau}P_2CN&e^{\delta\tau} P_2 C B\\
\bullet&\bullet&-\gamma^2\Id_{n_w}&0\\
\bullet&\bullet&\bullet&-\chi\Id_{n_s}
\end{pmatrix}
\end{equation}}
\end{figure*}
To conclude this proof, notice that it is straightforward to show that there exists $\lambda\colon [0,\tau]\rightarrow[0,1]$ such that for each $\tau\in[0, T_2]$, $\mathcal{M}(\tau)=\lambda(\tau)\mathcal{M}(0)+(1-\lambda(\tau))\mathcal{M}(T_2)$; see Lemma~\ref{claim:RangeM}.
Therefore, it follows that the satisfaction of \eqref{eq:M} implies $\mathcal{M}(\tau)\leq\0$ for each $\tau\in [0, T_2]$. Hence, the result is established.
\end{proof}
\begin{remark}
Theorem~\ref{Theorem1} can be easily adapted to get a solution to Problem~\ref{prob:Problem1} for linear plants, \ie, when $\psi\equiv 0$. In particular, in such case a sufficient condition for the satisfaction of Assumption~\ref{Ass:L2} can be obtained by eliminating the forth row and the forth column from matrix $\mathcal{M}$ in Theorem~\ref{Theorem1} and by enforcing $\chi=0$.
\end{remark}
\begin{remark}
Notice that, for it to be feasible, condition \eqref{eq:M} requires the existence of $L\in\R^{n_z\times n_y}$ such that 
$\Vert \mathcal{T}\Vert_{\infty}\leq \gamma$, where $\mathbb{C}\ni s\mapsto \mathcal{T}(s)\coloneqq C_p (s\Id-(A-LC+\lambda_t\Id))^{-1}N$ and $\Vert\cdot \Vert_{\infty}$ stands for the $\mathcal{H}_\infty$ norm of its argument\footnote{To show this claim it suffices to observe that the satisfaction of \eqref{eq:M} implies
$\left(\begin{smallmatrix}
\He(P_1(A-LC))+2\lambda_t P_1+C_p\tr C_p&P_1N\\
\bullet&-\gamma^2\Id_{n_w}
\end{smallmatrix}\right)\leq \0$ which turns out to be equivalent to $\Vert \mathcal{T}\Vert_{\infty}\leq \gamma$; see \cite{Boyd}.}. Nevertheless, this condition is, in general, only necessary. 
\end{remark}
Although, for a given instance of Problem~\ref{prob:Problem1}, the search of feasible solutions to \eqref{eq:M} needs to be performed via numerical methods, it is worthwhile to provide minimum requirements to ensure,  at least for suitable values of $T_2,\lambda_t$ (small) and $\gamma$ (large), the feasibility of \eqref{eq:M}.
To this end, being  the satisfaction of \eqref{eq:M} equivalent to the satisfaction of item (A3) in Assumption~\ref{Ass:L2} (for the particular choice of the functions $V_1$ and $V_2$ in \eqref{eq:V1V2}),
one only needs to analyze under which conditions there exists a suitable selection of the real numbers $T_2,\lambda_t,\gamma$ that allows to fulfill (A3). This is illustrated in the result given next.
\begin{proposition}
\label{pro:Existence}
If there exist $L\in\mathbb{R}^{n_z\times n_y}$, $P_1\in\Spnz$, and $\lambda_t, \chi,\hat{\gamma}\in\R_{>0}$ such that
\begin{equation}
\label{eq:ExistenceLMI}
\left(\begin{smallmatrix}
\He(P_1(A-LC))+C_p\tr C_p+\chi\ell^2 S\tr S&P_1N&P_1 B\\
\bullet&-\hat{\gamma}^2\Id_{n_w}&\0\\
\bullet&\bullet&-\chi\Id_{n_s}
\end{smallmatrix}\right)<\0
\end{equation} 
Then, there exist four positive real numbers $T_2^\star,\gamma^\star,\delta,\lambda_t^\star$, and $P_2\in\Spny$ such that the function $x\mapsto V(x)\coloneqq \ep\tr P_1\ep+e^{\delta\tau}\thetatilde\tr P_2\thetatilde$ satisfies
$$\langle\nabla V(x), f(x,w)\rangle\leq -2\lambda_t^{\star} V(x)-\varepsilon\tr C_p\tr C_p \varepsilon+\gamma^{\star 2} w\tr w$$ 
for each $(x,w)\in\R^{2n_z+n_y}\times [0, T_2^\star]\times\R^{n_w}$.
\end{proposition}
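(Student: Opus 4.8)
The plan is to exhibit explicit values for the free quantities so that the matrix $\mathcal{M}(\tau)$ of \eqref{eq:M1}, evaluated at these values, is negative semidefinite for every $\tau\in[0,T_2^\star]$; since the proof of Theorem~\ref{Theorem1} shows (via the Lipschitz bound \eqref{eq:LipsBound}) that $\langle\nabla V(x),f(x,w)\rangle+\ep\tr C_p\tr C_p\ep+2\lambda_t V(x)-\gamma^2 w\tr w\leq\Psi(x,w)\tr\mathcal{M}(\tau)\Psi(x,w)$, this immediately yields the claimed dissipation inequality on the set $\tau\in[0,T_2^\star]$. Concretely, I keep $L,P_1,\chi$ as in the hypothesis, take $\gamma^\star=\hat\gamma$, fix $P_2=\Id$ and $H=\0$ (any fixed $H$ works after obvious changes), and then choose $\lambda_t^\star$ small, $\delta$ large, and $T_2^\star$ small, in that order.

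First I would use \eqref{eq:ExistenceLMI} and a continuity argument. After a symmetric permutation moving the $\thetatilde$-block of $\mathcal{M}(\tau)$ to the last position, the leading $(\ep,w,\zeta)$-block $\mathcal{M}_{11}$ is independent of $\tau,\delta,P_2,H$, and at $\lambda_t=0,\gamma=\hat\gamma$ it equals the left-hand side of \eqref{eq:ExistenceLMI}, hence is negative definite. Since $\mathcal{M}_{11}$ depends affinely on $\lambda_t$ through the term $2\lambda_t\Diag(P_1,\0,\0)\geq\0$, there is $\lambda_t^\star>0$ for which $\mathcal{M}_{11}$ remains negative definite; set $m\coloneqq 1/\lambda_{\min}(-\mathcal{M}_{11})>0$. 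A Schur complement of $\mathcal{M}_{11}$ then shows that $\mathcal{M}(\tau)\leq\0$ is implied by $\mathcal{M}_{22}(\tau)+m\norm{B_{12}(\tau)}^2\Id\leq\0$, where $\mathcal{M}_{22}(\tau)=e^{\delta\tau}(\He(CL)+(2\lambda_t^\star-\delta)\Id)$ is the $\thetatilde$-block and $B_{12}(\tau)$ collects the three $\thetatilde$ cross-terms. Bounding $\He(CL)\leq\norm{\He(CL)}\Id$ gives $\mathcal{M}_{22}(\tau)\leq-e^{\delta\tau}\beta\,\Id$ with $\beta\coloneqq\delta-2\lambda_t^\star-\norm{\He(CL)}$, while $\norm{B_{12}(\tau)}\leq a+e^{\delta\tau}c$ for the fixed constants $a\coloneqq\norm{P_1L}$ and $c\coloneqq\norm{C(A-LC)}+\norm{CN}+\norm{CB}$. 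Dividing through by $e^{\delta\tau}\geq1$ and maximizing over $\tau\in[0,T_2^\star]$, it therefore suffices that $\beta\geq 2ma^2+2m e^{\delta T_2^\star}c^2$.

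The last step closes the loop between $\delta$ and $T_2^\star$: set $T_2^\star=1/\delta$, so that $e^{\delta T_2^\star}=e$ and the requirement becomes $\delta\geq 2\lambda_t^\star+\norm{\He(CL)}+2ma^2+2me\,c^2$, which holds for all sufficiently large $\delta$. Fixing such a $\delta$ (and the corresponding $T_2^\star=1/\delta$) gives $\mathcal{M}(\tau)\leq\0$ for every $\tau\in[0,T_2^\star]$, and the reduction recalled above delivers the inequality $\langle\nabla V(x),f(x,w)\rangle\leq-2\lambda_t^\star V(x)-\ep\tr C_p\tr C_p\ep+\gamma^{\star2}w\tr w$ for all $(x,w)$ with $\tau\in[0,T_2^\star]$, with $P_2=\Id\in\Spny$.

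The continuity/perturbation step and the norm bookkeeping in the Schur complement are routine. The one point that needs care — and the reason the estimate cannot be obtained by simply sending $P_2\to\0$ — is the cross term $2\ep\tr P_1L\thetatilde$, whose coefficient $P_1L$ is fixed and nonvanishing; it must be absorbed by the term $-\delta e^{\delta\tau}\thetatilde\tr P_2\thetatilde$, which forces $\delta$ to be large, but then the factors $e^{\delta\tau}$ multiplying $P_2$ inside $B_{12}(\tau)$ threaten to blow up. I expect the main obstacle to be making this trade-off non-circular, and the choice $T_2^\star=1/\delta$ is precisely what decouples the two requirements.
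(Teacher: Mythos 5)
Your proof is correct, but it follows a genuinely different route from the paper's. The paper never touches the matrix $\mathcal{M}(\tau)$ in its argument: it works scalar-by-scalar on the Lyapunov derivative, splitting it into the $V_1$ and $V_2$ contributions, absorbing the cross terms ($2\ep\tr P_1\mathcal{F}_{12}\thetatilde$, the $\zeta$ terms, and the $w$ terms) by repeated completion of squares with free parameters $\beta_1,\beta_2,\beta_3$, and then reading off explicit formulas for $\delta$, $\lambda_t^\star$, $T_2^\star<\frac{1}{\delta}\ln(\cdot)$ and $\gamma^\star=\sqrt{\frac{1}{\beta_2}e^{\delta T_2^\star}+\hat{\gamma}^2}$, keeping $P_2$ general. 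You instead work at the level of \eqref{eq:M1}: a symmetric permutation isolates the $(\ep,w,\zeta)$ principal block, which at $\lambda_t=0$, $\gamma=\hat{\gamma}$ is exactly the left-hand side of \eqref{eq:ExistenceLMI}, a perturbation argument recovers $\lambda_t^\star>0$, and a Schur complement with the crude bound $(-\mathcal{M}_{11})^{-1}\leq m\Id$ pushes everything into the $\thetatilde$ block, where the choice $T_2^\star=1/\delta$ breaks the circularity between ``$\delta$ large'' and ``$e^{\delta\tau}$ bounded'' — this plays exactly the role of the paper's constraint $T_2^\star<\frac{1}{\delta}\ln(\beta_1\lambda_{\min}(P_1)\xi_1/(2\lambda_{\max}(P_1)))$. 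Your route has two advantages: it yields $\gamma^\star=\hat{\gamma}$ with no inflation, and it actually establishes feasibility of \eqref{eq:M} itself (via $\mathcal{M}(\tau)\leq\0$ on $[0,T_2^\star]$), which is the stronger statement the surrounding discussion really cares about; the price is the arbitrary normalization $P_2=\Id$, $H=\0$, which is harmless for an existence claim. The paper's route keeps $P_2$ free and produces fully explicit constants in terms of the problem data. All the individual steps you flag as routine (the affine dependence of $\mathcal{M}_{11}$ on $\lambda_t$, the bound $\norm{B_{12}(\tau)}\leq a+e^{\delta\tau}c$, and the reduction of the dissipation inequality to $\Psi\tr\mathcal{M}(\tau)\Psi\leq 0$ from the proof of Theorem~\ref{Theorem1}) do check out.
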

\begin{proof}
From \eqref{eq:ExistenceLMI}, one has that there exist positive real numbers $\xi_1, \xi_2$ and a matrix $P_1\in\Spnz$ such that for each $(\ep,w)\in\R^{n_z+n_w}$
$$
\def\arraystretch{1.5}
\begin{array}{ll}
&\langle \nabla \underbrace{\ep\tr P_1\ep}_{V_1(\ep)}, \mathcal{F}_{11}\varepsilon+ B\zeta(z,\ep)+\mathcal{F}_{12}\thetatilde +Nw\rangle-\\
&\chi\zeta(z,\ep)\tr\!\zeta(z,\ep)+\chi\ell^2\ep\tr S\tr S\ep\leq -\xi_1 \ep\tr \ep+\hat{\gamma}^2 w\tr w-\\
&\ep\tr C_p\tr C_p\ep+2\ep\tr P_1\mathcal{F}_{12}\thetatilde-\xi_2 \zeta(z,\ep)\tr\zeta(z,\ep)
\end{array}
$$
which, by squares completion, gives
$$
\def\arraystretch{1.5}
\begin{array}{ll}
\langle \nabla \underbrace{\ep\tr P_1\ep}_{V_1(\ep)}, \mathcal{F}_{11}\varepsilon+ B\xi(z,\ep)+\mathcal{F}_{12}\thetatilde+ Nw\rangle-\chi\zeta(z,\ep)\tr\zeta(z,\ep)\\
+\chi\ell^2\ep\tr S\tr S\ep\leq-\frac{\xi_1 V_1(\ep)}{2\lambda_{\max}(P_1)}+\hat{\gamma}^2 w\tr w-\ep\tr C_p\tr C_p\ep\\
\hspace{2.1cm}+\frac{2\lambda_{\max}(P_1)\vert\mathcal{F}_{12}\tr P_1^2 \mathcal{F}_{12}\vert}{\xi_1}\thetatilde\tr\thetatilde-\xi_2 \zeta(z,\ep)\tr\zeta(z,\ep)
\end{array}
$$

Moreover, still by squares completion, for each $\beta_1,\beta_2,\beta_3\in\R_{>0}$ one has for every $(\ep,\thetatilde,w)\in\R^{n_z+n_y+n_w}$ and any $P_2\in\Spny$
$$
\begin{array}{ll}
&\langle \nabla \underbrace{\thetatilde\tr P_2\thetatilde}_{\widetilde{V_2}(\thetatilde)}, \mathcal{F}_{22}\thetatilde+\mathcal{F}_{21}\ep+C B\zeta(z,\ep)+CNw\rangle\leq \\
&\thetatilde\tr Q\thetatilde+\frac{1}{\beta_1}\ep\tr\ep\!+\!\frac{w\tr w}{\beta_2}\leq\frac{\lambda_{\max}(Q)}{\lambda_{\max}(P_2)}\widetilde{V_2}(\thetatilde)+
\frac{V_1(\varepsilon)}{\beta_1\lambda_{\min}(P_1)}\\
&\hspace{3.4cm}+\frac{1}{\beta_2}w\tr w+\frac{1}{\beta_3}\zeta(z,\ep)\tr \zeta(z,\ep)
\end{array}
$$
where $Q\coloneqq \He(P_2\mathcal{F}_{21})+P_2(\beta_1\mathcal{F}_{21}\mathcal{F}_{21}\tr+\beta_2 CN N\tr C\tr+\beta_3 C B  B\tr C\tr)P_2$. Therefore, for each $(x,w)\in \R^{2n_z+n_y}\times\R_{\geq 0}\times\R^{n_w}$ and any real positive number $\delta$ one has
$$
\def\arraystretch{1.5}
\begin{array}{ll}
&\langle \nabla V(x),f(x,w)\rangle-\chi(\zeta(z,\ep)\tr\zeta(z,\ep)-\ell^2\ep\tr S\tr S\ep)\leq\\ 
&V_1(\ep)\left(\frac{-\xi_1}{2\lambda_{\max}(P_1)}+\frac{e^{\delta \tau}}{\beta_1\lambda_{\min}(P_1)}\right)\\
&+V_2(\thetatilde)\left(-\delta+\frac{2e^{-\delta\tau}\lambda_{\max}(P_1)\vert\mathcal{F}_{12}\tr P_1^2 \mathcal{F}_{12}\vert}{\xi_1\lambda_{\min}(P_2)}+\frac{\lambda_{\max}(Q)}{\lambda_{\max}(P_2)}\right)\\
&+(\frac{1}{\beta_2} e^{\delta\tau}+\hat{\gamma}^2)w\tr w-\ep\tr C_p\tr C_p\ep+(\frac{1}{\beta_3}-\xi_2)\zeta(z,\ep)\tr\zeta(z,\ep)
\end{array}
$$
Pick $\beta_1$ large enough such that $\frac{\beta_1\lambda_{\min}(P_1)\xi_1}{2\lambda_{\max}(P_1)}\!>\!1$ and pick  $\frac{1}{\xi_2}=\beta_3$
then, by selecting 
$$
\begin{aligned}
&\delta>\frac{2\lambda_{\max}(P_1)\vert\mathcal{F}_{12}\tr P_1^2 \mathcal{F}_{12}\vert}{\xi_1\lambda_{\min}(P_2)}+\frac{\lambda_{\max}(Q)}{\lambda_{\max}(P_2)}\\
&\lambda_t^\star=\frac{1}{2}\min\left\{\left\vert\frac{-\xi_1}{2\lambda_{\max}(P_1)}+\frac{e^{\delta T_2^\star}}{\beta_1\lambda_{\min}(P_1)}\right\vert\right.,\\
&\qquad\left.\left\vert-\delta+\frac{2\lambda_{\max}(P_1)\vert\mathcal{F}_{12}\tr P_1^2 \mathcal{F}_{12}\vert}{\xi_1\lambda_{\min}(P_2)}+\frac{\lambda_{\max}(Q)}{\lambda_{\max}(P_2)}\right\vert\right\}\\
&T_2^\star<\frac{1}{\delta}\ln\left(\frac{\beta_1\lambda_{\min}(P_1)\xi_1}{2\lambda_{\max}(P_1)}\right), \gamma^{\star}=\sqrt{\frac{1}{\beta_2} e^{\delta T_2^\star}+\hat{\gamma}^2}
\end{aligned}
$$
which are all strictly positive real numbers, thanks to \eqref{eq:LipsBound},
one has for all $(x,w)\in\R^{2n_z+n_y}\times [0, T_2^\star]\times\R^{n_w}$,
$
\langle \nabla V(x),f(x,w)\rangle\!\!\leq -2\lambda_t^\star V(x)+\gamma^{\star 2}w\tr w-\ep\tr C_p\tr C_p\ep$,
concluding the proof.
\end{proof}
\begin{remark}
Essentially the above result shows that if a continuous-time Luemberger $\mathcal{L}_2$-gain observer, admitting a quadratic storage function, exists for \eqref{eq:P2:Chap3:Plant}, then, provided that $T_2$ is small enough, one can design via Theorem~\ref{Theorem1} an observer that solves Problem~\ref{prob:Problem1}. In the case of linear plants, the detectability of the pair $(A,C)$ is enough to guarantee the feasibility of the conditions stated in Theorem~\ref{Theorem1} adapted to the linear case.   
\end{remark}
In some applications, one may be interested in solving relaxed versions of Problem~\ref{prob:Problem1}, by for example, avoiding either to prescribe a certain decay rate or to consider a specific $\mathcal{L}_2$ gain. In such cases, the following results may be of interest. The proofs of such results consist of mere manipulations of the matrix inequalities \eqref{eq:M}, then, due to space limitations, we preferred to omit those proofs. 
\begin{corollary}
\label{Coro:WoGamma}
Let $\lambda_t$ be a given positive real number. If there exist $P_1\in\Spnz,P_2\in\Spny$, positive  real numbers $\delta, \chi$, and two matrices $L\in\mathbb{R}^{n_z\times n_y},H\in\mathbb{R}^{n_y\times n_y}$, such that 
\begin{equation}
\label{eq:M1M2Strict1}
\begin{aligned}
&\left(\begin{smallmatrix}
\Id&\0&\0&\0\\
\0&\Id&\0&\0\\
\0&\0&\0&\Id\\
\end{smallmatrix}\right)\mathcal{M}(0)\left(\begin{smallmatrix}
\Id&\0&\0\\
\0&\Id&\0\\
\0&\0&\0\\
\0&\0&\Id
\end{smallmatrix}\right)<\0, \left(\begin{smallmatrix}
\Id&\0&\0&\0\\
\0&\Id&\0&\0\\
\0&\0&\0&\Id\\
\end{smallmatrix}\right)\mathcal{M}(T_2)\left(\begin{smallmatrix}
\Id&\0&\0\\
\0&\Id&\0\\
\0&\0&\0\\
\0&\0&\Id
\end{smallmatrix}\right)<\0
\end{aligned}
\end{equation} 
then items $(i)$ and $(ii)$ in Theorem~\ref{theorem:P2:Chap3:Main} hold for some $\gamma>0$. \QEDB
\end{corollary}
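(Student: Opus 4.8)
The plan is to derive the result from Theorem~\ref{Theorem1} by observing that the \emph{strict} inequalities \eqref{eq:M1M2Strict1} leave enough slack to recover the non-strict inequalities \eqref{eq:M} once a sufficiently large $\gamma$ is selected. First I would note that the two products appearing on the left-hand sides of \eqref{eq:M1M2Strict1} are exactly the principal submatrices of $\mathcal{M}(0)$ and $\mathcal{M}(T_2)$ obtained by deleting the third block row and column of $\mathcal{M}(\tau)$ in \eqref{eq:M1}, i.e., the block associated with the disturbance $w$. Denote by $N(\tau)$ this principal submatrix, acting on the variables $(\varepsilon,\thetatilde,\zeta)$, and let $r(\tau)\coloneqq(P_1N,\ e^{\delta\tau}P_2CN,\ \0)$ collect the off-diagonal coupling of the $w$-block with the remaining blocks. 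Up to a block permutation $\Pi$ that moves the $w$-block last, one has
\[
\Pi\,\mathcal{M}(\tau)\,\Pi\tr=\begin{pmatrix}N(\tau)&r(\tau)\\ r(\tau)\tr&-\gamma^2\Id_{n_w}\end{pmatrix},
\]
and, crucially, neither $N(\tau)$ nor $r(\tau)$ depends on $\gamma$.

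Next I would apply a Schur complement with respect to the block $-\gamma^2\Id_{n_w}\prec\0$: for every $\tau$ and every $\gamma>0$, one has $\mathcal{M}(\tau)\leq\0$ if and only if $N(\tau)+\gamma^{-2}r(\tau)r(\tau)\tr\leq\0$. By hypothesis, $N(0)<\0$ and $N(T_2)<\0$, so there exist $\epsilon_0,\epsilon_{T_2}>0$ with $N(0)\leq-\epsilon_0\Id$ and $N(T_2)\leq-\epsilon_{T_2}\Id$. Since $r(0)r(0)\tr$ and $r(T_2)r(T_2)\tr$ are fixed positive semidefinite matrices, choosing
\[
\gamma\geq\max\left\{1,\ \sqrt{\lambda_{\max}(r(0)r(0)\tr)/\epsilon_0},\ \sqrt{\lambda_{\max}(r(T_2)r(T_2)\tr)/\epsilon_{T_2}}\right\}
\]
guarantees $\gamma^{-2}r(0)r(0)\tr\leq\epsilon_0\Id$ and $\gamma^{-2}r(T_2)r(T_2)\tr\leq\epsilon_{T_2}\Id$, hence $N(0)+\gamma^{-2}r(0)r(0)\tr\leq\0$ and $N(T_2)+\gamma^{-2}r(T_2)r(T_2)\tr\leq\0$, i.e., $\mathcal{M}(0)\leq\0$ and $\mathcal{M}(T_2)\leq\0$. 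This is precisely \eqref{eq:M}, so Theorem~\ref{Theorem1} applies with this $\gamma$ and the given $\lambda_t$, yielding Assumption~\ref{Ass:L2}; items $(i)$ and $(ii)$ of Theorem~\ref{theorem:P2:Chap3:Main} then follow for this value of $\gamma$.

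I do not expect a genuine obstacle here: the argument is a routine Schur-complement manipulation, and the only point that needs (minor) attention is that a single $\gamma$ must work simultaneously at the two endpoints $\tau\in\{0,T_2\}$, which is handled by taking the maximum of the two thresholds above. If one prefers a cleaner statement, Lemma~\ref{claim:RangeM} can be invoked first to upgrade \eqref{eq:M1M2Strict1} to $N(\tau)<\0$ for all $\tau\in[0,T_2]$ and then pick $\gamma$ uniformly on the compact interval $[0,T_2]$ (using boundedness of $\tau\mapsto r(\tau)r(\tau)\tr$); this strengthening is unnecessary, however, since Theorem~\ref{Theorem1} only requires the two endpoint conditions.
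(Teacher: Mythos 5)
Your proof is correct: the two products in \eqref{eq:M1M2Strict1} are indeed the principal submatrices of $\mathcal{M}(0)$ and $\mathcal{M}(T_2)$ with the $w$-block deleted, these and the coupling terms $r(\tau)$ are independent of $\gamma$, and the Schur complement with respect to $-\gamma^2\Id_{n_w}$ shows that a sufficiently large common $\gamma$ recovers \eqref{eq:M}, after which Theorem~\ref{Theorem1} and Theorem~\ref{theorem:P2:Chap3:Main} give the claim. The paper explicitly omits this proof, describing it as a ``mere manipulation of the matrix inequalities \eqref{eq:M}'', and your Schur-complement argument is precisely that manipulation, so there is nothing to reconcile.
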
 
\begin{corollary}
\label{Coro:WoLambda}
Let $\gamma$ be a given positive real number. If there exist $P_1\in\Spnz,P_2\in\Spny$, positive real numbers $\delta, \chi$, and two matrices $L\in\mathbb{R}^{n_z\times n_y},H\in\mathbb{R}^{n_y\times n_y}$, such that 
\begin{equation}
\label{eq:M1M2Strict}
\begin{aligned}
&\mathcal{M}(0)\vert_{\lambda_t=0}<\0,\quad\mathcal{M}(T_2)\vert_{\lambda_t=0}<\0
\end{aligned}
 \end{equation} 
then $\mathcal{A}$ is eISS, for the hybrid system $\mathcal{H}_e$ with decay rate (with respect to flow time)
$\frac{\beta}{2\rho_2}$, where 
$$\beta=-\max_{\tau\in[0, T_2]}\lambda_{\max}(\mathcal{M}(\tau)\vert_{\lambda_t=0})$$
$\rho_2=\max\{\lambda_{\max}(P_1),\lambda_{\max}(P_2)e^{\delta T_2}\}$, and item $(ii)$ in Problem~\ref{prob:Problem1} hold. \QEDB
\end{corollary}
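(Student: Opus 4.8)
The plan is to specialize the proof of Theorem~\ref{Theorem1} to the case $\lambda_t=0$, and then to read off a strictly positive flow-time decay rate from the fact that $\mathcal M(\tau)\vert_{\lambda_t=0}$ stays negative definite over the whole timer interval $[0,T_2]$.

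First I would observe that, as in the proof of Theorem~\ref{Theorem1}, $\mathcal M(\tau)$ has the form $\mathcal M_0+e^{\delta\tau}\mathcal M_1$ for constant symmetric matrices $\mathcal M_0,\mathcal M_1$, and that setting $\lambda_t=0$ only alters $\mathcal M_0$. Consequently Lemma~\ref{claim:RangeM} still applies: there is $\lambda\colon[0,T_2]\to[0,1]$ with $\mathcal M(\tau)\vert_{\lambda_t=0}=\lambda(\tau)\mathcal M(0)\vert_{\lambda_t=0}+(1-\lambda(\tau))\mathcal M(T_2)\vert_{\lambda_t=0}$, so \eqref{eq:M1M2Strict} forces $\mathcal M(\tau)\vert_{\lambda_t=0}<\0$ for every $\tau\in[0,T_2]$. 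Since $\tau\mapsto\lambda_{\max}(\mathcal M(\tau)\vert_{\lambda_t=0})$ is continuous on the compact set $[0,T_2]$, its maximum $-\beta$ is attained and is strictly negative; hence $\beta>0$ and $\mathcal M(\tau)\vert_{\lambda_t=0}\le-\beta\Id$ uniformly on $[0,T_2]$.

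Next I would rerun the quadratic-form computation in the proof of Theorem~\ref{Theorem1} with $\lambda_t=0$ and $V_1,V_2$ as in \eqref{eq:V1V2}: writing $\Psi(x,w)=(\ep,\thetatilde,w,\zeta(z,\ep))$, for each $x\in\mathcal C$, $w\in\R^{n_w}$ one still has $\langle\nabla V(x),f(x,w)\rangle+\ep\tr C_p\tr C_p\ep-\gamma^2 w\tr w\le\Psi(x,w)\tr\mathcal M(\tau)\vert_{\lambda_t=0}\Psi(x,w)\le-\beta\vert\Psi(x,w)\vert^2\le-\beta(\vert\ep\vert^2+\vert\thetatilde\vert^2)$. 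Because $\tau\in[0,T_2]$ on $\mathcal C$, $V(x)=\ep\tr P_1\ep+e^{\delta\tau}\thetatilde\tr P_2\thetatilde\le\rho_2(\vert\ep\vert^2+\vert\thetatilde\vert^2)$ with $\rho_2=\max\{\lambda_{\max}(P_1),\lambda_{\max}(P_2)e^{\delta T_2}\}$, so replacing $\vert\ep\vert^2+\vert\thetatilde\vert^2$ by $\rho_2^{-1}V(x)$ yields $\langle\nabla V(x),f(x,w)\rangle\le-\tfrac{\beta}{\rho_2}V(x)-\ep\tr C_p\tr C_p\ep+\gamma^2 w\tr w$. This is exactly item (A3) of Assumption~\ref{Ass:L2} with $\lambda_t$ replaced by $\tfrac{\beta}{2\rho_2}$, while items (A1)--(A2) hold with the same constants as in the proof of Theorem~\ref{Theorem1}. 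Invoking Theorem~\ref{theorem:P2:Chap3:Main} then gives that $\mathcal H_e$ is eISS with respect to $\mathcal A$ with flow-time decay rate $\tfrac{\beta}{2\rho_2}$ (cf.~\eqref{eq:t-eiss}), and that item~$(ii)$ of Theorem~\ref{theorem:P2:Chap3:Main} (equivalently, property (P3) of Problem~\ref{prob:Problem1} with the prescribed gain $\gamma$) holds.

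I expect the only non-routine point to be the strict positivity of $\beta$: it hinges on the observation that dropping the $\lambda_t$-term leaves intact the affine-in-$e^{\delta\tau}$ (hence convex-combination) structure of $\mathcal M(\tau)$, so the two strict endpoint inequalities in \eqref{eq:M1M2Strict} propagate to the entire interval, and compactness upgrades this to the uniform bound $-\beta\Id$. Everything after that is bookkeeping — the one thing to track carefully is that converting the dissipation estimate $-\beta(\vert\ep\vert^2+\vert\thetatilde\vert^2)$ into a $-2\lambda_tV$ estimate costs exactly the factor $\rho_2$, which is what pins down the advertised rate $\tfrac{\beta}{2\rho_2}$.
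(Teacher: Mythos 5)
Your argument is correct, and it fills in a proof that the paper deliberately omits (the authors state only that the proofs of Corollaries~\ref{Coro:WoGamma} and~\ref{Coro:WoLambda} ``consist of mere manipulations of the matrix inequalities \eqref{eq:M}''). Your route is exactly the intended one: the affine-in-$e^{\delta\tau}$ structure of $\mathcal{M}(\tau)$ survives setting $\lambda_t=0$, so Lemma~\ref{claim:RangeM} propagates the two strict endpoint inequalities to $\mathcal{M}(\tau)\vert_{\lambda_t=0}\leq -\beta\Id$ on all of $[0,T_2]$ with $\beta>0$ by compactness, and the upper bound $V(x)\leq\rho_2(\vert\ep\vert^2+\vert\thetatilde\vert^2)$ converts the resulting dissipation inequality into item (A3) of Assumption~\ref{Ass:L2} with rate $\tfrac{\beta}{2\rho_2}$, after which Theorem~\ref{theorem:P2:Chap3:Main} delivers both the eISS claim with the advertised flow-time decay rate (via \eqref{eq:t-eiss}) and the $\mathcal{L}_2$ bound.
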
 
\section{LMI-based observer design} 
\label{sec:LMI} 
In the previous section, sufficient conditions turning the solution to Problem~\ref{prob:Problem1} into the feasibility problem of certain matrix inequalities were provided. However, due to their form, such conditions are in general not computationally tractable to provide a viable solution to Problem~\ref{prob:Problem1}. Indeed, condition \eqref{eq:M} is nonlinear in the design variables $P_1,P_2,\delta, H$, and $L$; so further work is needed to derive a computationally tractable design procedure for the proposed observer.
Specifically, the nonlinearities present in \eqref{eq:M} are due to both the bilinear terms involving the matrices $P_1,P_2,L,H$, and the real number $\delta$, as well as the fact that $\delta$ also appears in a nonlinear fashion via the exponential function. From a numerical standpoint, the nonlinearities involving the real number $\delta$ are easily manageable in a numerical scheme by treating $\delta$ as a tuning parameter or selecting it via an iterative search. The main issue to tackle concerns with the other nonlinearities present in \eqref{eq:M}. To address these, in the sequel, we provide several sufficient conditions to solve Problem~\ref{prob:Problem1} via the solution to some linear matrix inequalities, whose solution can be performed in polynomial time through numerical solvers; see \cite{Boyd}.
\begin{proposition}
\label{prop:P2:Chap3:propPred}
Let $\lambda_t, \gamma$ be given positive real numbers. If there exist $P_1\in\Spnz,P_2\in\Spny$, positive real numbers $\delta,\chi$, matrices $J\in\mathbb{R}^{n_z\times n_y}$ and $Y\in\mathbb{R}^{n_y\times n_y}$ such that \eqref{eq:P2:Chap3:Coro1} (at the top of the next page) holds
\begin{figure*}
\begin{equation}
\label{eq:P2:Chap3:Coro1}
 \begin{aligned}
 &\left(\begin{smallmatrix}
\He(P_1A-JC)+2\lambda_t P_1+C_p\tr C_p+\ell^2\chi S
\tr S&J+A\tr C\tr P_2-C\tr Y&P_1N&P_1 B\\
\bullet&\He(Y)+(2\lambda_t-\delta)P_2&P_2 C N&P_2C B\\
\bullet&\bullet&-\gamma^2 \Id&\0\\
\bullet&\bullet&\bullet&-\chi\Id
\end{smallmatrix}\right)\leq\0\\
&\left(\begin{smallmatrix}
\He(P_1A-JC)+2\lambda_t P_1+C_p\tr C_p+\ell^2\chi S
\tr S&J+e^{\delta T_2}(A\tr C\tr P_2-C\tr Y)&P_1N&P_1 B\\
\bullet&(\He(Y)+(2\lambda_t-\delta) P_2)e^{\delta T_2}&e^{\delta T_2}P_2CN&e^{\delta T_2}P_2C B\\
\bullet&\bullet&-\gamma^2 \Id&\0\\
\bullet&\bullet&\bullet&-\chi\Id
\end{smallmatrix}\right)\leq\0
\end{aligned}
\end{equation}
\end{figure*}
then $L=P_1^{-1}J,H=P_2^{-1}Y\tr-CL$ is a solution to Problem~\ref{prob:Problem1}.
\end{proposition}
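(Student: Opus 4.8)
The plan is to recognize \eqref{eq:P2:Chap3:Coro1} as the image of the matrix inequalities \eqref{eq:M} of Theorem~\ref{Theorem1} under a linearizing change of variables, and then to invoke Theorem~\ref{Theorem1} together with Theorem~\ref{theorem:P2:Chap3:Main}$(iii)$. First I would note that, since $P_1\in\Spnz$ and $P_2\in\Spny$ are positive definite and hence invertible, the matrices $L\coloneqq P_1^{-1}J$ and $H\coloneqq P_2^{-1}Y\tr-CL$ are well defined; with these choices one has $P_1L=J$ and, since $CL+H=P_2^{-1}Y\tr$, also $P_2(CL+H)=Y\tr$, so that $(CL+H)\tr P_2=Y$ (using $P_2=P_2\tr$) and $\He(P_2(CL+H))=\He(Y)$.

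Next I would substitute these identities blockwise into the definition \eqref{eq:M1} of $\mathcal{M}(\tau)$ and check that, for $L$ and $H$ as above, $\mathcal{M}(0)$ and $\mathcal{M}(T_2)$ coincide exactly with the two left-hand sides appearing in \eqref{eq:P2:Chap3:Coro1}. The $(1,1)$ block turns into $\He(P_1A-JC)+2\lambda_t P_1+C_p\tr C_p+\chi\ell^2 S\tr S$ because $\He(P_1(A-LC))=\He(P_1A-JC)$; the $(2,2)$ block becomes $e^{\delta\tau}(\He(Y)+(2\lambda_t-\delta)P_2)$ by the computation above; and the blocks $P_1N$, $P_1B$, $e^{\delta\tau}P_2CN$, $e^{\delta\tau}P_2CB$, $-\gamma^2\Id$, $-\chi\Id$ are left untouched. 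The only entry requiring a genuine manipulation is the $(1,2)$ block: writing $CA-CLC-HC=C(A-LC)-HC$, I would compute $(CA-CLC-HC)\tr P_2=A\tr C\tr P_2-C\tr L\tr C\tr P_2-C\tr H\tr P_2=A\tr C\tr P_2-C\tr(CL+H)\tr P_2=A\tr C\tr P_2-C\tr Y$, so the $(1,2)$ block equals $J+e^{\delta\tau}(A\tr C\tr P_2-C\tr Y)$, which matches \eqref{eq:P2:Chap3:Coro1} at $\tau=0$ and $\tau=T_2$.

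With this in hand, \eqref{eq:P2:Chap3:Coro1} is precisely the pair $\mathcal{M}(0)\leq\0$, $\mathcal{M}(T_2)\leq\0$ of \eqref{eq:M} for the chosen $L$ and $H$, so Theorem~\ref{Theorem1} gives that Assumption~\ref{Ass:L2} holds with $V_1,V_2$ as in \eqref{eq:V1V2}, and Theorem~\ref{theorem:P2:Chap3:Main}$(iii)$ then yields that the observer \eqref{eq:P2:Chap3:ObsSampleHold} with $L=P_1^{-1}J$, $H=P_2^{-1}Y\tr-CL$ solves Problem~\ref{prob:Problem1}. The main obstacle is essentially the bookkeeping in the $(1,2)$ block, namely checking that the single new variable $Y=(CL+H)\tr P_2$ simultaneously absorbs the $C\tr L\tr C\tr P_2$ and $C\tr H\tr P_2$ contributions; once the substitution is set up correctly, the rest is a direct term-by-term comparison of \eqref{eq:M1} with \eqref{eq:P2:Chap3:Coro1}.
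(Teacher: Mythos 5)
Your proposal is correct and follows exactly the paper's route: the paper's own proof is the one-line observation that substituting $L=P_1^{-1}J$ and $H=P_2^{-1}Y\tr-CL$ into \eqref{eq:M} yields \eqref{eq:P2:Chap3:Coro1}, after which Theorem~\ref{Theorem1} (and hence Theorem~\ref{theorem:P2:Chap3:Main}) applies. Your blockwise verification, in particular the identity $(CA-CLC-HC)\tr P_2=A\tr C\tr P_2-C\tr Y$ with $Y=(CL+H)\tr P_2$, is accurate and simply makes explicit the bookkeeping the paper leaves to the reader.
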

\begin{proof}
By setting $H=P_2^{-1}Y\tr-CL$ and $L=P^{-1}_1J$ in \eqref{eq:M} yields 
\eqref{eq:P2:Chap3:Coro1}, thus by the virtue of Theorem~\ref{Theorem1}, this concludes the proof.
\end{proof}
\begin{remark}
\label{rem:pred}
By selecting $Y=\0$, the above result leads to the predictor-based observer scheme proposed in \cite{0801.4824, karafyllis2009continuous}, though written in  different coordinates. 
Indeed, whenever $H=-CL$, by rewriting \eqref{eq:P2:Chap3:ObsSampleHold} via the following invertible change of variables $(\hat{z},w)=(\hat{z},\theta+C\hat{z})$, yields the same observer in \cite{0801.4824,karafyllis2009continuous}.
\end{remark}

The main idea behind the above result consists of selecting the design variable $H$ so as to cancel out the terms $CLC$ and the term involving the product of $P_2$ and $L$ (which would hardly lead to conditions linear in the decision variables).
\subsection{Slack Variables-Based Design}
Next, we present other design procedures, whose derivation is based on an equivalent condition to the ones in \eqref{eq:M} that is formulated introducing slack variables via the use of the projection lemma; see, \eg, \cite{gahinet1994linear,ebihara2015,Pipeleers:2009aa}.
Before stating the main result, let us consider the following fact.
\begin{fact}
\label{Claim}
The matrix $\mathcal{F}$ in \eqref{eq:P2:Chap3:FG} can be factorized as follows
\begin{equation}
\label{eq:Factorization}
\mathcal{F}=\underbrace{\begin{pmatrix}
\Id&\0\\
C&\Id
\end{pmatrix}}_{\mathcal{F}_l}\underbrace{\begin{pmatrix}
A-LC&L\\
-HC&H
\end{pmatrix}}_{\mathcal{F}_r}
\end{equation}
where $\mathcal{F}_l$ is nonsingular.
\end{fact}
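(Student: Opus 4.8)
The plan is to verify the factorization by direct block multiplication and then to check nonsingularity of $\mathcal{F}_l$. First I would compute the product $\mathcal{F}_l\mathcal{F}_r$ blockwise. The $(1,1)$ block is $\Id\cdot(A-LC)+\0\cdot(-HC)=A-LC$; the $(1,2)$ block is $\Id\cdot L+\0\cdot H=L$; the $(2,1)$ block is $C(A-LC)+\Id\cdot(-HC)=CA-CLC-HC$; and the $(2,2)$ block is $CL+H$. Comparing with the definition of $\mathcal{F}$ in \eqref{eq:P2:Chap3:FG}, all four blocks match, so $\mathcal{F}=\mathcal{F}_l\mathcal{F}_r$.

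For the nonsingularity claim, I would observe that $\mathcal{F}_l$ is block lower triangular with identity matrices $\Id_{n_z}$ and $\Id_{n_y}$ on the diagonal, so $\det\mathcal{F}_l=\det(\Id_{n_z})\det(\Id_{n_y})=1\neq 0$; equivalently, one exhibits the explicit inverse
\begin{equation*}
\mathcal{F}_l^{-1}=\begin{pmatrix}\Id&\0\\-C&\Id\end{pmatrix},
\end{equation*}
whose product with $\mathcal{F}_l$ is readily checked to be the identity. This establishes the claim.

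There is essentially no obstacle here: the statement is a bookkeeping identity whose only content is the clever choice of the left factor $\mathcal{F}_l$, which ``absorbs'' the structural coupling through $C$ so that the right factor $\mathcal{F}_r$ depends affinely (indeed bilinearly, through $LC$ and $HC$) on the design gains $L$ and $H$ in a form amenable to the projection-lemma manipulations used in the subsequent slack-variable design. The only thing to be careful about is the block dimensions — $\Id$ in the top-left of $\mathcal{F}_l$ is $\Id_{n_z}$ while the bottom-right $\Id$ is $\Id_{n_y}$, and $C\in\R^{n_y\times n_z}$ — but once these are tracked correctly the multiplication goes through verbatim.
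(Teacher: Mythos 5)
Your verification is correct: the blockwise multiplication reproduces all four blocks of $\mathcal{F}$ as defined in \eqref{eq:P2:Chap3:FG}, and the explicit inverse $\left(\begin{smallmatrix}\Id&\0\\-C&\Id\end{smallmatrix}\right)$ settles nonsingularity. The paper states this fact without proof precisely because it is this direct computation, so your argument matches the intended (implicit) one; note also that the inverse you exhibit is exactly what is used later in the construction of $\mathcal{B}_r^{\perp}$ in the proof of Theorem~\ref{Theorem:Projection}.
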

Building on this fact, we have the following result.
\begin{theorem}
\label{Theorem:Projection}
Let $P_1\in\Spnz,P_2\in\Spny$, $H\in\R^{n_y \times n_y},L\in\R^{n_z\times n_y}$, and $\lambda_t,\gamma,\delta,\chi $ be strictly positive real numbers. The following statements are equivalent:
\begin{itemize}
\item[$(i)$] The matrix inequalities in \eqref{eq:M} are satisfied with strict inequalities;
\item[$(ii)$]There exist matrices 
$$
\begin{aligned}
&X_1,Y_1,X_3,Y_3\in\mathbb{R}^{n_z \times n_z}, X_2,X_4,Y_2,Y_4\in\mathbb{R}^{n_z \times n_y}\\
&X_5,Y_5,X_7,Y_7\in\mathbb{R}^{n_y \times n_z}, X_6,X_8,Y_6,Y_8\in\mathbb{R}^{n_y \times n_y}
\end{aligned}
$$ 
such that
\end{itemize}
\begin{equation}
\label{eq:ProjeCoro}
\begin{aligned}
&\left(\begin{smallmatrix}
\He(S_1(X))&S_2(X)+\mathcal{P}&S_3(X)&S_4(X)\\
\bullet&\mathcal{N}+\He(S_5(X))&S_6(X)&S_7(X)\\
\bullet&\bullet&-\gamma^2\Id&\0\\
\bullet&\bullet&\bullet&-\chi\Id
\end{smallmatrix}\right)<\0\\
&\left(\begin{smallmatrix}
\He(S_1(Y))&S_2(Y)+\mathcal{P}&S_3(Y)&S_4(Y)\\
\bullet&\mathcal{N}_{T_2}+\He(S_5(Y))&S_6(Y)&S_7(Y)\\
\bullet&\bullet&-\gamma^2\Id&\0\\
\bullet&\bullet&\bullet&-\chi\Id
\end{smallmatrix}\right)<\0
\end{aligned}
\end{equation}
where
\label{eq:FinslerProof0}
\begin{equation}
\label{eq:P1hat}
\begin{aligned}
&\mathcal{P}=\Diag\{P_1,P_2\}, \mathcal{P}_{T_2}=\Diag\{P_1,P_2e^{\delta T_2}\}\\
&\mathcal{N}=\Diag\{\lambda_t P_1+C_p\tr C_p+\chi\ell^2 S\tr S,(-\delta+2\lambda_t) P_2\}\\
&\mathcal{N}_{T_2}=\Diag\{\lambda_t P_1+C_p\tr C_p+\chi\ell^2 S\tr S,(-\delta+2\lambda_t)e^{\delta T_2} P_2\}\\
&X=\left(\begin{smallmatrix}
X_1& X_2&X_3&X_4\\
X_5&X_6&X_7&X_8
\end{smallmatrix}\right)\quad Y=\left(\begin{smallmatrix}
Y_1& Y_2&Y_3&Y_4\\
Y_5&Y_6&Y_7&Y_8
\end{smallmatrix}\right)
\end{aligned}
\end{equation}
and for each 
$$
\begin{aligned}
\mathcal{X}\in\R^{n_z\times n_z}\times\R^{n_z\times n_y}\times\R^{n_z\times n_z}\times\R^{n_z\times n_y}\\
\times\R^{n_y\times n_z}\times\R^{n_y\times n_y}\times\R^{n_y\times n_z}\times\R^{n_y\times n_y}
\end{aligned}
$$
\begin{equation}
\label{eq:FinslerProof}
\begin{aligned}
&S_1(\mathcal{X})\!=\left(\begin{smallmatrix}
-\mathcal{X}_1+C\tr \mathcal{X}_5&-\mathcal{X}_2+C\tr \mathcal{X}_6\\
-\mathcal{X}_5&-\mathcal{X}_6
\end{smallmatrix}\right)\\
&S_2(\mathcal{X})\!=\!\left(\begin{smallmatrix}
\mathcal{X}_1\tr (A-LC)-\mathcal{X}_5\tr HC-\mathcal{X}_3+C\tr \mathcal{X}_7&-\mathcal{X}_4+ C\tr \mathcal{X}_8+\mathcal{X}_1\tr L+\mathcal{X}_5\tr H\\
\mathcal{X}_2\tr (A-LC)-\mathcal{X}_6\tr HC-\mathcal{X}_7&-\mathcal{X}_8+\mathcal{X}_2\tr L+\mathcal{X}_6\tr H
\end{smallmatrix}\right)\\
&S_3(\mathcal{X})\!=\left(\begin{smallmatrix}
\mathcal{X}_1\tr N\\
\mathcal{X}_2\tr N\\
\end{smallmatrix}\right)\quad S_4(\mathcal{X})\!=\left(\begin{smallmatrix}
\mathcal{X}_1\tr  B\\
\mathcal{X}_2\tr  B\\
\end{smallmatrix}\right)\\
&S_5(\mathcal{X})\!=\left(\begin{smallmatrix}
(A-LC)\tr \mathcal{X}_3-C\tr H\tr \mathcal{X}_7&(A-LC)\tr \mathcal{X}_4-C\tr H\tr \mathcal{X}_8\\
L\tr \mathcal{X}_3+H\tr \mathcal{X}_7&L\tr \mathcal{X}_4+H\tr \mathcal{X}_8
\end{smallmatrix}\right)\\
&S_6(\mathcal{X})\!=\left(\begin{smallmatrix}
\mathcal{X}_3\tr N\\
\mathcal{X}_4\tr N\\
\end{smallmatrix}\right)\quad S_7(\mathcal{X})\!=\left(\begin{smallmatrix}
\mathcal{X}_3\tr  B\\
\mathcal{X}_4\tr  B\\
\end{smallmatrix}\right)
\end{aligned}
\end{equation}
Moreover, if $\delta>2\lambda_t$, then \eqref{eq:ProjeCoro} is fulfilled with $X_4=Y_4=\0,X_8=Y_8=\0$.
\end{theorem}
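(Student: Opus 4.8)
The plan is a standard elimination argument built on the Projection Lemma; see \cite{gahinet1994linear,ebihara2015,Pipeleers:2009aa}. The idea is to adjoin an auxiliary ``flow velocity'' variable to the quadratic form behind \eqref{eq:M}, and then eliminate it, using the factorization of Fact~\ref{Claim} so that the resulting slack terms come out exactly as the maps $S_1,\dots,S_7$ of \eqref{eq:FinslerProof}.

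First I would recall, from the computation underlying Theorem~\ref{Theorem1}, that for $V_1,V_2$ as in \eqref{eq:V1V2} and $\Psi=(\ep,\thetatilde,w,\zeta)$ one has $\langle\nabla V(x),f(x,w)\rangle+\ep\tr C_p\tr C_p\ep+2\lambda_t V(x)-\gamma^2 w\tr w\le\Psi\tr\mathcal{M}(\tau)\Psi$, with equality whenever the $\chi$-term is tight; hence $(i)$, i.e.\ the strict version of \eqref{eq:M}, is equivalent to $\Psi\tr\mathcal{M}(\tau)\Psi<0$ for all $\Psi\neq\0$ and $\tau\in\{0,T_2\}$. Next I would lift each of these two inequalities by introducing $\nu\in\R^{n_z+n_y}$, which plays the role of $(\dot\ep,\dot\thetatilde)$, subject to the flow equation of \eqref{eq:P2:Chap3:FlowMap} read as an affine constraint, $\nu=\mathcal{F}(\ep,\thetatilde)+\mathcal{Q}\zeta+\mathcal{T}w$. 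Since $\mathcal{F}_l^{-1}=\left(\begin{smallmatrix}\Id&\0\\-C&\Id\end{smallmatrix}\right)$, Fact~\ref{Claim} gives at once $\mathcal{F}_l^{-1}\mathcal{T}=\left(\begin{smallmatrix}N \\ \0\end{smallmatrix}\right)$ and $\mathcal{F}_l^{-1}\mathcal{Q}=\left(\begin{smallmatrix}B \\ \0\end{smallmatrix}\right)$, so the constraint is equivalently $-\mathcal{F}_l^{-1}\nu+\mathcal{F}_r(\ep,\thetatilde)+\left(\begin{smallmatrix}N \\ \0\end{smallmatrix}\right)w+\left(\begin{smallmatrix}B \\ \0\end{smallmatrix}\right)\zeta=\0$, i.e.\ $V\xi=\0$ with $\xi=(\nu,\ep,\thetatilde,w,\zeta)$ and $V=\big(-\mathcal{F}_l^{-1}\ \ \mathcal{F}_r\ \ \left(\begin{smallmatrix}N \\ \0\end{smallmatrix}\right)\ \ \left(\begin{smallmatrix}B \\ \0\end{smallmatrix}\right)\big)$. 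On $\ker V$ one checks that $\Psi\tr\mathcal{M}(\tau)\Psi=\xi\tr\Phi(\tau)\xi$, where $\Phi(\tau)$ is the symmetric matrix (free of $L$ and $H$) with zero $\nu$--$\nu$ block, coupling block $\mathcal{P}$ (resp.\ $\mathcal{P}_{T_2}$) between $\nu$ and $(\ep,\thetatilde)$ — produced by the term $2(\ep,\thetatilde)\tr\Diag(P_1,e^{\delta\tau}P_2)(\dot\ep,\dot\thetatilde)$ — and the remaining data $\mathcal{N}$ (resp.\ $\mathcal{N}_{T_2}$), $-\gamma^2\Id$, $-\chi\Id$ on the diagonal.

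Now I would invoke the Projection Lemma: $\xi\tr\Phi(\tau)\xi<0$ on $\ker V$ holds if and only if there exists a slack matrix $X$ (and, for $\tau=T_2$, $Y$) with $\Phi(\tau)+U\tr X V+V\tr X\tr U<\0$, where $U$ selects the $\nu$- and $(\ep,\thetatilde)$-coordinates of $\xi$. The second premise of the lemma, $U_\perp\tr\Phi(\tau)U_\perp<\0$, is automatic because on the complement of the selected coordinates $\Phi(\tau)$ reduces to $-\gamma^2\Id\oplus-\chi\Id$. Partitioning $X$ into the eight blocks $X_1,\dots,X_8$ of \eqref{eq:P1hat} (first block row corresponding to $\nu$, second to $(\ep,\thetatilde)$) and expanding $U\tr X V+V\tr X\tr U$ with $\mathcal{F}_l^{-1}$, $\mathcal{F}_l\mtr=\left(\begin{smallmatrix}\Id&-C\tr\\\0&\Id\end{smallmatrix}\right)$ and $\mathcal{F}_r$, one verifies block by block that it equals the array built from $S_1(X),\dots,S_7(X)$ in \eqref{eq:FinslerProof}; thus $\Phi(\tau)+U\tr X V+V\tr X\tr U<\0$ is precisely the first (resp.\ second) matrix inequality in \eqref{eq:ProjeCoro}. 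As the restriction of $\Phi(\tau)$ to $\ker V$ reproduces $\mathcal{M}(\tau)$, this establishes $(i)\Leftrightarrow(ii)$. I expect this block-by-block identification — keeping $\mathcal{F}_l\mtr$ versus $\mathcal{F}_l^{-1}$, the transposes $X_i\tr$, and the slots of $\mathcal{P}$, $\mathcal{N}$ straight — to be the only genuinely delicate step; the remainder is routine.

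Finally, for the last assertion, note that forcing $X_4=X_8=\0$ (and $Y_4=Y_8=\0$) amounts exactly to deleting from the slack the rows attached to the $\thetatilde$-coordinates, i.e.\ to replacing $U$ by the smaller selector $\tilde U$ retaining only the $\nu$- and $\ep$-coordinates of $\xi$. Re-running the Projection Lemma with $\tilde U$ in place of $U$ then requires the stronger premise $\tilde U_\perp\tr\Phi(\tau)\tilde U_\perp<\0$; a direct computation gives $\tilde U_\perp\tr\Phi(\tau)\tilde U_\perp=\Diag\big((2\lambda_t-\delta)e^{\delta\tau}P_2,\,-\gamma^2\Id,\,-\chi\Id\big)$ for $\tau\in\{0,T_2\}$, which is negative definite precisely when $\delta>2\lambda_t$ since $P_2>0$. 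Hence, if $\delta>2\lambda_t$, the equivalence $(i)\Leftrightarrow(ii)$ still holds with $X_4=Y_4=\0$ and $X_8=Y_8=\0$.
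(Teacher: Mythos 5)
Your proposal is correct and follows essentially the same route as the paper: the descriptor variable $\nu=(\dot\ep,\dot\thetatilde)$ you introduce is exactly the paper's lifting $\mathcal{M}(\tau)=\mathcal{B}\tr\mathcal{Q}_i\mathcal{B}$, your constraint matrix $V$ is the paper's $\mathcal{B}_r^{\perp}$ built from Fact~\ref{Claim}, and the projection-lemma elimination, the automatic kernel condition $-\gamma^2\Id\oplus(-\chi\Id)<\0$, and the computation showing that dropping the $\thetatilde$-columns of the slack requires $(2\lambda_t-\delta)e^{\delta\tau}P_2<\0$, i.e.\ $\delta>2\lambda_t$, all match the paper's proof.
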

\begin{proof}
Let us define 
$$
{\footnotesize \mathcal{B}=\begin{pmatrix}
\mathcal{F}&\mathcal{T}&\mathcal{Q}\\
\Id&\0&\0\\
\0&\Id&\0\\
\0&\0&\Id
\end{pmatrix}}$$
where $\mathcal{F}$ and $\mathcal{T}$ are defined in \eqref{eq:P2:Chap3:FG}.
Then, $\mathcal{M}$ in \eqref{eq:M1} at $\tau=0$ and $\tau=T_2$ can be equivalently rewritten, respectively, as follows:
\begin{equation}
\label{eq:M1M2Projection}
\begin{aligned}
&\mathcal{M}(0)={\footnotesize \mathcal{B}\tr\underbrace{\left(\begin{array}{cc|cc}
	\0&\mathcal{P}&\0&\0\\
	\bullet&\mathcal{N}&\0&\0\\
	\hline
	\bullet&\bullet&-\gamma^2\Id&\0\\
	\bullet&\bullet&\bullet&-\chi\Id
	\end{array}\right)}_{\mathcal{Q}_1}\mathcal{B}}\\
&\mathcal{M}(T_2)={\footnotesize\mathcal{B}\tr\underbrace{\left(\begin{array}{cc|cc}
\0&\mathcal{P}_{T_2}&\0&\0\\
\bullet&\mathcal{N}_{T_2}&\0&\0\\
\hline
\bullet&\bullet&-\gamma^2\Id&\0\\
\bullet&\bullet&\bullet&-\chi\Id
\end{array}\right)}_{\mathcal{Q}_2}\mathcal{B}}
\end{aligned}
\end{equation}
Moreover, by defining 
\begin{equation}
\label{eq:U}
\mathcal{U}=\begin{pmatrix}\0_{2(n_z+n_y)\times (n_w+n_s)}\\
\Id_{n_w+n_s}
\end{pmatrix}
\end{equation}
it turns out that item $(i)$ of our assertion is equivalent to
\begin{equation}
\label{eq:P2Proj}
\left\{\begin{array}{ll}
\mathcal{U}\tr\mathcal{Q}_1\mathcal{U}<\0& \mathcal{B}\tr\mathcal{Q}_1\mathcal{B}<\0\\
\mathcal{U}\tr\mathcal{Q}_2\mathcal{U}<\0& \mathcal{B}\tr\mathcal{Q}_2\mathcal{B}<\0
\end{array}\right.
\end{equation}
Moreover, by the projection lemma; (see \cite{gahinet1994linear}) \eqref{eq:P2Proj} holds iff there exist two matrices $X,Y$ such that
\begin{equation}
\label{eq:Proje1}
\left\lbrace\begin{aligned}
&\mathcal{Q}_1+\mathcal{B}_r^{\perp\tr}X\mathcal{U}_r^{\perp}+{\mathcal{U}}_r^{\perp\tr}X\tr\mathcal{B}_r^{\perp}<\0\\
&\mathcal{Q}_2+\mathcal{B}_r^{\perp\tr} Y\mathcal{U}_r^{\perp}+{\mathcal{U}}_r^{\perp\tr}Y\tr\mathcal{B}_r^{\perp}<\0
\end{aligned}\right.
\end{equation}
where $\mathcal{B}_r^{\perp}$ and $\mathcal{U}_r^{\perp}$ are some matrices such that $\mathcal{B}_r^{\perp}\mathcal{B}=0$ and $\mathcal{U}_r^{\perp}\mathcal{U}=0$.  Specifically, notice that in view of Fact \ref{Claim}, one can consider the following choice
\begin{equation*}
\scriptsize{\begin{split}
&
\mathcal{B}_r^{\perp}=\begin{pmatrix}
-\mathcal{F}_l^{-1}&\mathcal{F}_r&\mathcal{F}_l^{-1}\mathcal{T}&\mathcal{F}_l^{-1}\mathcal{Q}
\end{pmatrix}=\\
&\left(\begin{array}{cc|cc|c|c} -\Id&\0 & A-LC&L& N& B\\ 
C&-\Id&-HC&H&\0&\0\end{array}\right)
\end{split}}
\end{equation*}
while $\mathcal{U}_r^{\perp}=\begin{pmatrix}
\Id_{2(n_z+n_y)}&\0_{2(n_z+n_y)\times (n_w+n_s)}
\end{pmatrix}$.
Thus, according to partitioning of $X$ and $Y$ in \eqref{eq:P1hat},
relation \eqref{eq:Proje1} turns into \eqref{eq:ProjeCoro}, hence $(i)\iff (ii)$. To conclude the proof, we need to show that whenever $\delta>2\lambda_t$ one has that \eqref{eq:ProjeCoro} is fulfilled iff $X_4=Y_4=\0$ and $X_8=Y_8=\0$. Define 
$\mathcal{U}_2=\begin{pmatrix}\0_{(2 n_z+n_y)\times (n_y+n_w+n_s)}\\
\Id_{n_y+n_w+n_s}
\end{pmatrix}$ and observe that if $\delta>2\lambda_t$, then $\mathcal{U}$ can be replaced by $\mathcal{U}_2$ in \eqref{eq:P2Proj}. Hence, still according to the projection lemma, $(i)$ is equivalent to the satisfaction of
\vspace{-0.65cm}

\begin{equation}
\label{eq:Proj2}
\left\lbrace\begin{aligned}
&\mathcal{Q}_1+\mathcal{B}_r^{\perp\tr}X\mathcal{U}_{2r}^{\perp}+{\mathcal{U}}_{2r}^{\perp\tr}X\tr\mathcal{B}_r^{\perp}<\0\\
&\mathcal{Q}_2+\mathcal{B}_r^{\perp\tr} Y\mathcal{U}_{2r}^{\perp}+{\mathcal{U}}_{2r}^{\perp\tr}Y\tr\mathcal{B}_r^{\perp}<\0
\end{aligned}\right.
\end{equation}
for some matrices $X,Y$. Hence, by noticing that $\mathcal{U}_2^{\perp}=\begin{pmatrix}
\Id_{2 n_z+n_y}&\0_{(2 n_z+n_y)\times (n_w+n_y+n_s)}
\end{pmatrix}$ and by considering the partitioning of $X,Y$ in \eqref{eq:P1hat}, it can be easily shown that \eqref{eq:Proj2} turns into \eqref{eq:ProjeCoro} with $X_4=Y_4=\0$ and $X_8=Y_8=\0$, hence finishing the proof.
\end{proof}
The above result yields an equivalent condition to \eqref{eq:M} that can be exploited to derive an efficient design procedure for the proposed observer. To this end, one needs to suitably manipulate \eqref{eq:ProjeCoro} to obtain conditions that are linear in the decision variables. Specifically, the three results given in the next section provide several possible approaches to derive sufficient conditions that whenever $\delta$ is selected are genuinely linear matrix inequalities. For the sake of brevity, we focus only on the exploitation of Theorem~\ref{Theorem1} with the aim of deriving sufficient conditions for the solution to Problem~\ref{prob:Problem1} in its whole.  Analogous arguments can be considered for Corollary~\ref{Coro:WoGamma}  and Corollary~\ref{Coro:WoLambda}. 
\begin{proposition}
\label{prop:propX80}
Let $\lambda_t, \gamma$ be given positive real numbers. If there exist $P_1\in\Spnz,P_2\in\Spny$, positive real numbers $\delta, \chi$, matrices $X\in\mathbb{R}^{n_z\times n_z},U, W\in\mathbb{R}^{n_y\times n_y}, J\in\mathbb{R}^{n_z\times n_y}$ such that
\begin{equation}
\label{eq:Design3}
\begin{aligned}
&\left(\begin{smallmatrix}
\He(Z_1)&Z_2+\mathcal{P}&Z_3&Z_4\\
\bullet&\mathcal{N}+\He(Z_5)&Z_6&Z_7\\
\bullet&\bullet&-\gamma^2\Id&\0\\
\bullet&\bullet&\bullet&-\chi\Id
\end{smallmatrix}\right)<\0\\
&\left(\begin{smallmatrix}
\He(Z_1)&Z_2+\mathcal{P}_{T_2}&Z_3&Z_4\\
\bullet&\mathcal{N}_{T_2}+\He(Z_5)&Z_6&Z_7\\
\bullet&\bullet&-\gamma^2\Id&\0\\
\bullet&\bullet&\bullet&-\chi\Id
\end{smallmatrix}\right)<\0
\end{aligned}
\end{equation}
where $\mathcal{P},\mathcal{P}_{T_2},\mathcal{N},\mathcal{N}_{T_2}$ are defined in \eqref{eq:P1hat} and 
$$
\scriptsize{\begin{array}{ll}
Z_1=\begin{pmatrix}
-X&C\tr U\\
\0&-U
\end{pmatrix}\hspace{-0.1cm}&Z_2=\begin{pmatrix}
-X+X\tr A-JC&J\\
-WC&W
\end{pmatrix}\\
Z_3=\begin{pmatrix}
X\tr N\\
\0
\end{pmatrix}\hspace{-0.32cm}& Z_4=\begin{pmatrix}
X\tr  B\\
\0
\end{pmatrix}\\
Z_5=\begin{pmatrix}
A\tr X-C\tr J\tr&\0\\
J\tr&\0
\end{pmatrix}\hspace{-0.32cm}&
Z_6=\begin{pmatrix}
X\tr N\\
\0
\end{pmatrix}
Z_7=\begin{pmatrix}
X\tr  B\\
\0
\end{pmatrix}
\end{array}}
$$
then $L=X\mtr J$ and $H=U\mtr W$ solve Problem~\ref{prob:Problem1}.
\end{proposition}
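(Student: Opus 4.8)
The plan is to observe that the matrix inequalities \eqref{eq:Design3} are exactly \eqref{eq:ProjeCoro} specialized to a particular choice of the slack matrices, and then to chain Theorem~\ref{Theorem:Projection}, Theorem~\ref{Theorem1}, and item~$(iii)$ of Theorem~\ref{theorem:P2:Chap3:Main}. Before that I would check that $L=X\mtr J$ and $H=U\mtr W$ are well defined: the $(1,1)$ block $\He(Z_1)$ of each inequality in \eqref{eq:Design3} is negative definite, and since $Z_1=\left(\begin{smallmatrix}-X&C\tr U\\\0&-U\end{smallmatrix}\right)$ its diagonal sub-blocks $-(X+X\tr)$ and $-(U+U\tr)$ are negative definite; a real matrix $M$ with $M+M\tr<\0$ is nonsingular, so $X$ and $U$ are invertible, and we have the identities $X\tr L=J$ (equivalently $L\tr X=J\tr$) and $U\tr H=W$.

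Next I would apply the equivalence $(i)\iff(ii)$ of Theorem~\ref{Theorem:Projection}, taking $P_1,P_2,\delta,\chi,\lambda_t,\gamma$ as in the hypothesis and $L,H$ as just defined. To establish item~$(i)$ — namely $\mathcal{M}(0)<\0$ and $\mathcal{M}(T_2)<\0$ — it suffices to exhibit slack matrices, in the notation of \eqref{eq:P1hat}, satisfying \eqref{eq:ProjeCoro}. I would choose, for both the $X$-tuple and the $Y$-tuple,
\[
X_1=X_3=X,\qquad X_6=U,\qquad X_2=X_4=X_5=X_7=X_8=\0,
\]
with each $Y_i$ equal to the corresponding $X_i$. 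Substituting this into the definitions \eqref{eq:FinslerProof} of $S_1,\dots,S_7$ and using $X\tr L=J$, $L\tr X=J\tr$, $U\tr H=W$, a block-by-block computation shows that $S_k$ reduces to the $Z_k$ of \eqref{eq:Design3} for every $k$: the bilinear products $X\tr LC$, $C\tr L\tr X$ and $U\tr HC$ cancel against $JC$, $C\tr J\tr$ and $WC$ respectively (this is the purpose of the substitution — it eliminates all bilinear terms and every occurrence of $CLC$), while every block that the chosen slack pattern forces to vanish is indeed zero in $Z_k$. Hence \eqref{eq:ProjeCoro} collapses to \eqref{eq:Design3}, so \eqref{eq:Design3} implies item~$(ii)$ of Theorem~\ref{Theorem:Projection} and therefore item~$(i)$.

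Finally, since $\mathcal{M}(0)<\0$ and $\mathcal{M}(T_2)<\0$ imply \eqref{eq:M}, Theorem~\ref{Theorem1} yields that Assumption~\ref{Ass:L2} holds with $V_1,V_2$ as in \eqref{eq:V1V2} and with this $L,H$; item~$(iii)$ of Theorem~\ref{theorem:P2:Chap3:Main} then gives that the observer \eqref{eq:P2:Chap3:ObsSampleHold} with $L=X\mtr J$, $H=U\mtr W$ solves Problem~\ref{prob:Problem1}, which is the claim.

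The step I expect to be the main obstacle is the block-by-block verification that $S_k(\mathcal{X})=Z_k$ under the chosen slack pattern: one must keep the $(n_z+n_y)$-dimensional partitioning of the matrices in \eqref{eq:FinslerProof} straight, apply the changes of variables $X\tr L=J$ and $U\tr H=W$ consistently in each block, and confirm that every block $Z_k$ reports as zero is actually produced by the slack entries set to $\0$. Everything else is a mechanical chaining of results already proved in the paper.
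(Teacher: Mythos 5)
Your proposal is correct and follows exactly the paper's own route: the paper's proof of this proposition consists precisely of substituting $X_1=X_3=Y_1=Y_3=X$, $X_6=Y_6=U$, all other slack blocks zero, together with $X\tr L=J$ and $U\tr H=W$, into \eqref{eq:ProjeCoro}, and then invoking Theorem~\ref{Theorem:Projection} and Theorem~\ref{Theorem1}. Your additional observation that $\He(Z_1)<\0$ forces $X+X\tr>\0$ and $U+U\tr>\0$, hence nonsingularity of $X$ and $U$, is a detail the paper leaves implicit but is needed for $L=X\mtr J$ and $H=U\mtr W$ to be well defined.
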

\begin{proof}
By selecting in \eqref{eq:ProjeCoro} $X_1=X_3=Y_1=Y_3=X, X_2=Y_2=\0,X_4=Y_4=\0, X_5=Y_5=\0, X_6=Y_6=U, X_7=Y_7=\0, X_8=Y_8=\0, X\tr L=J, U\tr H=W$ one gets \eqref{eq:Design3}. Thus, thanks to Theorem~\ref{Theorem1} and Theorem~\ref{Theorem:Projection} the result is proven.
\end{proof}
In Proposition~\ref{prop:propX80}, to obtain sufficient conditions in the form of (quasi)-LMIs, the following constraint is enforced $X_8=Y_8=\0$. Although this allows to obtain numerically tractable conditions, enforcing such a constraint, for a given $\lambda_t$, restricts the range of values of $\delta$ for which feasibility is not lost. Indeed, whenever  $X_8=Y_8=\0$, \eqref{eq:Design3} is feasible only if $-\delta+2\lambda_t<0$; due to the null lower-right corner block in $Z_5$. To overcome this obstacle, next we provide an additional result in which this limitation is removed.
\begin{proposition}
\label{prop:propX8X6}
Let $\lambda_t, \gamma$ be given positive real numbers. If there exist $P_1\in\Spnz,P_2\in\Spny$, positive real numbers $\delta, \chi$, matrices $X\in\mathbb{R}^{n_z\times n_z},U, W\in\mathbb{R}^{n_y\times n_y}, J\in\mathbb{R}^{n_z\times n_y}$ such that
\begin{equation}
\label{eq:Design3.2}
\begin{aligned}
&\left(\begin{smallmatrix}
\He(R_1)&R_2+\mathcal{P}&R_3&R_4\\
\bullet&\mathcal{N}+\He(R_5)&R_6&R_7\\
\bullet&\bullet&-\gamma^2\Id&\0\\
\bullet&\bullet&\bullet&-\chi\Id
\end{smallmatrix}\right)<\0\\
&\left(\begin{smallmatrix}
\He(R_1)&R_2+\mathcal{P}_{T_2}&R_3&R_4\\
\bullet&\mathcal{N}_{T_2}+\He(R_5)&R_6&R_7\\
\bullet&\bullet&-\gamma^2\Id&\0\\
\bullet&\bullet&\bullet&-\chi\Id
\end{smallmatrix}\right)<\0
\end{aligned}
\end{equation}
where $\mathcal{P},\mathcal{P}_{T_2},\mathcal{N},\mathcal{N}_{T_2}$ are defined in \eqref{eq:P1hat} and 
$$
\scriptsize{\begin{array}{ll}
R_1=\begin{pmatrix}
-X&C\tr U\\
\0&-U
\end{pmatrix}\\
R_2=\begin{pmatrix}
-X+X\tr A-JC&J+C\tr U\\
-WC&-U+W
\end{pmatrix}\\
R_3=\begin{pmatrix}
X\tr N\\
\0
\end{pmatrix}&R_4=\begin{pmatrix}
X\tr  B\\
\0
\end{pmatrix}\\ 
R_5=\begin{pmatrix}
A\tr X-C\tr J\tr&-C\tr W\tr\\
J\tr&W\tr
\end{pmatrix}&\hspace{-0.0cm}
R_6=\begin{pmatrix}
X\tr N\\
\0
\end{pmatrix}\\
R_7=\begin{pmatrix}
X\tr  B\\
\0
\end{pmatrix}
\end{array}}
$$
then $L=X\mtr J$ and $H=U\mtr W$ solve Problem~\ref{prob:Problem1}.
\end{proposition}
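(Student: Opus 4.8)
The plan is to mimic the proof of Proposition~\ref{prop:propX80}: invoke the slack-variable reformulation of \eqref{eq:M} given by Theorem~\ref{Theorem:Projection} and specialize the multipliers, but this time with a choice that keeps the blocks $X_8,Y_8$ nonzero. Concretely, in \eqref{eq:ProjeCoro} I would set
$$
X_1=X_3=Y_1=Y_3=X,\quad X_6=X_8=Y_6=Y_8=U,\quad X_2=X_4=X_5=X_7=Y_2=Y_4=Y_5=Y_7=\0,
$$
together with the linearizing substitutions $J\coloneqq X\tr L$ and $W\coloneqq U\tr H$. The $(1,1)$ diagonal block of each inequality in \eqref{eq:ProjeCoro} is $\He(S_1(\cdot))$, which under this assignment equals $\left(\begin{smallmatrix}-\He(X)&C\tr U\\ U\tr C&-\He(U)\end{smallmatrix}\right)$; negative definiteness of this block forces $\He(X)>\0$ and $\He(U)>\0$, hence $X$ and $U$ are nonsingular and $L=X\mtr J$, $H=U\mtr W$ are well defined. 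In this way the choice of multipliers is invertible and the recovered gains are exactly those in the statement.

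The second step is the verification that, under this assignment, the blocks $S_1,\dots,S_7$ in \eqref{eq:FinslerProof} collapse precisely to the blocks $R_1,\dots,R_7$ of \eqref{eq:Design3.2}. The nontrivial entries are: in $S_2$ one gets $X\tr L+C\tr X_8=J+C\tr U$ in the $(1,2)$ position and $-X_8+X_6\tr H=-U+W$ in the $(2,2)$ position; in $S_5$ one gets $-C\tr H\tr X_8=-C\tr W\tr$ and $H\tr X_8=W\tr$ in the second column, while $L\tr X_3=J\tr$ and $(A-LC)\tr X_3=A\tr X-C\tr J\tr$ in the first column; all other entries reduce exactly as in Proposition~\ref{prop:propX80}. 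Consequently the two inequalities of \eqref{eq:Design3.2} are \eqref{eq:ProjeCoro} evaluated at this particular $(X,Y)$, so feasibility of \eqref{eq:Design3.2} yields a solution to \eqref{eq:ProjeCoro}, which by Theorem~\ref{Theorem:Projection} is equivalent to \eqref{eq:M} holding with strict inequalities; a fortiori \eqref{eq:M} holds, and by Theorem~\ref{Theorem1} together with item $(iii)$ of Theorem~\ref{theorem:P2:Chap3:Main} the observer \eqref{eq:P2:Chap3:ObsSampleHold} with $L=X\mtr J$, $H=U\mtr W$ solves Problem~\ref{prob:Problem1}.

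The only delicate point — and the reason this result is stated separately from Proposition~\ref{prop:propX80} — is the choice $X_8=Y_8=U\neq\0$. With $X_8=\0$, as in Proposition~\ref{prop:propX80}, the lower-right block of $\He(Z_5)$ vanishes and the corresponding diagonal block of \eqref{eq:Design3} is just $(-\delta+2\lambda_t)P_2$ (times $e^{\delta T_2}$ in the second inequality), which forces $-\delta+2\lambda_t<\0$. Keeping $X_8=U$ instead injects $\He(U\tr H)=\He(W)$ into the $(2,2)$ block of $\He(R_5)$, so that diagonal block becomes $(-\delta+2\lambda_t)P_2+\He(W)$ (resp. with the $e^{\delta T_2}$ scaling), and the free block $W$ can render it negative definite even when $\delta\le 2\lambda_t$, thereby removing the feasibility restriction inherent to Proposition~\ref{prop:propX80}. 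I would therefore make the write-up emphasize that no structural constraint imposing $-\delta+2\lambda_t<\0$ is needed, and that the nonsingularity of $X$ and $U$ required to recover $L,H$ is read off from the negative-definiteness of the $(1,1)$ diagonal blocks rather than assumed. Beyond this observation, the argument is a mechanical specialization of Theorem~\ref{Theorem:Projection} and presents no further obstacle.
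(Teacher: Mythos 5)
Your proposal is correct and follows exactly the paper's route: specialize the slack variables in \eqref{eq:ProjeCoro} to $X_1=X_3=Y_1=Y_3=X$, $X_6=X_8=Y_6=Y_8=U$, all other blocks zero, with $J=X\tr L$ and $W=U\tr H$, then invoke Theorem~\ref{Theorem:Projection} and Theorem~\ref{Theorem1}. Your additional observations — that $\He(R_1)<\0$ forces $\He(X)>\0$ and $\He(U)>\0$ and hence the nonsingularity needed to recover $L=X\mtr J$, $H=U\mtr W$, and that keeping $X_8=U$ injects $\He(W)$ into the lower-right diagonal block so the restriction $\delta>2\lambda_t$ disappears — are correct and make explicit points the paper's one-line proof leaves implicit.
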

\begin{proof}
By selecting in \eqref{eq:ProjeCoro} $X_1=X_3=Y_1=Y_3=X, X_2=Y_2=\0,X_4=Y_4=\0, X_5=Y_5=\0, X_6=Y_6=X_8=Y_8=U, X_7=Y_7=\0, X\tr L=J, U\tr H=W$, one gets \eqref{eq:Design3}. Thus, thanks to Theorem~\ref{Theorem1} and Theorem~\ref{Theorem:Projection}  the result is proven.
\end{proof}
\begin{remark}
As already mentioned, the above result, with respect to Proposition~\ref{prop:propX80}, extends the range of values for $\delta$ for which feasibility is not lost. However, it is difficult to compare the conservatism induced by Proposition~\ref{prop:propX80} and Proposition~\ref{prop:propX8X6}. Therefore, in practice the two above results need to be used in a complementary fashion. 
\end{remark}
\subsubsection*{\textbf{Sample-and-hold Implementation}}
Whenever $H=\0$, the general observer scheme presented in this paper reduces to the \emph{zero order holder (ZOH) sample-and-hold} considered, \eg, in \cite{raff2008observer}. Although such an observer is perfectly captured by our scheme, the implementation of ZOH sample-and-hold observer schemes only requires to store the last measured output estimation error and hold it in between sampling times. Thus, implementing such schemes is in general easier.
For this reason, it appears useful to derive computationally tractable design algorithms for which the gain $H$ is explicitly constrained to be zero. This is realized through the following result. 
\begin{proposition}
\label{prop:ZOH}
Let $\lambda_t, \gamma$ be given positive real numbers. If there exist $P_1\in\Spnz,P_2\in\Spny$, positive real numbers $\delta, \chi$, a nonsingular matrix $X\in\mathbb{R}^{n_z\times n_z}$, and matrices $X_5,Y_5,X_7,Y_7\in\R^{n_y\times n_z},X_6,Y_6,X_8,Y_8\in\mathbb{R}^{n_y\times n_y}, J\in\mathbb{R}^{n_z\times n_y}$ such that
\begin{equation}
\label{eq:DesignZOH}
\begin{aligned}
&\left(\begin{smallmatrix}
\He(Q_1)&Q_2+\mathcal{P}&Q_3&Q_4\\
\bullet&\mathcal{N}+\He(Q_5)&Q_6&Q_7\\
\bullet&\bullet&-\gamma^2\Id&\0\\
\bullet&\bullet&\bullet&-\chi\Id
\end{smallmatrix}\right)<\0\\
&\left(\begin{smallmatrix}
\He(\widehat{Q}_1)&\widehat{Q}_2+\mathcal{P}_{T_2}&Q_3&Q_4\\
\bullet&\mathcal{N}_{T_2}+\He(Q_5)&Q_6&Q_7\\
\bullet&\bullet&-\gamma^2\Id&\0\\
\bullet&\bullet&\bullet&-\chi\Id
\end{smallmatrix}\right)<\0\\
\end{aligned}
\end{equation}
where $\mathcal{P},\mathcal{P}_{T_2},\mathcal{N},\mathcal{N}_{T_2}$ are defined in \eqref{eq:P1hat} and 
$$
\scriptsize{\begin{array}{ll}
Q_1=\begin{pmatrix}
-X+C\tr X_5&C\tr X_6\\
-X_5&-X_6
\end{pmatrix}&\\
Q_2=\begin{pmatrix}
-X+X\tr A-JC+C\tr X_7&J+C\tr X_8\\
-X_7&-X_8
\end{pmatrix}\\
Q_3=\begin{pmatrix}
X\tr N\\
\0
\end{pmatrix}&\hspace{-2cm}Q_4=\begin{pmatrix}
X\tr  B\\
\0
\end{pmatrix}\\
Q_5=\begin{pmatrix}
A\tr X-C\tr J\tr&\0\\
J\tr&\0
\end{pmatrix}&\hspace{-2cm}Q_6=\begin{pmatrix}
X\tr N\\
\0
\end{pmatrix}\\
Q_7=\begin{pmatrix}
X\tr  B\\
\0
\end{pmatrix}\quad
\widehat{Q}_1=\begin{pmatrix}
-X+C\tr Y_5&C\tr Y_6\\
-Y_5&-Y_6
\end{pmatrix}\\
\widehat{Q}_2=\begin{pmatrix}
-X+X\tr A-JC+C\tr Y_7&J+C\tr Y_8\\
-Y_7&-Y_8
\end{pmatrix}
\end{array}}
$$
then $L=X\mtr J$ and $H=\0$ are a solution to Problem~\ref{prob:Problem1}.
\end{proposition}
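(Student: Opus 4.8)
The plan is to mimic, almost verbatim, the route used for Proposition~\ref{prop:propX80} and Proposition~\ref{prop:propX8X6}: exhibit a concrete choice of the slack matrices in the equivalent condition $(ii)$ of Theorem~\ref{Theorem:Projection} that (a) forces $H=\0$, (b) removes every product of two decision matrices, and (c) supports a linearizing change of variables for the gain $L$. Concretely, I would apply Theorem~\ref{Theorem:Projection} with the data $H=\0$ and $L=X\mtr J$ (legitimate because $X$ is assumed nonsingular), and in the slack-variable inequalities \eqref{eq:ProjeCoro} make the selection $X_1=X_3=Y_1=Y_3=X$ (the \emph{same} nonsingular $X$ in both inequalities, so that the recovered $L$ is consistent), $X_2=X_4=Y_2=Y_4=\0$, while keeping $X_5,X_6,X_7,X_8$ and $Y_5,Y_6,Y_7,Y_8$ as free decision matrices, together with the substitution $X\tr L=J$, i.e. $L=X\mtr J$, and $U\tr H=W$ which, since $H=\0$, is simply $W=\0$ — wait, more precisely the gain $H$ is clamped to $\0$ from the outset, so no substitution for $H$ is needed; the matrices $U,W$ of the statement play the role of $X_6,X_8$-type blocks and are left free.

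Next I would substitute these choices into the definitions \eqref{eq:FinslerProof} of the maps $S_1,\dots,S_7$ evaluated at $H=\0$ and check, block by block, that the $\tau=0$ inequality of \eqref{eq:ProjeCoro} becomes the first inequality of \eqref{eq:DesignZOH} and the $\tau=T_2$ inequality (carrying $\mathcal{P}_{T_2},\mathcal{N}_{T_2}$) becomes the second one. The identifications $S_1(X)=Q_1$, $S_1(Y)=\widehat Q_1$, $S_3(X)=Q_3$, $S_4(X)=Q_4$, $S_6(X)=Q_6$, $S_7(X)=Q_7$ are immediate from $X_1=X_3=X$, $X_2=X_4=\0$. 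The only entries requiring a line of algebra are: in $S_2$, using $X\tr L=J$ to rewrite $X\tr(A-LC)=X\tr A-JC$ in the $(1,1)$ block and to place $J$ in the $(1,2)$ block (so $S_2(X)=Q_2$, $S_2(Y)=\widehat Q_2$), and in $S_5$ using $L\tr X=(X\tr L)\tr=J\tr$ to turn $(A-LC)\tr X$ into $A\tr X-C\tr J\tr$ (so $S_5(X)=Q_5$). Because $X_2=X_4=\0$ and $H=\0$, the maps $S_3,\dots,S_7$ depend only on $X_1=X_3=X$, which is why the same $Q_3,\dots,Q_7$ appear in both inequalities of \eqref{eq:DesignZOH}; the clamping $H=\0$ is exactly what deletes the $\mathcal{X}_5\tr HC$, $\mathcal{X}_6\tr H$ terms from $S_2$ and the $C\tr H\tr\mathcal{X}_7$, $C\tr H\tr\mathcal{X}_8$ terms from $S_5$, producing the zero blocks visible in $Q_2,Q_5$.

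Having matched the blocks, feasibility of \eqref{eq:DesignZOH} shows that \eqref{eq:ProjeCoro} is feasible for the fixed data $H=\0$, $L=X\mtr J$; by the $(ii)\Rightarrow(i)$ direction of Theorem~\ref{Theorem:Projection} this yields the strict form of \eqref{eq:M}, in particular $\mathcal{M}(0)\leq\0$ and $\mathcal{M}(T_2)\leq\0$; by Theorem~\ref{Theorem1} Assumption~\ref{Ass:L2} holds with $V_1,V_2$ as in \eqref{eq:V1V2}; and by Theorem~\ref{theorem:P2:Chap3:Main}$(iii)$ the observer \eqref{eq:P2:Chap3:ObsSampleHold} with $L=X\mtr J$ and $H=\0$ solves Problem~\ref{prob:Problem1}. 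I expect the only genuine obstacle to be the bookkeeping in the block-by-block identification — tracking transposes and keeping the free blocks $X_5$–$X_8$, $Y_5$–$Y_8$ in the right positions — rather than anything conceptual, since the argument is structurally the same slack-variable maneuver as in Proposition~\ref{prop:propX80}–Proposition~\ref{prop:propX8X6}, with the single new feature that $H$ is set to zero before invoking the projection lemma.
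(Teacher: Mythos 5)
Your proposal is correct and follows exactly the route of the paper's own proof, which simply selects $H=\0$, $X_1=X_3=Y_1=Y_3=X$, $X_2=Y_2=X_4=Y_4=\0$, $X\tr L=J$ in \eqref{eq:ProjeCoro} and invokes Theorems~\ref{Theorem:Projection} and~\ref{Theorem1}; your block-by-block verification of $S_i(X)=Q_i$ and $S_i(Y)=\widehat{Q}_i$ merely spells out what the paper leaves implicit. The brief aside about matrices $U,W$ is immaterial (they do not appear in this proposition), and you correctly resolve it by noting that $H$ is clamped to zero from the outset while $X_5$--$X_8$, $Y_5$--$Y_8$ remain free.
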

\begin{proof}
By selecting in \eqref{eq:ProjeCoro} $H=\0,X_1=X_3=Y_1=Y_3=X,X_2=Y_2=\0,X_4=Y_4=\0,X\tr L=J$, one gets \eqref{eq:DesignZOH}. Thus, thanks to Theorem~\ref{Theorem1} and Theorem~\ref{Theorem:Projection} the result is proven.
\end{proof}
\begin{remark}
The applicability of the above result requires the matrix $X$ to be nonsingular and such a constraint cannot be directly imposed in an LMI setting. 
Although into a solution to \eqref{eq:DesignZOH} characterized by a singular matrix $X$ is unlikely, if one wants to ensure the nonsingularity of $X$, at the expense of some additional conservatism, then the following constraint can be included $X\tr+X>\0$.
\end{remark}
\begin{remark}
The proposed design procedures lead to a different number of scalar variables in the associated LMIs. Table~\ref{table:Complexity} reports such a number for each of the proposed design. As it appears from the table, the matrix inequalities related to Proposition~\ref{prop:propX80} (or equivalently Proposition~\ref{prop:propX8X6}) and Proposition~\ref{prop:ZOH}, due to the introduction of the additional slack variables, lead to a greater number of scalar variables with respect to the matrix inequalities issued from Proposition~\ref{prop:P2:Chap3:propPred}. \figurename~\ref{fig:NVar} reports the number of scalar variables associated to the different results as a function of $n_z$ whenever $n_y=1$. The picture clearly points out that design algorithms based on Proposition~\ref{prop:P2:Chap3:propPred} are more preferable when the plant order is sufficiently large.   
\end{remark}
{\small\begin{table}[h]
\centering
\begin{tabular}{ll}
\toprule
Design& \# scalar variables\\ [0.1ex] 
\midrule
Prop.~\ref{prop:P2:Chap3:propPred}&$n_z(n_z+1)/2+n_y (n_y+1)/2+n_y^2+n_z n_y+1$\\	
Prop.~\ref{prop:propX80}&$n_z(n_z+1)/2+n_y(n_y+1)/2+2 n_y^2+n_z^2+n_z n_y+1$\\	
Prop.~\ref{prop:ZOH}&$n_z(n_z+1)/2+n_y(n_y+1)/2+4 n_y^2+n_z^2+5 n_z n_y+1$	\\
\hline
\end{tabular}
\caption{Number of scalar variables associated to the different designs.}
\label{table:Complexity}
\end{table}}
\begin{figure}[!ht]
\centering
\psfrag{nz}[][][1]{$n_z$}
\psfrag{nv}[][][1]{$\#\,\,\text{scalar var.}$}
 \includegraphics[width=0.5\textwidth]{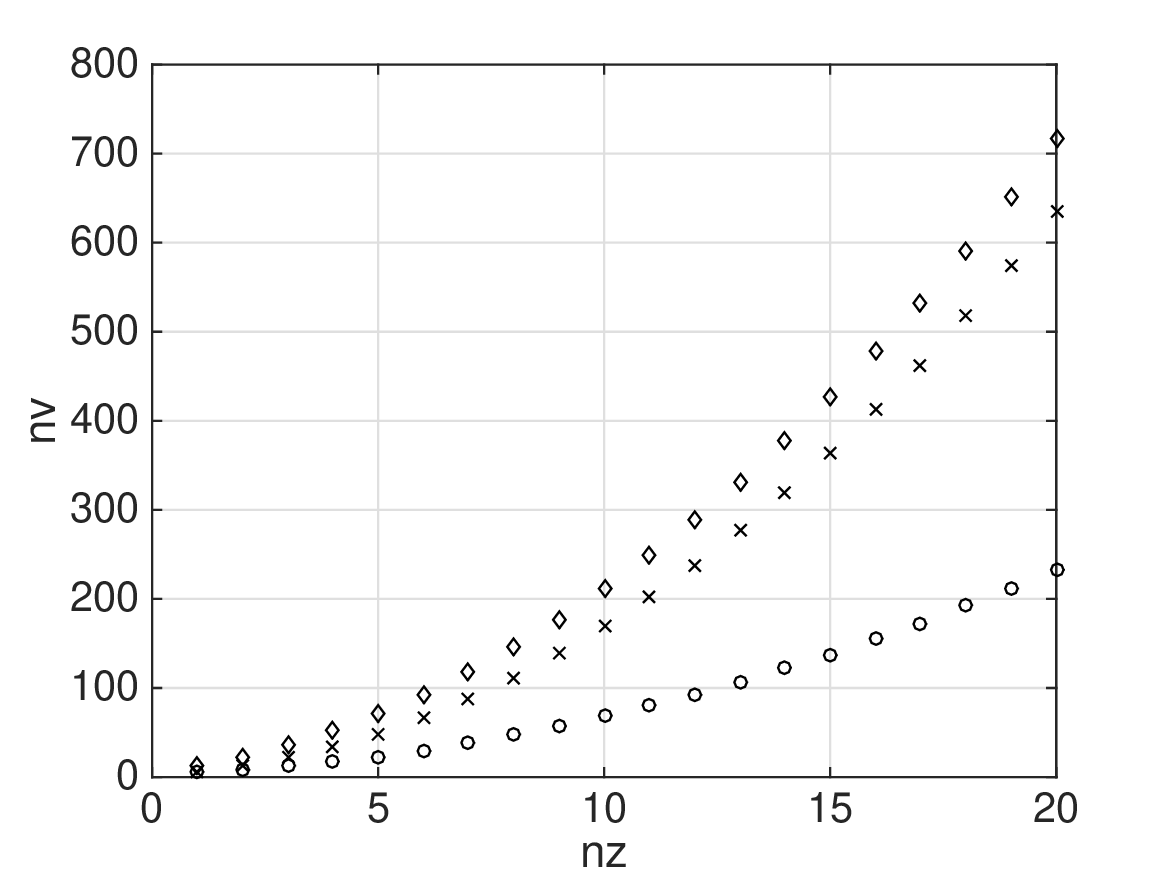}
\caption{Number of scalar variables associated to the different results vs $n_z$ whenever $n_y=1$: Proposition~\ref{prop:P2:Chap3:propPred} (circle), Proposition~\ref{prop:propX80} (cross), Proposition~\ref{prop:ZOH} (diamond).}
\label{fig:NVar}
\end{figure}
\subsection{Optimization aspects}
So far, we assumed the gain $\gamma$ to be given. Nonetheless, most of the time one is interested in designing the observer in a way such that the effect of the exogenous signals is reduced as much as possible. 
This can be realized in our setting by embedding the proposed design conditions into suitable optimization schemes aimed at minimizing  $\gamma$,  which can be taken as a design variable. In particular, by setting $\gamma^2=\mu$, the minimization of the $\mathcal{L}_2$ gain from the disturbance $w$ to the performance output $y_p$ can be achieved, for a given value of $\lambda_t>0$, by designing the observer via the solution to the following optimization problem:
\begin{equation}
\label{eq:OptSingle}
\begin{aligned}
&\underset{P_1,P_2,L, H,\mu, \delta,\chi}{\minimize}\,\,\,\mu\\
&\text{s.t.}\\
&P_1\in\Spnz, P_2\in\Spny, \mu>0, \delta>0, \chi\geq 0\\
&\mathcal{M}(0)\leq \0, \mathcal{M}(T_2)\leq \0
\end{aligned}
\end{equation}
Clearly the above optimization problem is hardly tractable from a numerical standpoint due to nonlinear constraints in the decision variables. However, whenever $\delta$ is given, the results given in Section~\ref{sec:LMI} allows to obtain sufficient conditions in the form of linear matrix inequalities for the satisfaction of \eqref{eq:M}. Thus,  a suboptimal solution to the above optimization problem can be obtained via semidefinite programming by performing a grid search for the scalar $\delta$. 
\begin{remark}
\label{rem:TwoStage}
The derivation of Proposition~\ref{prop:propX80}, Proposition~\ref{prop:propX8X6}, and Proposition~\ref{prop:ZOH} consists of some particular choices of the slack variables $X$ and $Y$ introduced in Theorem~\ref{Theorem:Projection}. Therefore, the adoption of such results for the derivation of suboptimal solutions to \eqref{eq:OptSingle} may prevent from solving Problem~\ref{prob:Problem1} for a given value of  $T_2$. 
To overcome this problem, one can envision a two-stage procedure.  Indeed, whenever $L$, $H$, $\delta$ and $T_2$ are fixed,  $\mathcal{M}(0)\leq \0, \mathcal{M}(T_2)\leq \0$ are linear in the decision variables. Thus, once the observer has been designed, by testing the feasibility of $\mathcal{M}(0)\leq \0, \mathcal{M}(T_2)\leq \0$ with respect to $P_1,P_2$ over a selected grid for the variables $\delta$ and $T_2$, one may be able to enlarge the maximum allowable transfer interval $T_2$ as well as to get a smaller value of $\gamma$.
\end{remark}
The maximum transfer time $T_2$ can be considered as a design parameter within an optimization scheme as the one outlined above. Indeed, with the aim of decreasing the amount of information needed to reconstruct the plant state, one may be interested, for some given positive values of $\gamma$ and $\lambda_t$, in designing the observer gains while maximizing the allowable value of $T_2$, that is the value of $T_2$ for which Problem~\ref{prob:Problem1} is feasible.
This can be accomplished by solving the following optimization problem:
\begin{equation}
\label{eq:OptTime}
\begin{aligned}
&\underset{P_1,P_2,L, H,\delta, T_2}{\minimize}\,\,\,-T_2\\
&\text{s.t.}\\
&P_1\in\Spnz, P_2\in\Spny, \mu>0, \delta>0, T_2>0, \chi\geq 0\\
&\mathcal{M}(0)\leq \0, \mathcal{M}(T_2)\leq \0
\end{aligned}
\end{equation}
Also in this case, the above optimization problem is difficult to solve in practice due to nonlinear  matrix inequality constraints. On the other hand, being the objective function of \eqref{eq:OptTime} linear in the decision variables,  thanks to the results given in Section~\ref{sec:LMI}, the above optimization problem can be solved (suboptimally) via semidefinite programming along with a bisection algorithm (see, \eg,\cite{Boyd}), with the only caveat of performing a grid search for the variable $\delta$.

Whenever one is interested in achieving both objectives simultaneously, the two above optimization problems can be blended together to give rise, for a given value of $\lambda_t>0$, to the following multiobjective optimization problem.
\begin{equation}
\label{eq:OptMulti}
\begin{aligned}
&\underset{P_1,P_2,L, H,\mu, \delta, T_2, \chi}{\minimize}(\text{w.r.t.}\,\,\R_{\geq 0}\times\R_{\geq 0})\,\,\,(-T_2,\mu)\\
&\text{s.t.}\\
&P_1\in\Spnz, P_2\in\Spny,\mu>0, \delta>0, \chi\geq 0\\
&\mathcal{M}(0)\leq \0, \mathcal{M}(T_2)\leq \0
\end{aligned}
\end{equation}
where $\minimize(\text{w.r.t.}\,\,\R_{\geq 0}\times\R_{\geq 0})$ stands for the componentwise minimum in $\R^2$ \cite{boyd2004convex}. 
An effective method used in practice to get ``good feasible points'' out of a (bidimensional) optimization problem consists of visualizing (an approximation of) the (Pareto) tradeoff curve and then selecting the most convenient solution; see \cite{miettinen2012nonlinear}. 
An approximation of such a curve can be obtained in our setting by solving the optimization problem in \eqref{eq:OptSingle} over a range of values for $T_2$.
The main advantage is that,  by relying on either Proposition~\ref{prop:propX80}, Proposition~\ref{prop:propX8X6}, or Proposition~\ref{prop:ZOH},  a suboptimal solution to \eqref{eq:OptSingle} can be obtained via off-the-shelf semidefinite programming software. 
\begin{remark}
Depending on the application, one could need to either enforce a certain convergence speed or to limit the number of sampling events. In any case, to avoid penalizing too much the convergence properties of the observer,
a suitable trade-off between this two antagonistic objectives needs to be considered. One of the strengths of our methodology is that (as for the case of the pair $(T_2,\gamma)$) it allows the designer to systematically build an approximation of trade-off curve for the objective $(T_2, \lambda_t)$. 
\end{remark}
\section{Numerical examples}
In this section, we showcase the effectiveness of our methodology in three examples. The first example is academic and pertains to the linear oscillator in \cite{karafyllis2009continuous}, for which we show how our suboptimal design allows to improve disturbance rejection and convergence speed. The second example is of practical interest and pertains to the path following unicycle robot in \cite{moarref2014observer}. In this example, we show how our design methodology allows for the design of a sample-and-hold observer and how this compares with other results in the literature. Finally, the third example is also of practical interest and pertains to a widely studied nonlinear plant in the context of observer design, \ie,  the flexible one-link manipulator \cite{howell2002nonlinear,spong1987modeling}. In this example, we show how the different deigns we propose compare each other and how the methodology presented in this paper leads to less conservative results when compared to existing approaches.

Numerical solutions to the semidefinite programming problems arising in the examples are obtained through the solver \emph{SDPT3} \cite{tutuncu2003solving} and coded in Matlab$^{\textregistered}$ via \emph{YALMIP} \cite{lofberg2004yalmip}. Simulations of hybrid systems are performed in Matlab$^{\textregistered}$ via the  
\emph{Hybrid Equations (HyEQ) Toolbox} \cite{sanfelice2013toolbox}.
\label{sec:Examples}
\begin{example}
\label{Example1}
In this first example, we want to show the improvement provided by our methodology with respect to existing results.  
Specifically, consider the example in \cite{0801.4824}, which is defined by the following data: 
$A=\begin{pmatrix}
0&1\\
-4&0
\end{pmatrix}, C=\begin{pmatrix}
1&0\end{pmatrix}$
as a performance output, we pick $C_p=\Id$ and as input matrix we select $N=\begin{pmatrix}
1&0\end{pmatrix}\tr$.
We solve the multi-objective optimization problem \eqref{eq:OptMulti} with $\lambda_t=0.05$. As already mentioned,  the suboptimal solution to such a problem can be obtained in a different way, depending on which result is exploited to solve the underlying single objective optimization problem \eqref{eq:OptSingle}. To give a complete panorama of our methodology, in \figurename~\ref{fig:MultiObj} we show the resulting tradeoff curve for each of the proposed results.
\begin{figure}[!htbp]
\centering
\psfrag{T}[][][1.1]{$T_2$}
\psfrag{g}[][][1.1]{$\gamma$}
 \includegraphics[width=0.5\textwidth]{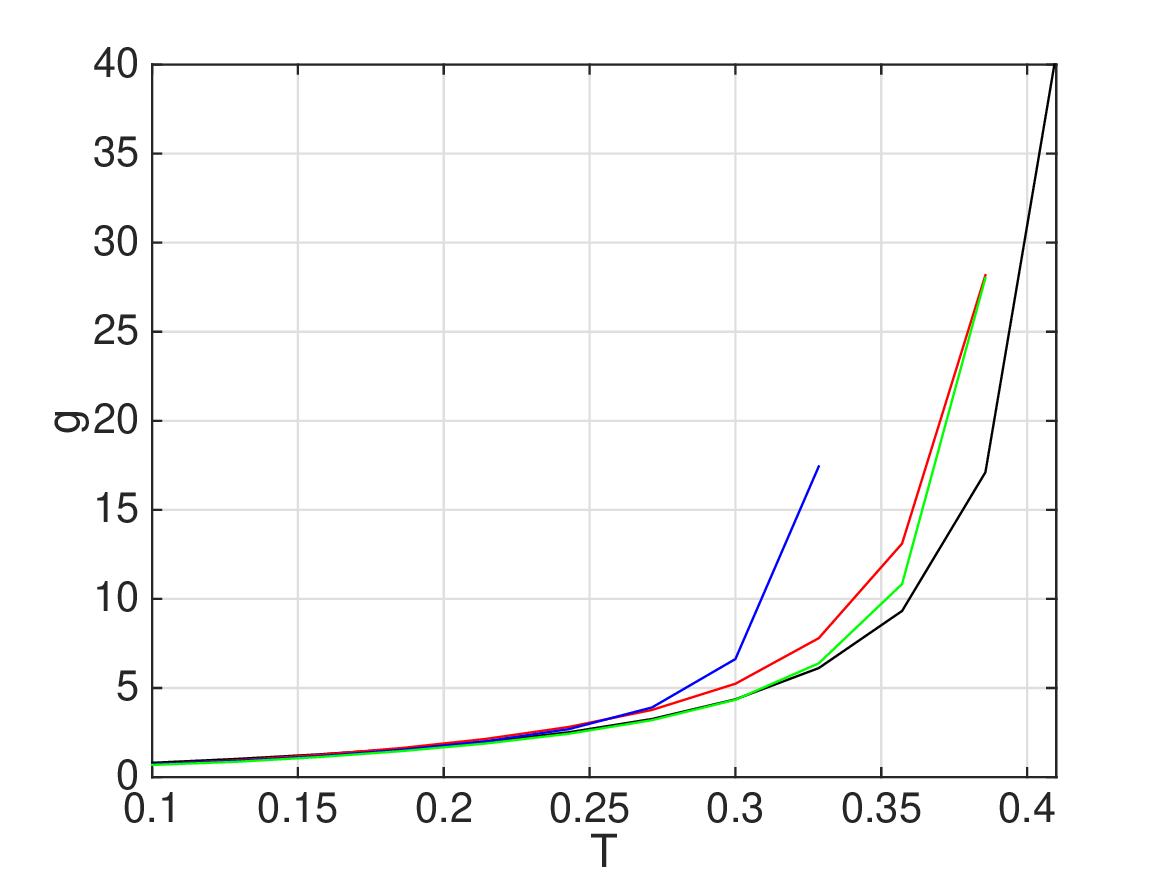}
\caption{Tradeoff curves obtained by considering different relaxations: Proposition~\ref{prop:P2:Chap3:propPred} (black), Proposition~\ref{prop:ZOH} (red), Proposition~\ref{prop:propX8X6} (green), and Proposition~\ref{prop:propX80} (blue).}
\label{fig:MultiObj}
\end{figure}
\figurename~\ref{fig:MultiObj} points out that, in this example, Proposition~\ref{prop:P2:Chap3:propPred} gives the best result overall. 

In \cite{ferrante2015hybrid}, it is shown that for $T_2=0.41$, the pair
\begin{equation}
L\tr=\begin{pmatrix}
0.3648&-0.4655
\end{pmatrix}, H=-CL=-0.3648
\label{eq:OldGains}
\end{equation}
provides a solution to item $(P1)$ in Problem~\ref{prob:Problem1}, for $\lambda_t$ small enough.
On the other hand, \figurename~\ref{fig:MultiObj} shows that $T_2=0.41$ corresponds to a feasible solution to \eqref{eq:OptMulti}, when one relies on Proposition~\ref{prop:P2:Chap3:propPred} as a design result. In particular, the Pareto (sub) optimal solution associated to such a value of $T_2$ is characterized by the following data\footnote{To avoid the occurrence of an overly large norm for the gain $L$, which would give rise to numerical and implementation issues, in the solution to \eqref{eq:OptSingle} we considered a further constraint aimed at limiting the norm of $L$.}
\begin{equation}
L=\begin{pmatrix}
 2.067\\
-3
\end{pmatrix}, H=-1.384, \gamma=36
\label{eq:NewGains}
\end{equation}
To show the effectiveness of the proposed suboptimal design,  in \figurename~\ref{fig:Comprison_Disturbance}, we compare two solutions $\phi^a=(\phi^a_{z},\phi^a_{\varepsilon}, \phi^a_{\thetatilde}, \phi^a_{\tau})$ and $\phi^b=(\phi^b_{z},\phi^b_{\varepsilon}, \phi^b_{\thetatilde}, \phi^b_{\tau})$ to $\mathcal{H}_e$, obtained, respectively, for the suboptimal gains in \eqref{eq:NewGains} and for the gains in \eqref{eq:OldGains} from zero initial conditions in response to the following exogenous input $\tilde{w}\in\mathcal{L}_2$
$$
\tilde{w}(t)=\begin{cases}
-1& t\in[0,5]\\
1& t\in(5,10]\\
-1& t\in(10,15]\\
0& t>15
\end{cases}
$$
In this simulation, $T_1=0.5 T_2$, $\phi^a(0,0)=\phi^b(0,0)=(0,0,0, 0, 0, T_2)$, 
$\phi^a_{\tau}=\phi^b_{\tau}\coloneqq\phi_{\tau}$, and for each 
$(t_j, j+1)\in\dom\phi_a=\dom\phi_b$
\begin{equation}
\label{eq:sequence_gen}
{\footnotesize\begin{aligned}
\phi_{\tau}(t_j, j+1)=\frac{T_2-T_1}{2}\sin(10 t_j)+\frac{T_2+T_1}{2}
\end{aligned}}
\end{equation}
\begin{figure}[!htbp]
\centering
\psfrag{t}[][][1.2]{$t$}
\psfrag{ep1a,ep1b}[][][1.2]{$\ep_1$}
\psfrag{ep2a,ep2b}[][][1.2]{$\ep_2$}
\psfrag{w}[][][1.2]{$\tilde{w}$}
 \includegraphics[width=0.5\textwidth]{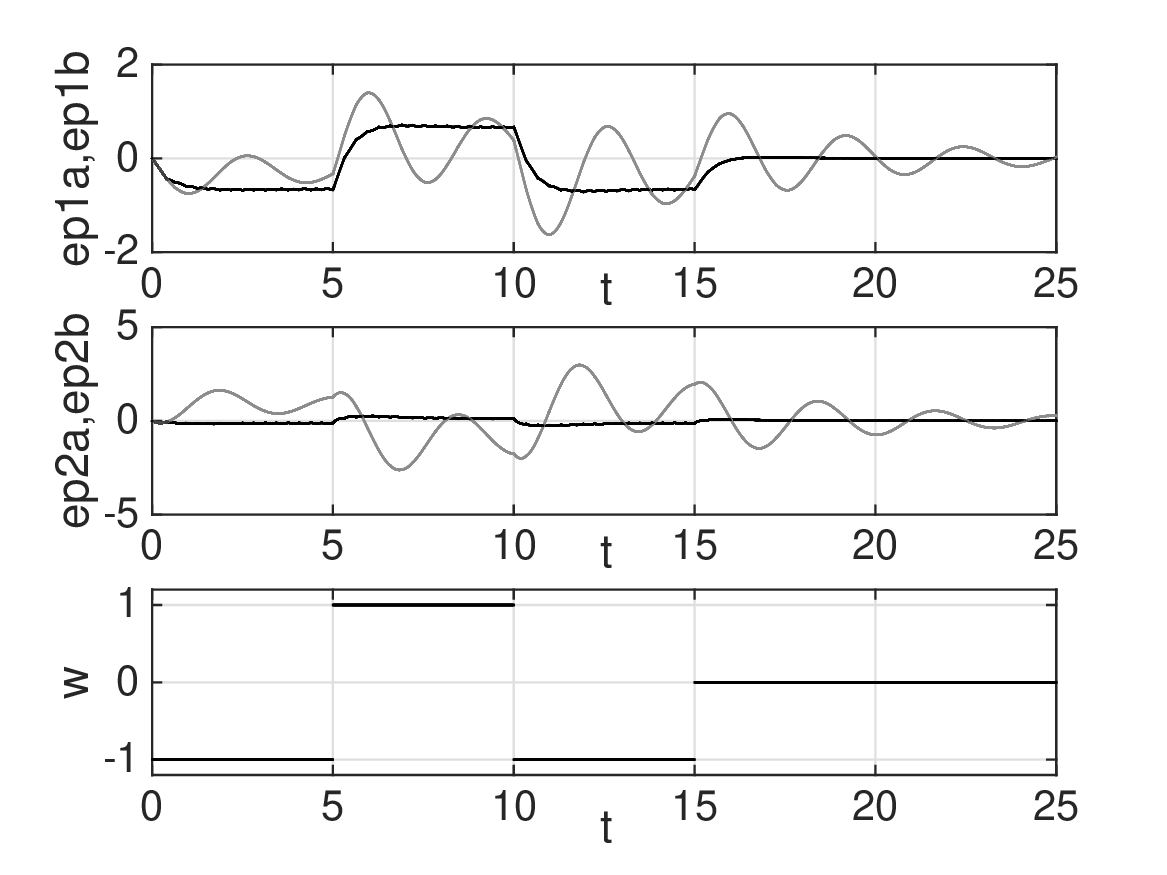}
\caption{Evolution of $\phi^{a}$ (black) and of $\phi^{b}$ (gray) projected onto ordinary time, and $t\mapsto \tilde{w}$.}
\label{fig:Comprison_Disturbance}
\end{figure}
Simulations show that the proposed suboptimal design leads to better performances in terms of rejection of the exogenous perturbation. 
\begin{figure}[!htbp]
\centering
\psfrag{t}[][][1.2]{$t$}
\psfrag{ep1a,ep1b}[][][1.2]{$\ep_1$}
\psfrag{ep2a,ep2b}[][][1.2]{$\ep_2$}
\psfrag{tha,thb}[][][1.2]{$\thetatilde$}
\includegraphics[width=0.5\textwidth]{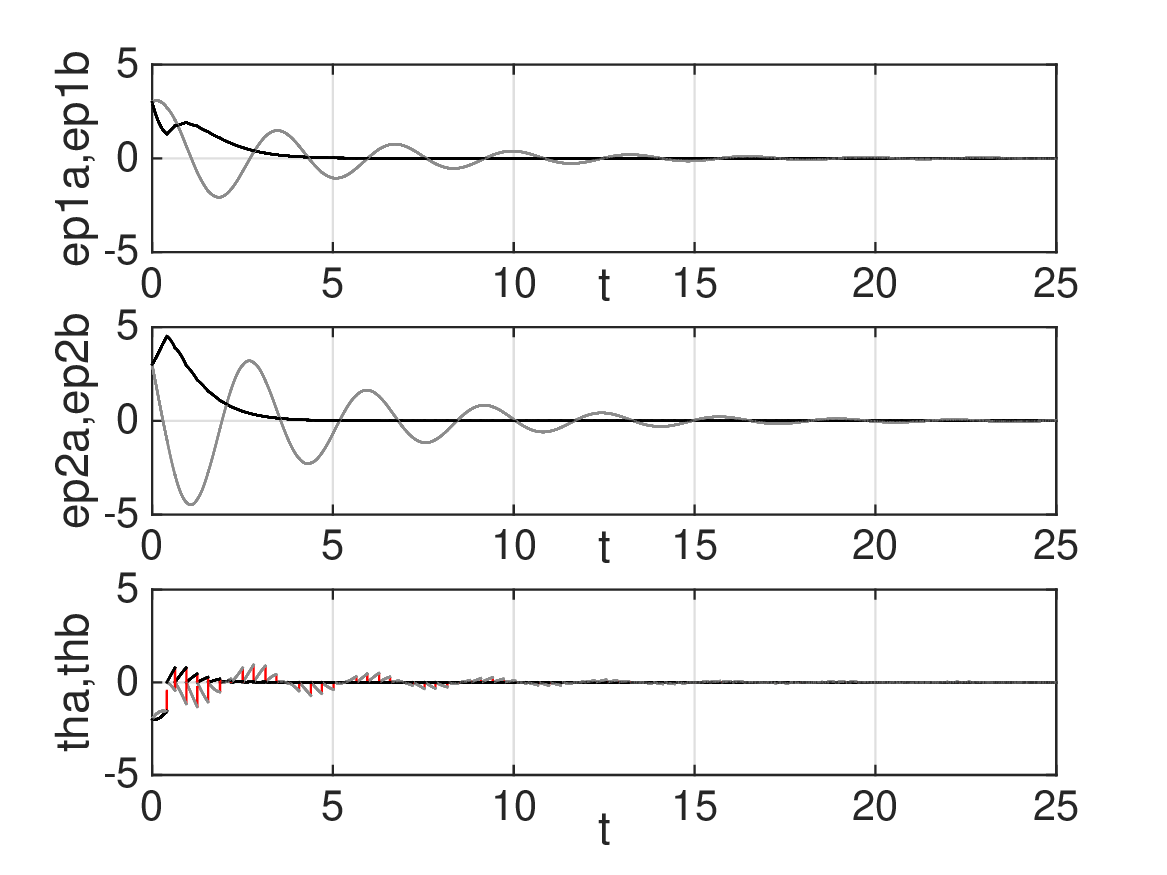}
\caption{Evolution of $\varphi^a$ (black) and 
$\varphi^b$ (gray) projected onto ordinary time.}
\label{fig:Comprison_DisturbanceFree}
\end{figure}
To analyze the convergence of the estimation error in absence of exogenous inputs, in \figurename~\ref{fig:Comprison_DisturbanceFree} we compare two solutions $\varphi^a=(\varphi^a_{z},\varphi^a_{\varepsilon}, \varphi^a_{\thetatilde}, \varphi^a_{\tau})$ and $\varphi^b=(\varphi^b_{z},\varphi^b_{\varepsilon}, \varphi^b_{\thetatilde}, \varphi^b_{\tau})$ to $\mathcal{H}_e$ obtained, respectively, with the gains in \eqref{eq:NewGains} and with the  gains in \eqref{eq:OldGains} and with $w\equiv 0, \eta\equiv 0$. In this simulation, $T_1=0.5 T_2$, $\varphi^a_{\tau}(0,0)=\varphi^b_{\tau}(0,0)=(1, 1,3,3, -2, T_2)$,  and $\varphi^a_{\tau}=\varphi^b_{\tau}\coloneqq\varphi_{\tau}$, where $\varphi_{\tau}$ satisfies \eqref{eq:sequence_gen}.
Simulations show that the proposed suboptimal design, thanks to specification of a certain $t$-decay rate, ensures also a faster convergence of the estimation error and of the error $\thetatilde$.
\end{example}
\begin{example}
Consider the linearized model of the path following unicycle robot in \cite{moarref2014observer}, that is defined as follows
\begin{equation}
\dot{z}=\underbrace{\begin{pmatrix} 
0 & 0 & 1\\ 
0 & -0.01 & 0\\ 1
1 & 0 & 0\end{pmatrix}}_{A}z+\underbrace{\begin{pmatrix} 0\\ 1\\0 \end{pmatrix}}_{N}w
\end{equation}
where $z_1$ is the distance of the robot to the target line, $z_2$ is the heading angle, $z_3$ is the yaw angular speed, and $w$ is an external torque. 
Assume that $z_1$ and $z_3$ can be measured with sampling time $T_1=0.1714s$ affected by an uncertain jitter $\Delta_{T_1}$. Namely, the vector $y(t)=(z_1(t),z_3(t))$ is measured only at certain time instances $t_k$, for $k\in\mathbb{N}_{>0}$, where the sequence $\{t_k\}_{k=1}^\infty$ fulfills \eqref{eq:P2:Chap3:timebound} with $T_2=T_1+\Delta_{T_1}$.
Under these assumptions, we want to design an observer providing an estimate $\hat{z}$ of the state $z$ for the largest allowable jitter $\Delta_{T_1}$ while minimizing the $\mathcal{L}_2$ gain from the exogenous input $w$ to the performance output $y_p=z_2-\hat{z}_2$. Moreover, to guarantee a certain performance in the convergence speed, we want to enforce a decay rate $\lambda_t=0.2$.
The considered problem can be put into the setting of Problem~\ref{prob:Problem1} by taking
$\begin{aligned}
&C_p=\begin{pmatrix}
0&1&0
\end{pmatrix},
&M=\begin{pmatrix}
1&0&0\\
0&0&1
\end{pmatrix}
\end{aligned}$.
Therefore, to achieve a tradeoff between robustness to sampling time jitter and disturbance rejection, we design the observer via the solution to the multi-objective optimization problem \eqref{eq:OptMulti}. In particular, to compare our results with more classical approaches based on sampled-data observers, we designed the observer via Proposition~\ref{prop:ZOH}, which enforces $H=0$, leading to the same observer in \cite{moarref2014observer}. 
The resulting tradeoff curve is depicted in \figurename~\ref{fig:TradeOffRobot}.
\begin{figure}[!htbp]
\centering
\centering
\psfrag{g}[][][1]{$\gamma$}
\psfrag{deltaT1}[][][1.2]{$\frac{\Delta_{T_1}}{T_1}$}
\includegraphics[width=0.5\textwidth]{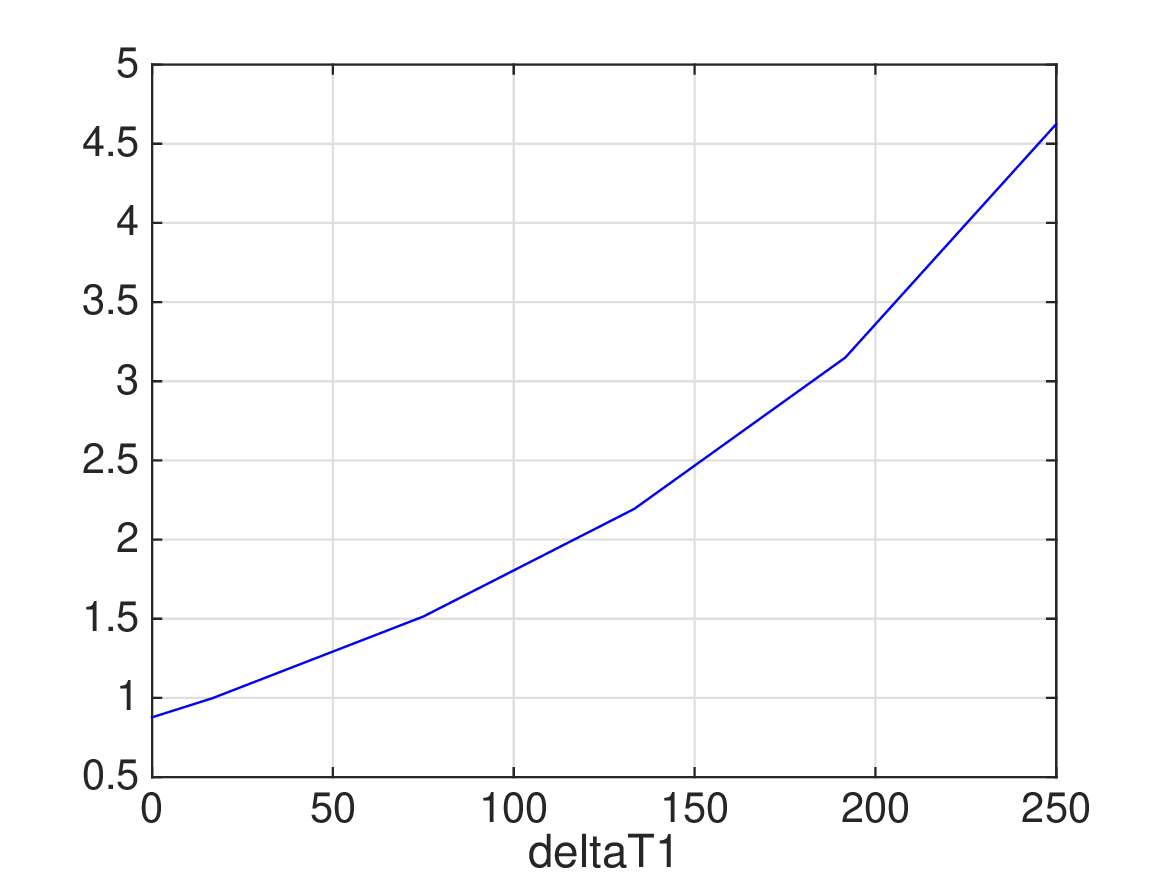}
\caption{Tradeoff curve versus the amplitude of the relative jitter in percentage.}
\label{fig:TradeOffRobot}
\end{figure}
\begin{figure}[!htbp]
	\centering
\psfrag{t}[][][1.2]{$t$}
\includegraphics[width=0.5\textwidth]{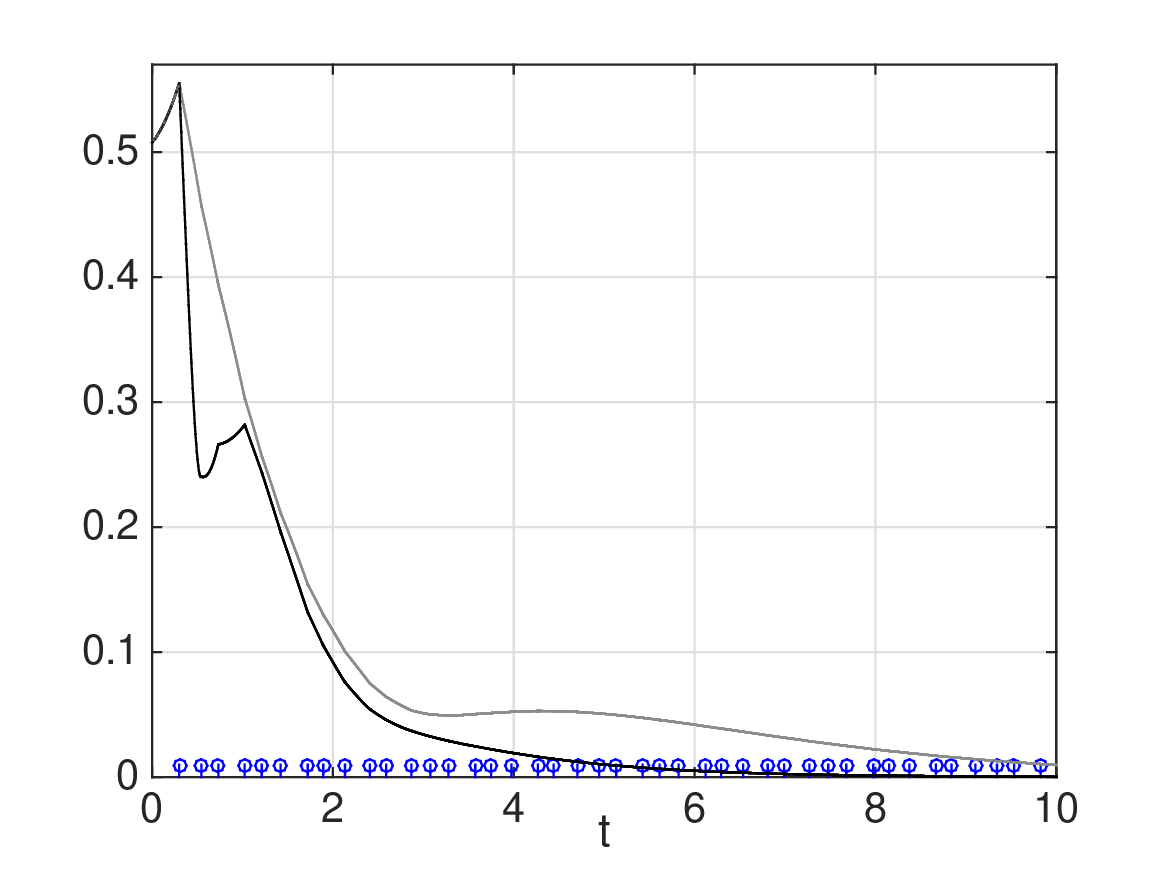}
\caption{Evolution of $\vert \varphi^a\vert_{\mathcal{A}}$ (black) and $\vert \varphi^b\vert_{\mathcal{A}}$ (gray) projected onto ordinary time. The blue bullets denote the sampling instances.}
\label{fig:FreeRobot}
\end{figure}
By selecting the tradeoff value $(T_2, \gamma)=(0.3, 1.5140)$ that corresponds to a relative jitter of $75\%$ with respect to the nominal sampling time, one gets 
$L=\left(\begin{smallmatrix}
3.7 & -2.194\\ 2.908 & -2.075\\ 1.637 & 0.1545
\end{smallmatrix}\right)$.
In \cite{moarref2014observer}, the authors show that the gain
$L_2=\left(\begin{smallmatrix}
0.8079  & 0.2555\\
0.2071  & 0.0550\\
0.7609  & 0.7714
\end{smallmatrix}\right)$
provides a solution to the considered estimation problem for $T_2=0.3$.
To show the effectiveness of the proposed suboptimal design,  in \figurename~\ref{fig:DisturbanceRobot}, we report two solutions $\phi^a=(\phi^a_{z},\phi^a_{\varepsilon}, \phi^a_{\thetatilde}, \phi^a_{\tau})$ and $\phi^b=(\phi^b_{z},\phi^b_{\varepsilon}, \phi^b_{\thetatilde}, \phi^b_{\tau})$ to $\mathcal{H}_e$ obtained, respectively, in correspondence to the gain $L$ and $L_2$,  from zero initial conditions in response to the following exogenous input $\tilde{w}\in\mathcal{L}_2$
$$
\tilde{w}(t)=\begin{cases}
1& t\in[0,2]\\
0& t\in(2,6]\\
-1& t\in(6,8]\\
0& t>8\\
\end{cases}
$$
Analogously to Example~\ref{Example1}, also in this simulations $\phi^a_{\tau}=\phi^b_{\tau}\coloneqq \phi_{\tau}$, where  $\phi_{\tau}$ satisfies  \eqref{eq:sequence_gen}.
\begin{figure}[!hbp]
\centering
\psfrag{t}[][][1]{$t$}
\psfrag{ep1}[][][1.2]{$\ep_1$}
\psfrag{ep2}[][][1.2]{$\ep_2$}
\psfrag{ep3}[][][1.2]{$\ep_3$}
 \includegraphics[trim={2cm 0 0.2cm 0},clip, width=0.52\textwidth]{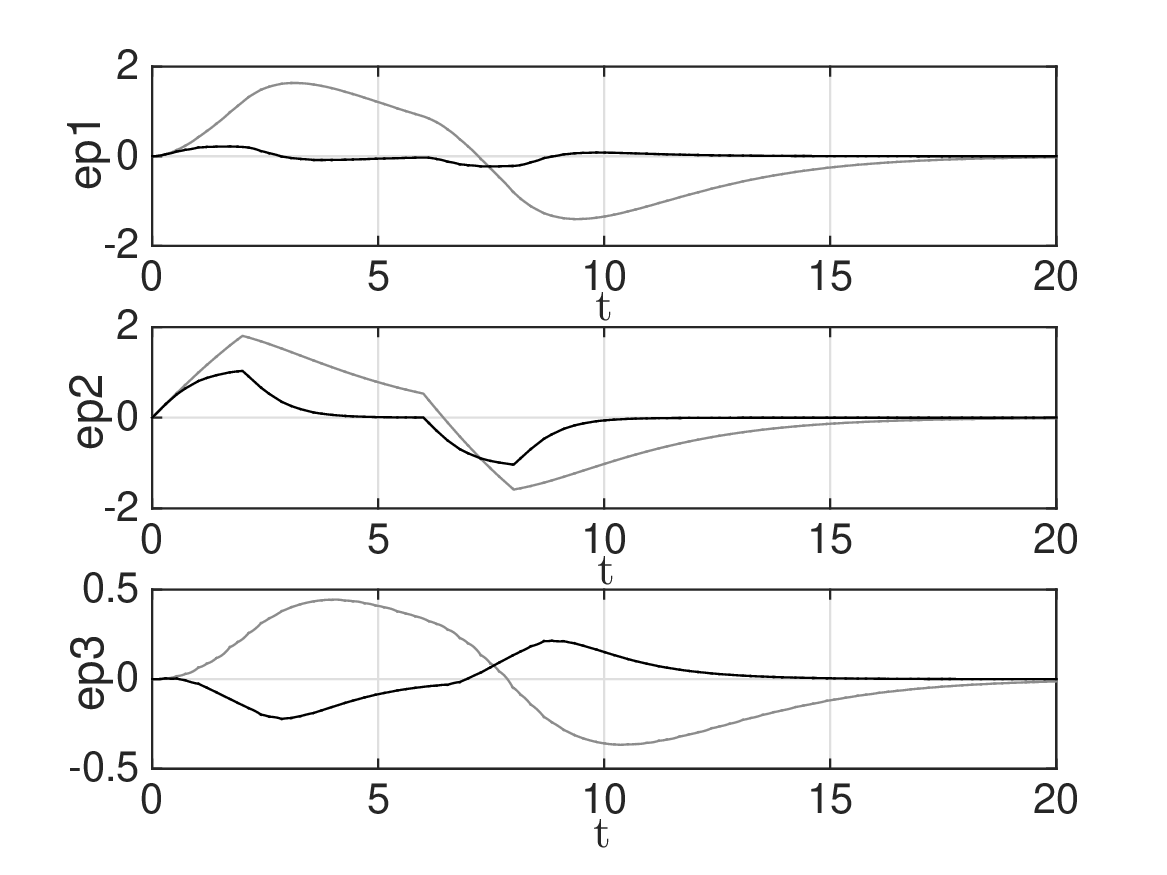}
\caption{Evolution of $\phi^a$ (black) and $\phi^b$ (gray) projected onto ordinary time.}
\label{fig:DisturbanceRobot}
\end{figure}

Simulations show that the proposed design provides better performance in terms of disturbance rejection. Finally, in \figurename~\ref{fig:FreeRobot}, we compare two solutions $\varphi^a$ and $\varphi^b$ to $\mathcal{H}_e$ obtained with $w\equiv 0$, respectively, with the gain $L$ and $L_2$. In this simulation, 
$\varphi^a(0,0)=\varphi^b=(0.5, 0.0873,0, 0.5, 0.0873,0.5,0,0, T_2)$ and, as in the former simulation, the $\tau$-component of both solutions coincide and satisfy  \eqref{eq:sequence_gen}.
Simulations point out that the design we propose not only provide improved disturbance rejection but also ensures a faster transient response with respect to a non-optimal design.    
\end{example}
\begin{example}
Consider the following model of the flexible one-link manipulator \cite{howell2002nonlinear,spong1987modeling}
$$
\begin{aligned}
&\dot{z}=\underbrace{\left(\begin{smallmatrix}0& 1 &0 &0\\
-48.6& -1.25& 48.6& 0\\
0 &0 &0 &1\\
19.5& 0 &-19.5& 0\end{smallmatrix}\right)}_{A}z+\left(\begin{smallmatrix}0\\ 0\\ 0\\ -3.33\end{smallmatrix}\right)\sin(z_3)+\underbrace{\left(\begin{smallmatrix}0\\ 2\\ 0\\ 0
\end{smallmatrix}\right)}_{N}w\\
&y=\underbrace{\left(\begin{smallmatrix}
1& 0 &0 &0\\
0 &1 &0 &0
\end{smallmatrix}\right)}_{C}z
\end{aligned}
$$
where $z_1$ and $z_2$ are, respectively, the motor shaft angle and the motor shaft angular speed, while $z_3$ and $z_4$ are, respectively, the link angle and the link angular speed. The exogenous input $w$ represents a disturbance torque acting on the motor shaft.
\begin{figure}[!htbp]
	\centering
	\psfrag{T2}[][][0.9]{$T_2$}
	\psfrag{gamma}[][][0.9]{$\gamma$}
	\includegraphics[width=0.5\textwidth]{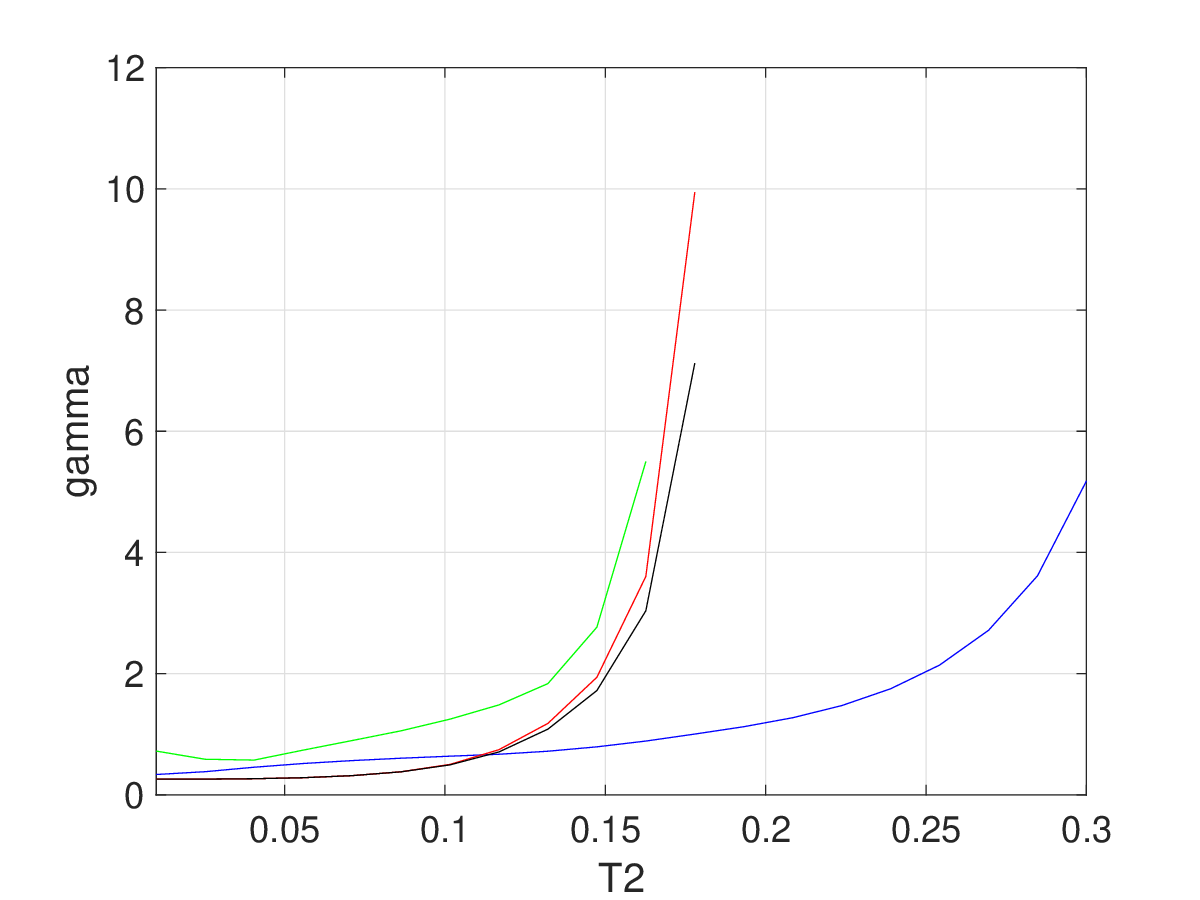}
	\caption{Tradeoff curves obtained by considering different relaxations: Proposition~\ref{prop:P2:Chap3:propPred} (blue), Proposition~\ref{prop:propX80} (red), Proposition~\ref{prop:propX8X6} (black), Proposition~\ref{prop:ZOH} (green).}
	\label{fig:MultiNonLinear}
\end{figure}
Assuming the output $y$ can be measured sporadically, we want to design an observer providing an estimate $\hat{z}$ of $z$ while reducing the effect of the exogenous signal $w$ on the estimate of the unmeasured link variables $z_3$ and $z_4$.
By setting $B=\left(\begin{smallmatrix}0 &0& 0&-1\end{smallmatrix}\right)\tr,  S=\left(\begin{smallmatrix}0& 0 &1 &0\end{smallmatrix}\right)$, $\ell=3.3$, $C_p=\left(\begin{smallmatrix}0& 0 &1 &0\\0& 0 &0 &1\end{smallmatrix}\right)$,
the considered plant can be rewritten as \eqref{eq:P2:Chap3:Plant}, so that the methodology proposed in the paper can be applied. 
Figure~\ref{fig:MultiNonLinear} shows the tradeoff curves associated with the multiobjective optimization \eqref{eq:OptMulti} obtained via the proposed SDP-based relaxation schemes issued from Proposition~\ref{prop:P2:Chap3:propPred}, Proposition~\ref{prop:propX80}, Proposition~\ref{prop:propX8X6}, and Proposition~\ref{prop:ZOH}; in this example $\lambda_t=0.01$, $\delta$ is selected over a grid of $100$ points spanning the interval $[1, 100]$, and $T_2$ is selected over a grid of $20$ points spanning the interval $[0.01, 0.3]$. 
To reduce the conservatism in the estimation on the $\mathcal{L}_2$ gain, in the case of Proposition~\ref{prop:ZOH} a further analysis stage based on Theorem~\ref{Theorem1} is included in the solution to  
multiobjective optimization \eqref{eq:OptMulti}. As pointed out earlier, each relaxation leads to a different number of scalar variables in the resulting LMIs, which in turn reflects on a different computational complexity. Table~\ref{table:ComplexityExample} reports the number of scalar variables and the computation time of the tradeoff curve for each relaxation scheme\footnote{For the case of Proposition~\ref{prop:ZOH}, the computation time includes the additional analysis stage. When such an analysis stage is not considered, the computation time decreases to $1192.226 s$, that is $6.9960 s$ smaller.}. Computations are performed on an \emph{iMac 3.2 GHz Intel Core i5 RAM 16 GB}.
{\small\begin{table}[h]
\centering
\begin{tabular}{lll}
\toprule
Design& \# scalar variables& Time [s]\\ [0.1ex] 
\midrule
Prop.~\ref{prop:P2:Chap3:propPred}&$26$&$716.642$\\	
Prop.~\ref{prop:propX80}&$46$&$866.417$\\	
Prop.~\ref{prop:ZOH}&$86$&$1199.222$\\

\hline
\end{tabular}
\caption{Number of scalar variables and computation time associated to the different designs.}
\label{table:ComplexityExample}
\end{table}}
In \cite{raff2008observer}, sufficient conditions in the form of LMIs are given for the design of a sample-and-hold observer that solves item $(P1)$ of Problem~\ref{prob:Problem1}. In particular for this example, the conditions given in \cite{raff2008observer} are feasible for $T_2$ up to $0.1$. Figure~\ref{fig:MultiNonLinear} shows that our methodology allows not only to guarantee robustness with respect to external inputs and $\mathcal{L}_2$-gain performance, but also leads to a larger allowable value for $T_2$. Specifically, $T_2$ can be selected up to $0.3$, \ie, an improvement of $200\%$ with respect to \cite{raff2008observer}.

With the aim of getting a good trade-off between the reduction of the effect of the external disturbance on the performance output $y_p$ and the allowable value of $T_2$, we selected $T_2=0.1$, which leads, for each relaxation scheme, to $\gamma<1$.  
For such a value of $T_2$, in \figurename~\ref{fig:NonLinearForcedResponse}, we compare the components $\ep_3$ and $\ep_4$ of the solutions $\phi^a$, $\phi^b$, and $\phi^c$ to $\mathcal{H}_e$, obtained in correspondence to the gains designed via, respectively, Proposition~\ref{prop:P2:Chap3:propPred}, Proposition~\ref{prop:propX80}, and Proposition~\ref{prop:ZOH} from zero initial conditions in response to the following exogenous input $\tilde{w}\in\mathcal{L}_2$
${\footnotesize\tilde{w}(t)=\begin{cases}
	\sin(2t)&t\in[0,20]\\
	0
	\end{cases}}$. As in the former simulation, the $\tau$-component of all solutions coincide and satisfy  \eqref{eq:sequence_gen}.
The picture shows that the design based on Proposition~\ref{prop:P2:Chap3:propPred} provides the best result in terms of disturbance rejection.
\begin{figure}[!htbp]
	\centering
	\psfrag{t}[][][1.2]{$t$}
	\psfrag{ep3}[][][1.2]{$\ep_3$}
	\psfrag{ep4}[][][1.2]{$\ep_4$}
	\includegraphics[trim={1cm 0 0.2cm 0},clip, width=0.5\textwidth, clip]{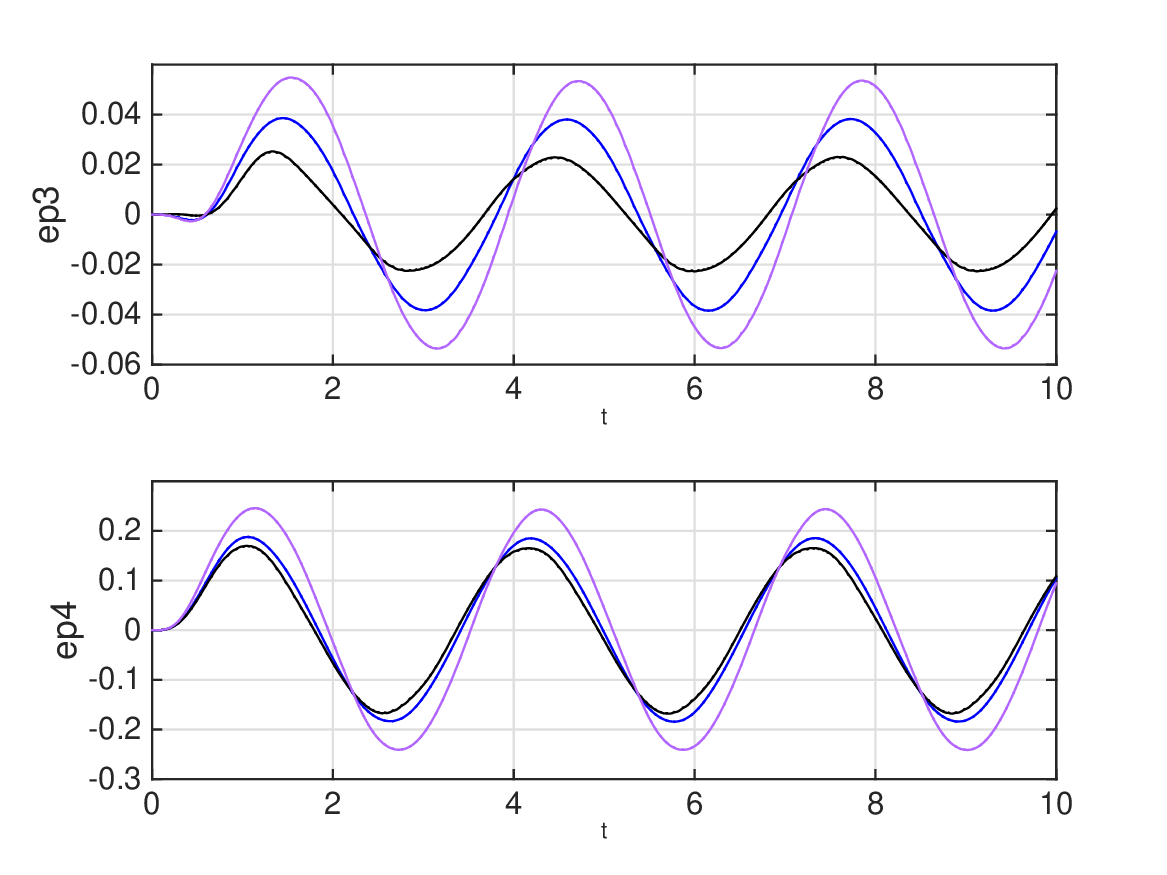}
	\caption{Evolution of $\phi^a$ (black), $\phi^b$ (blue), and $\phi^c$ (purple) projected onto ordinary time.}
	\label{fig:NonLinearForcedResponse}
\end{figure}

Before concluding this example, we want to show how our approach can be used to get an estimate of the largest allowable value of $T_2$ for a given design and how such an estimate compares with other approaches not relying on LMIs. More precisely, as pointed out in Remark~\ref{rem:pred}, by selecting $H=-CL$, the proposed observer 
coincides with the predictor-based observer scheme proposed in \cite{karafyllis2009continuous} and 
for which the results in \cite{postoyan2012framework} can be used to estimate the 
largest allowable value $T_2$ for a given gain $L$.
In particular, let us consider the following gain from \cite[Chapter 6.6.2]{postoyan2009Thesis} 
$$
L=\begin{pmatrix}
 9.328 & 1\\ -48.78 & 22.11\\ -0.0524 & 3.199\\ 19.41 & -0.9032
\end{pmatrix}
$$
and set $H=-CL$. 
An estimate of the largest allowable value $T_2$ for the given gains can be obtained by determining the largest value of $T_2$ for which \eqref{eq:M} are feasible. Notice that
when $L, H$, and $\delta$ are given, \eqref{eq:M} are LMIs, thus feasibility of those can be checked via semidefinite programming software.     
By picking $\lambda_t=0.01$, and by performing a line search on the scalar $\delta$, it turns out that \eqref{eq:M} are feasible for $T_2$ up to $0.1016$. In \cite[Chapter 6.6.2]{postoyan2009Thesis}, the authors show that the approach in \cite{postoyan2012framework} leads to an estimate of the largest allowable value of $T_2$ equal to $1.08\times 10^{-8}$. This shows how our approach allows one to get less conservative estimates of the largest allowable value of $T_2$. 
\end{example}
\section{Conclusion}
Building from the general ideas in \cite{karafyllis2009continuous}, this paper proposed a novel methodology to design, via linear matrix inequalities, an observer with intersample injection to exponentially estimate, with a given decay rate, the state of a continuous-time Lipschitz nonlinear system in the presence of sporadically available measurements. Moreover, the observer is robust to measurement noise, plant disturbances, and ensures a given level of performance in terms of  
$\mathcal{L}_2$-gain between plant exogenous disturbances and a given performance output.

Pursuing a unified approach, we provided several design methodologies to design the observer based on semidefinite programming. Two of them 
 lead back respectively to the observer scheme proposed in \cite{karafyllis2009continuous} and to the zero order sample-and-hold proposed in \cite{raff2008observer},  while the remaining lead to completely novel schemes. 
Several suboptimal design algorithms based on semidefinite programming are presented for the observer.
Numerical experiments underlined the significance of the proposed suboptimal design and showcased some interesting results of practical relevance. 
\appendix
\begin{lemma}
\label{lemma:KLj}
Let $(\phi,u)$ be a maximal solution pair to $\mathcal{H}_e$, and $\lambda_t$ be a strictly positive real number. Pick $(t, j)\in\dom\phi$ and let $0=t_0\leq t_1\leq t_2\leq\dots\leq t_{j+1}=t$ be such that $\dom\phi\cap\left([0,t]\times\{0,1,\dots,j\}\right)=\cup_{i=0}^j \left([t_i, t_{i+1}]\times \{i\}\right)$.  Then one has
$\sum_{i=1}^j e^{-2\lambda_t (t-t_i)}\leq \frac{e^{4\lambda_t T_1}}{e^{2\lambda_t T_1}-1}$.
\end{lemma}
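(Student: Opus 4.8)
The plan is to exploit the uniform lower bound $T_1$ on the time between consecutive jumps, which is recorded in the structure \eqref{eq:Dom}--\eqref{eq:Domb} of the hybrid time domain, and then to dominate the resulting finite sum by a convergent geometric series.

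First I would note that, since $0=t_0\le t_1\le\dots\le t_{j+1}=t$, for every $i\in\{1,\dots,j\}$ one can telescope $t-t_i=\sum_{k=i}^{j}(t_{k+1}-t_k)$, and each summand with $k\ge 1$ satisfies $t_{k+1}-t_k\ge T_1$ by \eqref{eq:Domb}. Since the index $k$ here ranges over $\{i,\dots,j\}\subset\{1,\dots,j\}$, this yields $t-t_i\ge (j+1-i)T_1$ for all $i\in\{1,\dots,j\}$, and hence, using $\lambda_t>0$, $e^{-2\lambda_t(t-t_i)}\le e^{-2\lambda_t(j+1-i)T_1}$.

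Next I would reindex the sum by $m=j+1-i$, so that $m$ runs over $\{1,\dots,j\}$, and bound it by the full geometric series: with $r\coloneqq e^{-2\lambda_t T_1}\in(0,1)$,
\[
\sum_{i=1}^{j}e^{-2\lambda_t(t-t_i)}\le \sum_{m=1}^{j}r^{m}\le\sum_{m=1}^{\infty}r^{m}=\frac{r}{1-r}=\frac{1}{e^{2\lambda_t T_1}-1}\le\frac{e^{4\lambda_t T_1}}{e^{2\lambda_t T_1}-1},
\]
where the last inequality holds because $e^{4\lambda_t T_1}\ge 1$ and $e^{2\lambda_t T_1}-1>0$. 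This is exactly the claimed estimate; the case $j=0$ is trivial since the left-hand side is then the empty sum.

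There is essentially no hard step here. The only point requiring care is that the first interval $[t_0,t_1]=[0,t_1]$ need not have length at least $T_1$ — it satisfies only $0\le t_1\le T_2$ — so one must not invoke $t_{k+1}-t_k\ge T_1$ for $k=0$; restricting the telescoping sum to indices $k\ge 1$, as above, circumvents this. The gap between $1/(e^{2\lambda_t T_1}-1)$ and the stated bound shows the estimate is not tight, which is of no consequence for its use in the proof of Theorem~\ref{theorem:P2:Chap3:Main}.
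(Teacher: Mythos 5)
Your overall strategy---telescoping $t-t_i$ into inter-jump intervals and dominating the resulting sum by a geometric series---is the same as the paper's, but one step as written is false. You assert that every summand $t_{k+1}-t_k$ with $k\ge 1$ is at least $T_1$, and hence that $t-t_i\ge (j+1-i)T_1$. This fails for $k=j$: in the lemma's notation $t_{j+1}=t$ is the current time at which the solution is being evaluated, not a jump time, so $[t_j,t]$ is only a portion of the $j$-th flow interval and its length can be anywhere in $[0,T_2]$. Concretely, taking $i=j$ your intermediate bound asserts $t-t_j\ge T_1$, which is violated whenever $(t,j)$ is picked with $t$ equal to, or close to, the jump time $t_j$. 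You were careful to exclude the first interval $[0,t_1]$, but the last interval has the symmetric defect and must be excluded as well.

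The gap is easily repaired and the conclusion survives: only the $j-i$ intervals $[t_k,t_{k+1}]$ with $k\in\{i,\dots,j-1\}$ (all of which have $k\ge 1$ and end at genuine jump times) are guaranteed to have length at least $T_1$, while $t-t_j\ge 0$; hence $t-t_i\ge (j-i)T_1$ for $i\in\{1,\dots,j\}$. With $r\coloneqq e^{-2\lambda_t T_1}$ this gives $\sum_{i=1}^{j}e^{-2\lambda_t(t-t_i)}\le\sum_{m=0}^{j-1}r^{m}\le \frac{1}{1-r}=\frac{e^{2\lambda_t T_1}}{e^{2\lambda_t T_1}-1}\le\frac{e^{4\lambda_t T_1}}{e^{2\lambda_t T_1}-1}$, which is the claimed estimate (in fact slightly sharper). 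This corrected version coincides in substance with the paper's proof: there the authors use the still more conservative bound $t-t_i\ge\max\{0,(j-i-1)T_1\}$, valid uniformly for $i\in\{0,\dots,j\}$ because it discounts both a possibly short first interval and the partial last one, and it is precisely this extra discounted interval that produces the factor $e^{4\lambda_t T_1}$ rather than $e^{2\lambda_t T_1}$ in the numerator of the stated bound.
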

\begin{proof}
First notice that for each $\mathbb{N}\ni i\leq j$, one has
\begin{equation}
\label{eq:SumTi}
t-t_i=\sum_{k=i+1}^{j} (t_{k}-t_{k-1})+(t-t_{j})
\end{equation} 
Pick $i\in\mathbb{N}$ with $i\leq j$, then from the structure of $\dom\phi$ given in \eqref{eq:Dom}, along with \eqref{eq:SumTi}, it follows that 
$t-t_i\geq \max\{0,(j-i-1)T_1\}$
which in turn yields
$$
\begin{array}{ll}
\sum_{i=1}^j e^{-2\lambda_t (t-t_i)}\leq& \sum_{i=1}^j e^{-2\lambda_t \max\{0,(j-i-1)T_1\}}\\
& \leq\sum_{i=1}^j e^{-2\lambda_t (j-i-1)T_1}\\
&=\frac{e^{4\lambda_t T_1}}{1-e^{2\lambda_t T_1}}  (e^{-2\lambda_t j T_1}-1)\\
&\leq\frac{e^{4\lambda_t T_1}}{e^{2\lambda_t T_1}-1}\\
\end{array}
$$
concluding the proof.
\end{proof}
\begin{lemma}
\label{claim:RangeM}
Let $P_1\in\Spnz,P_2\in\Spny$, $\delta, T_2,\chi$ be given positive scalars, and $L\in\mathbb{R}^{n_z\times n_y},H\in\mathbb{R}^{n_y\times n_y}$ be given matrices. For each $\tau\in[0, T_2]$ define $\mathcal{M}\colon\tau\mapsto \mathcal{M}(\tau)$. Then, $\range\mathcal{M}=\Co\{\mathcal{M}(0),\mathcal{M}(T_2)\}$.
\end{lemma}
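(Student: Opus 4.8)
The plan is to exploit that, as displayed in \eqref{eq:M1}, the matrix $\mathcal{M}(\tau)$ depends on $\tau$ only through the scalar factor $e^{\delta\tau}$. Inspecting the blocks of $\mathcal{M}(\tau)$ one checks directly that it can be written as
\[
\mathcal{M}(\tau)=\mathcal{M}_a+e^{\delta\tau}\mathcal{M}_b,
\]
where $\mathcal{M}_a$ collects the $\tau$-independent terms (the whole $(1,1)$ block, the $P_1L$ summand of the $(1,2)$ block, the $P_1N$ and $P_1B$ blocks, and the diagonal blocks $-\gamma^2\Id_{n_w}$, $-\chi\Id_{n_s}$) while $\mathcal{M}_b$ collects the coefficients of $e^{\delta\tau}$ (the $(CA-CLC-HC)\tr P_2$ summand of the $(1,2)$ block together with the entire $(2,2)$, $(2,3)$ and $(2,4)$ blocks). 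Both $\mathcal{M}_a$ and $\mathcal{M}_b$ are fixed symmetric matrices, and in particular $\mathcal{M}(0)=\mathcal{M}_a+\mathcal{M}_b$ and $\mathcal{M}(T_2)=\mathcal{M}_a+e^{\delta T_2}\mathcal{M}_b$.

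Next I would reduce the matrix statement to a scalar one. Since $\delta,T_2>0$, the map $\tau\mapsto e^{\delta\tau}$ is continuous and strictly increasing on $[0,T_2]$ with image exactly the interval $[1,e^{\delta T_2}]$, and $e^{\delta T_2}-1>0$. Define, for $\tau\in[0,T_2]$,
\[
\lambda(\tau)\coloneqq\frac{e^{\delta T_2}-e^{\delta\tau}}{e^{\delta T_2}-1}.
\]
This function is continuous, takes values in $[0,1]$ (because $1\leq e^{\delta\tau}\leq e^{\delta T_2}$), satisfies $\lambda(0)=1$ and $\lambda(T_2)=0$, and by construction $e^{\delta\tau}=\lambda(\tau)\cdot 1+(1-\lambda(\tau))\,e^{\delta T_2}$ for every $\tau\in[0,T_2]$. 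Substituting this identity into $\mathcal{M}(\tau)=\mathcal{M}_a+e^{\delta\tau}\mathcal{M}_b$ yields
\[
\mathcal{M}(\tau)=\lambda(\tau)\bigl(\mathcal{M}_a+\mathcal{M}_b\bigr)+(1-\lambda(\tau))\bigl(\mathcal{M}_a+e^{\delta T_2}\mathcal{M}_b\bigr)=\lambda(\tau)\mathcal{M}(0)+(1-\lambda(\tau))\mathcal{M}(T_2),
\]
which shows $\range\mathcal{M}\subseteq\Co\{\mathcal{M}(0),\mathcal{M}(T_2)\}$. For the reverse inclusion, fix any $\mu\in[0,1]$; since $\lambda$ is continuous on $[0,T_2]$ with $\lambda(0)=1$ and $\lambda(T_2)=0$, the intermediate value theorem provides $\tau_\mu\in[0,T_2]$ with $\lambda(\tau_\mu)=\mu$, hence $\mu\mathcal{M}(0)+(1-\mu)\mathcal{M}(T_2)=\mathcal{M}(\tau_\mu)\in\range\mathcal{M}$. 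Combining the two inclusions gives $\range\mathcal{M}=\Co\{\mathcal{M}(0),\mathcal{M}(T_2)\}$.

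I do not expect a genuine obstacle here: the only care needed is the bookkeeping in the first step, namely verifying by direct inspection of \eqref{eq:M1} that every entry of $\mathcal{M}(\tau)$ is affine in $e^{\delta\tau}$ (equivalently, that no block carries any other dependence on $\tau$). Once this is established, the result is the elementary fact that the range of a curve affine in a continuous, strictly monotone scalar parameter is precisely the segment joining its endpoints. The degenerate case $\mathcal{M}_b=\0$, in which $\mathcal{M}(0)=\mathcal{M}(T_2)$ and both sides of the claimed equality collapse to a single point, is covered by exactly the same formulas.
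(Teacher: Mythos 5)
Your proof is correct and follows essentially the same route as the paper: both rest on the observation that every block of $\mathcal{M}(\tau)$ is affine in $e^{\delta\tau}$ and on the identity $e^{\delta\tau}=\lambda(\tau)\cdot 1+(1-\lambda(\tau))e^{\delta T_2}$ with the same convex coefficient $\lambda(\tau)=\frac{e^{\delta T_2}-e^{\delta\tau}}{e^{\delta T_2}-1}$. The only cosmetic difference is that for the reverse inclusion the paper exhibits $\tilde{\tau}$ explicitly via a logarithm where you invoke the intermediate value theorem; both are fine.
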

\begin{proof}
To make the proof easier to follow, let us consider the following partitioning of the matrix $\mathcal{M}(\tau)$
$$
\scriptsize{
\mathcal{M}(\tau)=\left(\begin{array}{c|c|c}
\mathcal{M}_{1}&\mathcal{M}_{2}+e^{\delta\tau}\mathcal{M}_{3}&\mathcal{M}_{4}\\
\hline
\bullet&\mathcal{M}_{5}+e^{\delta\tau}\mathcal{M}_{6}&e^{\delta\tau}\mathcal{M}_{7}\\
\hline
\bullet&\bullet&\mathcal{M}_{8}
\end{array}\right)}
$$
where the corresponding blocks can be determined by simple comparison of the expression of $\mathcal{M}(\tau)$ given in \eqref{eq:M1}. 
Observe that for any $\tau\in[0,T_2]$, one has:
\begin{equation}
e^{\delta\tau}=\underbrace{\frac{e^{\delta\tau}-e^{\delta T_2}}{1-e^{\delta T_2}}}_{\lambda_1(\tau)}+\underbrace{\frac{1-e^{\delta\tau}}{1-e^{\delta T_2}}}_{\lambda_2(\tau)}e^{\delta T_2}
\label{eq:expDelta}
\end{equation}
where for each $\tau\in[0,T_1]$, $\lambda_1(\tau),\lambda_2(\tau)$ are nonnegative and such that $\lambda_1(\tau)+\lambda_2(\tau)=1$. Therefore, for each $\tau\in[0,T_2]$
\begin{equation}
\mathcal{M}(\tau)=\lambda_1(\tau)\mathcal{M}(0)+\lambda_2(\tau)\mathcal{M}(T_2)
\label{eq:M1Convex}
\end{equation}
which implies that $\range(\tau\mapsto\mathcal{M}(\tau))\subset\Co\{\mathcal{M}(0),\mathcal{M}(T_2)\}$. To conclude the proof, we show that $\range(\tau\mapsto\mathcal{M}(\tau))\supset\Co\{\mathcal{M}(0),\mathcal{M}(T_2)\}$. Pick
$\widetilde{M}\in\Co\{\mathcal{M}(0),\mathcal{M}(T_2)\}$, then there exists $\tilde{\lambda}\in[0,1]$ such that $\widetilde{M}=\tilde{\lambda}\mathcal{M}(0)+(1-\tilde{\lambda})\mathcal{M}(T_2)$. Pick
$$
\tilde{\tau}=\frac{\ln(\tilde{\lambda}(1-e^{\delta T_2})+e^{\delta T_2}))}{\delta}\in[0,T_2]
$$
and observe that from \eqref{eq:expDelta} one has $\lambda_1(\tilde{\tau})=\tilde{\lambda}$. Therefore, thanks to \eqref{eq:M1Convex}, one gets
$\mathcal{M}(\tilde{\tau})=\widetilde{M}$ and this concludes the proof.
\end{proof}
\balance
\bibliographystyle{plain}
\bibliography{biblio}
\begin{IEEEbiography}[{\includegraphics[width=1in,keepaspectratio, clip]{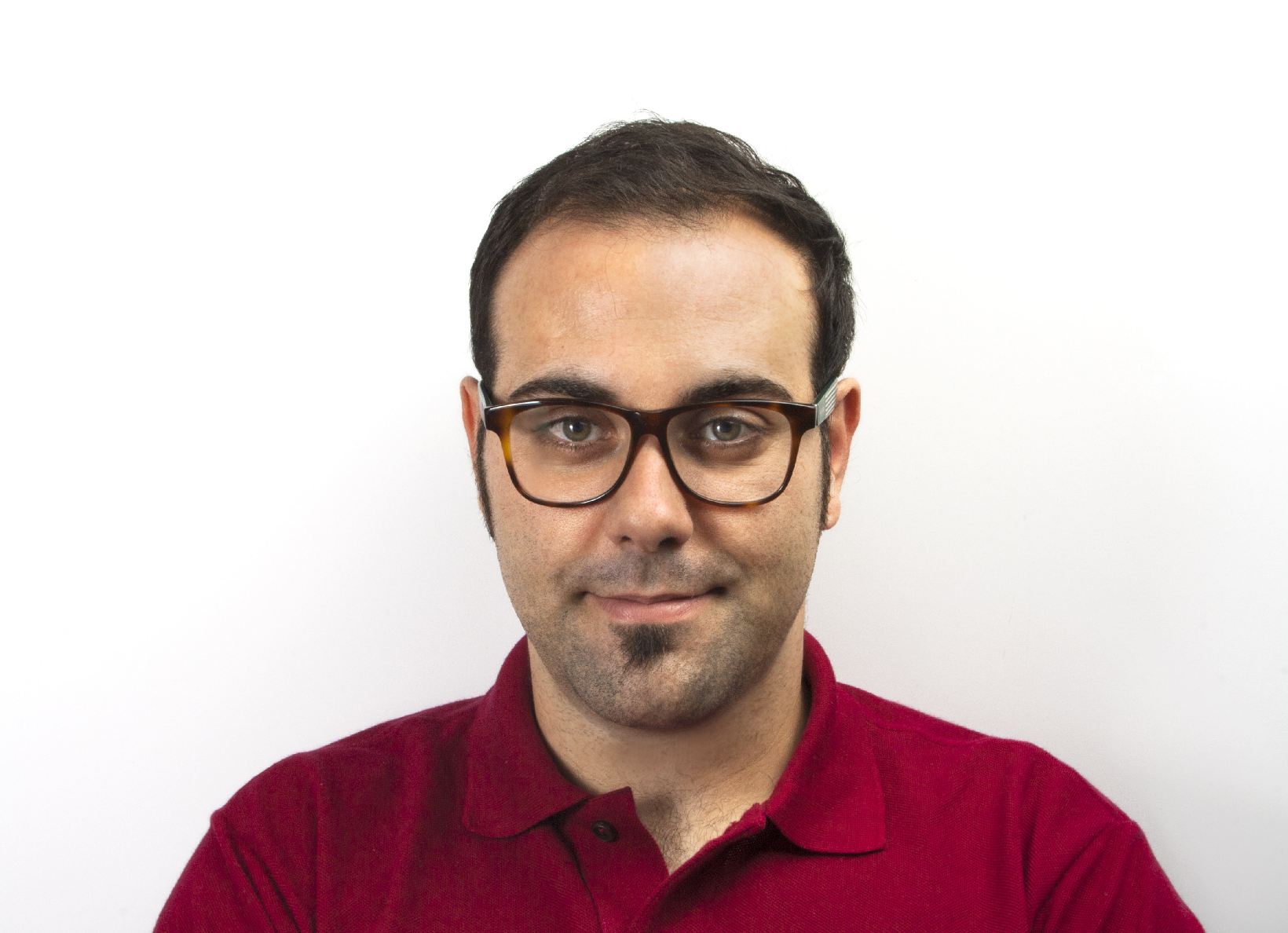}}]{Francesco Ferrante} received in 2010 a ``Laurea degre'' (BS) in Control Engineering from Sapienza University, Rome, Italy and in 2012 a ``Laurea Magistrale'' degree cum laude (MS) in Control Engineering from University Tor Vergata, Rome, Italy. 

During 2014, he held a visiting scholar position at the Department of Computer Engineering, University of California at Santa Cruz. In 2015, he received a PhD degree in Control Theory from Institut Sup\'erieur de l'A\'eronautique et de l'Espace (SUPAERO) Toulouse, France. From 2015 to 2017, he held postdoctoral positions at the Department of Electrical and Computer Engineering at Clemson University and at the Department of Computer Engineering at University of California Santa Cruz. In September 2017, he joined the University of Grenoble Alpes and the Grenoble Image Parole Signal Automatique Laboratory, where he is currently an assistant professor of Control Engineering. He is the recipient of the ``Best Ph.D. Thesis Award 2017'' from the Research Cluster on Modeling, Analysis and Management of Dynamical Systems (GDR-MACS) of the French National Council of Scientific Research (CNRS).
His research interests are in the general area of systems and control with a special focus on hybrid systems, observer design, and application of convex optimization in systems and control.\end{IEEEbiography}
\vspace{-1.5cm}

\begin{IEEEbiography}[{\includegraphics[width=1in,height=1.25in,clip,keepaspectratio]{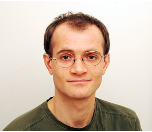}}]{Fr\'ed\'eric Gouaisbaut} was born in Rennes (France) in April 26, 1973.
He received the Dipl{\^o}me d'Ing\'enieur (Engineers' degree) from the \'Ecole Centrale de Lille, France, in September 1997 and the Dipl{\^o}me d'\'etudes Approfondies (Masters' Degree) from the University of Science and Technology of Lille, France, in September 1997. 
From October 1998 to October 2001 he was a Ph.D. student at the Laboratoire d'Automatique, G\'enie Informatique et Signal (LAGIS) in Lille, France. He received the Dipl{\^o}me de Doctorat (Ph.D. degree) from the \'Ecole Centrale de Lille and University of Science and Technology of Lille, France, in October 2001. Since October 2003, he is an associate professor at the Paul Sabatier University (Toulouse). His research interests include time delay systems, quantized systems, robust control and control with limited information.
\end{IEEEbiography}
\vspace{-1.2cm}

\begin{IEEEbiography}[{\includegraphics[width=1in,height=1.25in,clip,keepaspectratio]{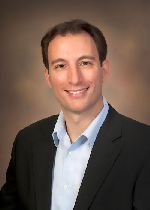}}]{Ricardo G. Sanfelice} is Professor at the Department of Electrical and Computer Engineering, University of California at Santa Cruz. He received the B.S. degree in Electronics Engineering from the Universidad Nacional de Mar del Plata, Buenos Aires, Argentina, in 2001. He joined the Center for Control, Dynamical Systems, and Computation at the University of California, Santa Barbara in 2002, where he received his M.S. and Ph.D. degrees in 2004 and 2007, respectively. During 2007 and 2008, he was a Postdoctoral Associate at the Laboratory for Information and Decision Systems at the Massachusetts Institute of Technology. He visited the Centre Automatique et Systemes at the Ecole de Mines de Paris for four months. From 2009 to 2014, he was Assistant Professor in the Aerospace and Mechanical Engineering at the University of Arizona, where he was also affiliated with the Department of Electrical and Computer Engineering and the Program in Applied Mathematics.	
\end{IEEEbiography}
\vspace{-1.5cm}

\begin{IEEEbiography}[{\includegraphics[width=1in,height=1.25in,clip,keepaspectratio]{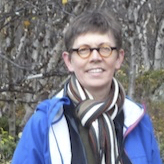}}]{Sophie Tarbouriech} received the PhD degree in Control Theory in 1991 and the HDR degree (Habilitation \`a Diriger des Recherches) in 1998 from University Paul Sabatier, Toulouse, France. 
Currently, she is full-time researcher (Directeur de Recherche) in LAAS-CNRS, Toulouse. Her main research interests include analysis and control of linear and nonlinear systems with constraints (limited information),
hybrid dynamical systems. She is currently Associate Editor for IEEE Transactions on Automatic Control, IEEE Transactions on Control Systems Technology, Automatica and European Journal of Control. 
She is also in the  Editorial Board of International Journal of Robust and Nonlinear Control. She is also co-Editor-in-Chief of the French journal JESA (Journal Europe\'en des Syst\`emes Automatis\'es). 
Since 1999, she is Associate Editor at the Conference Editorial Board  of the IEEE Control Systems Society. She is also a member of the IFAC technical committee on Robust Control and Nonlinear Systems. 	
\end{IEEEbiography}
\end{document}